\newcommand{\tr}{\operatorname{Tr}}
\newcommand{\ST}{S_{\mathrm{T}}}
\newcommand{\SC}{S_{\mathrm{CNOT}}}
\newcommand{\PT}{P_{\mathrm{T}}}
\newcommand{\PC}{P_{\mathrm{CNOT}}}
\newcommand{\selectH}{\operatorname{select}(H)}
\newcommand{\prep}{\textsc{prepare}}
\newcommand{\CNOT}{\operatorname{CNOT}}
\newcommand{\Rz}{{R_z}}
\newcommand{\wt}{\operatorname{wt}}
\newcommand{\wtm}{\operatorname{wt_m}}
\newcommand{\SWAP}{\textsc{swap}}
\newcommand{\RE}{\operatorname{Re}}
\newcommand{\IM}{\operatorname{Im}}
\newcommand{\poly}{\operatorname{poly}}
\definecolor{header}{RGB}{200,220,255}
\definecolor{row1}{RGB}{230,220,240}
\definecolor{row2}{RGB}{240,230,245}
\newcommand{\GateC}{\operatorname{C}}
 \DeclareMathOperator*{\argmax}{arg\,max}
\newcommand{\mc}[1]{\mathcal{#1}}
 \DeclareMathOperator*{\E}{\mathbb{E}}
\newtheorem{theorem}{Theorem}
\newtheorem{observation}{Observation}
\newtheorem{proposition}{Proposition}
\newtheorem{lemma}{Lemma}
\newtheorem{corollary}{Corollary}
\newtheorem{problem}{Problem}
 \newcommand{\sun}[1]{\textcolor{black}{#1}}
 \newcommand{\red}[1]{\textcolor{black}{#1}}
\newcommand{\revise}{\textcolor{black}} 
\begin{document}





\title{
High-precision and low-depth quantum algorithm design for eigenstate problems
}



\date{\today}
\author{Jinzhao Sun}
\email{jinzhao.sun.phys@gmail.com}
\affiliation{Blackett Laboratory, Imperial College London, London SW7 2AZ, UK}
\affiliation{Computer Laboratory, University of Cambridge, Cambridge CB3 0FD, UK}
\affiliation{School of Physical and Chemical Sciences, Queen Mary University of London, London E1 4NS, UK }

\author{Pei Zeng}
\email{qubitpei@gmail.com}
\affiliation{ Pritzker School of Molecular Engineering, The University of Chicago, Illinois 60637, USA }

\author{Tom Gur}
\email{tom.gur@cl.cam.ac.uk}
\affiliation{Computer Laboratory, University of Cambridge, Cambridge CB3 0FD, UK}

\author{M. S. Kim}
\email{m.kim@imperial.ac.uk}
\affiliation{Blackett Laboratory, Imperial College London, London SW7 2AZ, UK}

\begin{abstract}

Estimating the eigenstate properties of quantum systems is a long-standing, challenging problem for both classical and quantum computing. Existing universal quantum algorithms typically rely on ideal and efficient query models (e.g. time evolution operator or block encoding of the Hamiltonian), which, however, become suboptimal for actual implementation at the quantum circuit level. Here, we present a full-stack design of quantum algorithms for estimating the eigenenergy and eigenstate properties, which can achieve high precision and good scaling with system size. \revise{The gate complexity per circuit} for estimating generic Hamiltonians' eigenstate properties is $ \tilde{\mathcal{O}}(\log \varepsilon^{-1})$, which has a logarithmic dependence on the inverse precision $\varepsilon$. For lattice Hamiltonians, the circuit depth of our design achieves near-optimal system-size scaling, even with local qubit connectivity. Our full-stack algorithm has low overhead in circuit compilation, which thus results in a small actual gate count  (\textsc{cnot} and non-Clifford gates) for lattice and molecular problems compared to advanced eigenstate algorithms. The algorithm is implemented on IBM quantum devices using up to 2,000 two-qubit gates and 20,000 single-qubit gates, and achieves high-precision eigenenergy estimation for Heisenberg-type Hamiltonians, demonstrating its noise robustness.
\end{abstract}

\maketitle

\let\oldaddcontentsline\addcontentsline
\renewcommand{\addcontentsline}[3]{}

\section{Introduction}

Estimating the properties of the ground and excited states of quantum many-body systems is a long-standing problem of fundamental interest, which has applications in condensed matter physics, quantum chemistry and material science~\cite{bharti2021noisy,bauer2020quantum,dalzell2023quantum}. 
Despite its quantum hardness both in  theoretical complexity~\cite{kempe2006complexity} and empirical numerical evidence~\cite{lee2023evaluating}, finding eigenstates of many-body Hamiltonians remains a central goal, driving ongoing exploration of quantum algorithms, from quantum phase estimation (QPE)~\cite{kitaev1995quantum,higgins2007entanglement,knill2007optimal,rall2021faster,meister2022resource,martyn2021grand,ding2023even,hejazi2024better} to spectral filter algorithms~\cite{keen2021quantum,chakraborty2024implementing,yang2021accelerated,wan2021randomized,keen2021quantum,wang2023faster,ding2024quantum,wang2023quantum,lin2021heisenberg,zeng2021universal,lu2021algorithms,zhang2022computing,huo2021shallow,wang2023quantum,ding2024quantum,he2022quantum,wang2023faster,wang2024qubit,lin2020near,lin2020optimal,an2023linear,kiss2025early}, dissipation-based algorithms~\cite{cubitt2023dissipative,ding2023single,chen2023local} and others~\cite{huggins2022unbiasing,feng2024escaping,kovalsky2023self,motta2020determining,hejazi2023adiabatic}.
In particular, with rapid development of quantum hardware and error correction, there is increasing interest in designing quantum algorithms considering the feature of early fault-tolerant quantum computing (FTQC) devices~\cite{lin2021heisenberg,katabarwa2023early}, where minimising controlled operations and circuit depth is essential. This constraint also applies to noisy intermediate-scale quantum (NISQ) devices.
In this context, it is desirable to design quantum algorithms that satisfy the constraints of fewer qubits, low circuit depth, and restricted qubit connectivity, which are often interrelated when compiling nonlocal controlled gates into local ones~\cite{devulapalli2022quantum}.



Considering the above hardware constraints~\cite{lin2021heisenberg}, spectral filter based methods are good candidates for effectively finding the ground state~\cite{wang2023faster,lin2021heisenberg,zeng2021universal,lu2021algorithms,zhang2022computing,huo2021shallow,he2022quantum,wang2023quantum,ding2024quantum,ding2023even,wang2023faster}, which can achieve good asymptotic query complexity under the assumptions of nonvanishing initial overlap and energy gap.
These include algorithms for ground-state energy estimation~\cite {lin2021heisenberg} and property estimation~\cite{zeng2021universal,zhang2022computing}, as well as subsequent developments~\cite{wang2023faster,ding2023even,ding2024quantum,wang2023quantum,chakraborty2024implementing}.
Nonetheless, these algorithms are typically based on perfect and efficient queries to either the real-time evolution operator $e^{-iHt}$ which is the case for most early FTQC algorithms~\cite{hejazi2024better,wang2023faster,zeng2021universal,zhang2022computing,lin2021heisenberg,zeng2021universal,lu2021algorithms,zhang2022computing,huo2021shallow,he2022quantum,wang2023quantum,ding2024quantum,ding2023even,wang2023faster}, or the block encoding of the Hamiltonian $H$~\cite{gilyen2019quantum,lin2020near}. Their favourable scaling properties no longer exist when they are compiled into elementary gates in the quantum circuit. 
For example, implementing real-time evolution via Trotterisation as a subroutine will eliminate the advantage of logarithmic precision scaling in eigenstate property estimation. On the other hand, realisation via block-encoding will lose the good system-size scaling; moreover, it requires many ancillary qubits and nonlocal controlled gates, which violates the spirit of early FTQC.
Moreover, the system-size dependence of the algorithms is rarely discussed in existing works, as it highly depends on the circuit-level implementation as well as the qubit connectivity of quantum devices. These algorithms thus become suboptimal at the quantum circuit level.
An important question is, when considering the gate complexity and qubit connectivity in NISQ and early FTQC applications, how to design high-precision and low-depth algorithms.


In this work, we present a full-stack quantum algorithm for eigenstate property and eigenenergy estimation, based on randomised composite linear-combination-of-unitaries (LCU) formulae. 
The maximum gate complexity for each circuit at a single run is shown to be $\mc O ( \poly\log (\varepsilon^{-1}) )$, outperforming the QPE-based method and matching the result by quantum signal processing (QSP)~\cite{lin2020near,gilyen2019quantum}. This precision scaling may not be achieved by coherent methods relying on a coherent implementation of real-time evolution, such as a variant of QSP~\cite{lin2020near} by quantum eigenvalue transformation of unitaries (QETU)~\cite{dong2022ground}.
The second advantage of our method is its low circuit depth for various physical problems with conserved symmetries. For lattice Hamiltonians, our method achieves near-optimal system-size \revise{gate complexity per circuit} even with nearest-neighbour qubit connectivity. Specifically, the depth complexity $d = \mathcal{O} (n^{o(1)})$ is nearly independent of $n$, besides the implicit dependence from the energy gap, which outperforms the other strategies. Here the small-$o$ notation $ o(1)$ indicates approaching to zero asymptotically. To accomplish this, we design new composite LCU circuits that maintain the system's symmetries.  Thanks to the integrated feature of the full-stack design, we are able to achieve near-optimal precision and system-size scaling, even when restricted to nearest-neighbour architecture. 
A comparison with representative advanced eigenstate algorithms is displayed in~\autoref{table:ObservableComp}.

An important feature of our full-stack algorithm is the low overhead in circuit compilation, resulting in a small actual gate count in each coherent circuit run.  We present the resource requirements, including the \textsc{cnot} and non-Clifford gates, for representative physical models in condensed matter and chemistry. Prior resource estimates~\cite{kim2022fault,reiher2017elucidating,goings2022reliably,wecker2014gate,campbell2019random,kim2022fault,von2021quantum} are mostly based on QPE, which is \red{further concatenated with either} Trotterisation~\cite{kivlichan2020improved,campbell2021early,hejazi2024better} or qubitized quantum walks~(QW)~\cite{babbush2018encoding,lee2021even,berry2025rapid}. For lattice models and quantum chemistry problems,  our method requires fewer gates compared to other advanced methods. In particular, the \textsc{cnot} gate cost for a 20-site Heisenberg model is on the order of $ 10^4$ while the T gate cost is about $ 10^6$. This makes our approach particularly suitable for NISQ and early FTQC applications.
Our work also provides a useful toolbox for compiling state-of-the-art quantum algorithms into elementary gates. 
The algorithmic efficiency and inherent noise resilience of our approach enabled a high-precision implementation on IBM quantum devices with up to 2,000 two-qubit gates and 20,000 single-qubit gates. We have tested several Heisenberg-type Hamiltonians with coupling strength $J$ and achieved ground-state energy estimation errors of about $0.01J$.

\begin{table}[t!]
\centering
\setlength{\tabcolsep}{0.15pt} 
\renewcommand{\arraystretch}{1.2} 
\begin{tabular}{c c c c } 
\hline
\toprule
\rowcolor{header}
Methods              & Gate (Generic model)   & Depth (Lattice)    \\
\rowcolor{row2}
\hline
\hline
This work            &  $ {\mathcal{O}}(L    
\boldsymbol{\mathrm{log} \varepsilon^{-1} }  )$ & $\tilde {\mathcal{O}}( \boldsymbol{ n^{o(1)} }     \varepsilon^{-o(1)})$      \\
\rowcolor{row1}
    (zeroth-order)          &  $\mathcal{O}(  \lambda^2 \log^2 \varepsilon^{-1})$  & $\mathcal{O}(n^2  \log^2 \varepsilon^{-1})$      \\
    \hline
\rowcolor{row2}
QPE + Trotter    & $ {\mathcal{O}}( L    \varepsilon^{-(1 + o(1))} )$  & 
 $ {\mathcal{O}}( n^{1+o(1)}   \varepsilon^{-(1 + o(1))} )$  \\
 \rowcolor{row1}
QPE + QW~\cite{babbush2018encoding,lee2021even}   & $ \tilde{\mathcal{O}}( L   \varepsilon^{-1})$ & $ {\mathcal{O}}( n^2  \varepsilon^{-1})$      \\
\rowcolor{row2}
QSP~\cite{lin2020near} & $ {\mathcal{O}}( L  \log\varepsilon^{-1})$ & $ {\mathcal{O}}( n^2   \log\varepsilon^{-1})$  \\
\rowcolor{row1}
QETU~\cite{dong2022ground} & $\tilde {\mathcal{O}}( L   \varepsilon^{-o(1) })$ & $ {\mathcal{O}}( n \varepsilon^{-o(1)} )$   \\
\hline
\end{tabular}
\caption{\textbf{Comparison of advanced methods for eigenstate property estimation (\autoref{problem:observ}) on parameters of the systems ($L$ and $n$) and target precision $\varepsilon$}. 
The second column displays the \revise{maximum gate complexity in a single circuit instance} for generic Hamiltonians with $L$ terms. The third column displays the depth complexity for 1D Heisenberg Hamiltonians with qubit connectivity restricted to a nearest-neighbour architecture. The big-$O$ and small-$o$ notations are used. For example,  $\tilde{O}$ denotes the complexity up to polylogarithmic factors as used in \cite{lin2020near}.  
Different methods have similar dependence on  $\Delta$, which is $\Delta^{-(1+o(1))}$, and hence not shown in this table. 
Table III and Table IV in Supplementary Sec. I present more detailed asymptotic scaling analyses for the eigenstate properties and energy estimation, respectively, which also include the dependence on other parameters $\lambda$ and $\Delta$.
Note that our zeroth-order design with $k=0$ is similar to \cite{wan2021randomized} though our ancilla-free scheme preserves the advantage in circuit depth and the sampling procedure is simpler. When $\lambda$ scales smaller than $O(\sqrt{L})$, the zeroth-order case may be advantageous in Hamiltonian-parameter scaling.  
}
\label{table:ObservableComp}
\end{table}

\begin{figure*}[t!]
\centering
\includegraphics[width=1\textwidth]{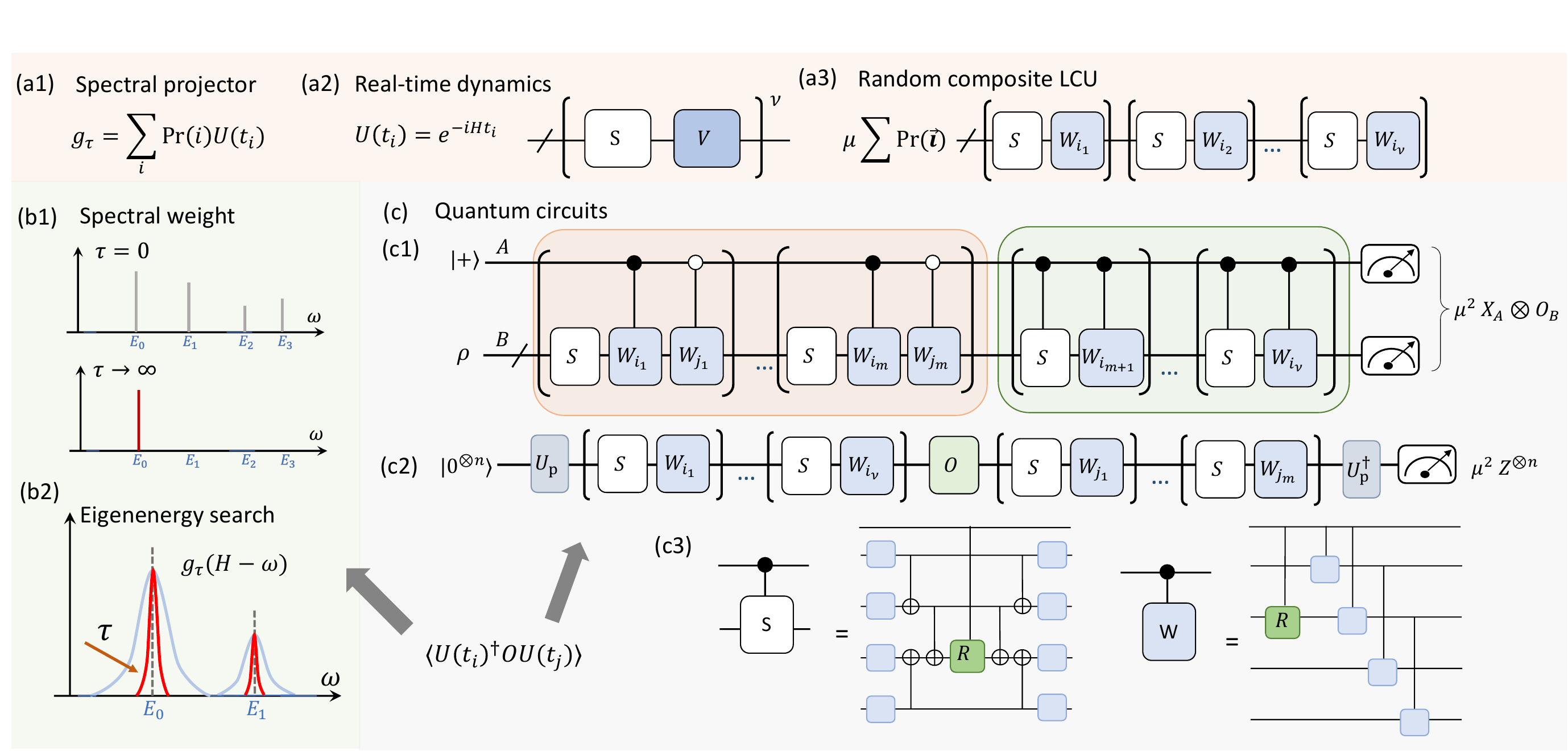}
\caption{ \textbf{Workflow of the eigenstate property and eigenenergy estimation algorithm, from formula construction to circuit implementation.} (\textbf{A}) The hierarchical structure and components of randomised composite LCU formulae. (a1) The spectral filter is decomposed into a combination of unitaries $U(t_i) = e^{-iHt_i}$, as described by \autoref{eq:randomLCU_main}. (a2) The real-time evolution is divided into $\nu$ segments. Each segment with time duration $t_i/\nu$ is realised by the Trotter formula $S$ and the Trotter remainder $V$ which is further decomposed into an LCU form in \autoref{eq:TrotterError_LCU_V}, resulting in an overall composite LCU form. 
(a3) The overall explicit composite LCU form of $g_{\tau}$. The summands of the circuit are shown here; each circuit instance is sampled and described by \autoref{eq:U_ti_LCU} with sampled $\vec{{i}}$. Generally, $W_{i_q} $  is composed of a Pauli rotation  $R = \exp(-i \theta_{i_q} P_{i_q})$ and a tensor product of single-qubit Pauli operators, where $P_{i_q}$ is drawn from the specified probability distribution. The realisation of controlled-$W$ operation is shown in (c3). In the ancilla-free measurement scheme illustrated in (c2), $W_{i_q}$ is chosen to be a symmetry-conserved operator.
(\textbf{B})
Illustration of spectral filtering for effective eigenstate preparation and eigenenergy search.
(b1) The spectral weight of the effective state after applying the spectral filter $g_{\tau}(H-E_j)$. The $y$ axis represents the spectral weight $\braket{u_i| g_{\tau}(H-E_0)|\psi_0}$, where $|u_i\rangle$ is the eigenstate with energy $E_i$. By increasing the imaginary time $\tau$, only the spectral weight on the ground state will be preserved while the contributions from the other eigenstates are suppressed exponentially with $\tau$.
(b2) The eigenenergy search using $D_{\tau}(\omega) = \braket{\psi_0 | g_{\tau}^2 (H-\omega)|\psi_0}$ where $D_{\tau}(\omega) $ is defined in \autoref{eq:OND_main}. By increasing the imaginary time $\tau$, the peaks become sharper (from the blue line to the red line) and the true eigenenergies can be resolved by finding the peaks.
The key quantity in (b1) and (b2) is $\braket{U^{\dagger} (t_j) O U(t_i)}$ in \autoref{eq:observable_dynamics_main}. The sampled instance in $\braket{U^{\dagger} (t_j) O U(t_i)}$ can be measured using the circuit in (\textbf{C}).
(\textbf{C}) Quantum circuit implementation.
(c1) Circuit for general Hamiltonians. 
Two extreme cases: Case I ($t_j = 0$) reduces to the one in the green box; Case II ($t_i = t_j$)  reduces to the one in the orange box.
Cases with $t_j < t_i$ lie between and can be implemented with (c1).
(c2) Ancilla-free measurement scheme for symmetry-conserved cases. $U_p$ is the unitary operator for state preparation $\ket{\psi_0} =  U_p \ket{0^{\otimes n}} $ or $\frac{1}{\sqrt{2}}(\ket{\psi_{\textrm{Ref}}} + \ket{\psi_0}) =  U_p \ket{0^{\otimes n}} $  as discussed in Methods.
(c3) Circuit compilation for controlled $S $ and $W$, which can both be efficiently compiled into single-qubit Pauli rotation gates (green box), \textsc{cnot} gates and Pauli gates (blue box).  
}
\label{fig:cartoon}
\end{figure*}

\section{Results}



\noindent
\textbf{Randomised composite linear-combination-of-unitaries formulae}

\noindent
Here, we introduce the framework of eigenenergy and eigenstate property estimation with the randomised composite linear-combination-of-unitaries (LCU) formulae. 
Let us start by formulating the problems and introducing the notations and assumptions used throughout this work.
Consider an $n$-qubit gapped quantum system whose Hamiltonian~$H$ has a Pauli decomposition $H = \sum_{l=1}^L \alpha_l P_l := \lambda \sum_{l=1}^L \tilde{\alpha}_l P_l$. Here $P_l$ is a Pauli operator, $\lambda := \sum_l |\alpha_l| $, and $\tilde{\alpha} := \alpha_l/\lambda$. The eigenstate $\ket{u_j}$ and the corresponding eigenenergy $E_j$ of the Hamiltonian satisfy,
$
    H \ket{u_j} = E_j \ket{u_j}.
$
The tasks concerned in this work are (1) to estimate the eigenenergy $E_j$, and (2) to estimate an eigenstate property, characterised by an observable expectation on the target eigenstate  $\braket{u_j | O | u_j}$. 
The assumptions as commonly used in QSP and other spectral filter methods are the following: (1) a good initial state $\ket{\psi_0}$ which has a nonvanishing overlap with the target eigenstate, $\eta:= |\braket{\psi_0|u_j}|^2 = \Omega(1/\poly(n))$; (2) a nonvanishing energy gap $\Delta_j:= \min (E_{j+1}-E_j, E_j - E_{j-1} )$. 
The two eigenstate problems considered in this work are stated below.

\begin{problem}[Eigenstate property estimation]
\label{problem:observ}

Suppose the observable has a Pauli decomposition as $O = \sum_{l=1}^{L_o} o_l P_l$ with Pauli operators $P_l$ and positive coefficients $o_l$, and $\|O\|_1 := \sum_{l=1}^{L_o} |o_l|$. 
Given an initial state $\ket{\psi_0}$, the aim is to find an estimator $\hat{v}$, such that it is close to  $\braket{u_j | O | u_j}$ with  probability  at least $1-\vartheta$, i.e., $\Pr( |\hat{v} - \braket{u_j | O | u_j}| \leq \varepsilon) \geq 1-\vartheta$. 

\end{problem}
 
\begin{problem}[Eigenenergy estimation]
\label{problem:eigenenergy}
The aim is to find an eigenenergy estimator $\hat{E_j}$, such that it is $\kappa$-close to  $E_j$ with probability $1-\vartheta$, i.e., $\Pr( |\hat{E_j} - E_j| \leq \kappa) > 1-\vartheta$. 

\end{problem}

The design of the eigenstate property and eigenenergy estimation algorithm is summarised in \autoref{alg1_main} and illustrated in \autoref{fig:cartoon}.
To access the physical properties of eigenstates, a natural idea is to apply a spectral filter to the initial state that projects out the contributions from the other unwanted eigenstates, as illustrated in \autoref{fig:cartoon}(b1). 
While the spectral filter $g$ is non-unitary by construction, we can effectively realise it using LCU techniques, either by coherent \cite{Ge19,keen2021quantum} or random-sampling approaches~\cite{zeng2021universal,lu2021algorithms,zhang2022computing,huo2021shallow,he2022quantum,wang2023quantum,yang2021accelerated}. The overall idea is that at a higher level, the spectral filter is decomposed into a linear combination of unitaries $U(t_i) = e^{-i H t_i}$, as shown in \autoref{fig:cartoon}(a1).
We further realise each $U(t_i)$ by another random LCU formula. Specifically, the evolution is divided into $\nu$ segments, with each segment comprising both a Trotter formula term $S$ and a Trotter remainder term $V$.
Overall, it forms a composite LCU formula as illustrated in \autoref{fig:cartoon}(a3), which involves the summation and product of individual LCU components. 
The hierarchy of different LCU components is illustrated in \autoref{fig:section_connection} in Methods, where the error propagation will be analysed in order to prove the main theorems.
The quantum circuit realisation for randomised composite LCU is shown in \autoref{fig:cartoon}(c), including both the one-ancilla and ancilla-free schemes. The Hamiltonian-specific circuit compilation will be discussed in Methods \autoref{sec:gate_complexity_analysis}. 

The spectral filter $g$ is a non-unitary operator defined on the $n$-qubit system, which is usually a function of the target Hamiltonian $H$.
Choices for spectral filters include the imaginary-time evolution operator $g_{\tau}(H) = e^{-\tau H}$ or the Gaussian operator $g_{\tau}(H) = e^{-\tau^2 H^2}$. 
For example, by applying $g_{\tau}(H) = e^{-\tau |H-E_0|}$ to an initial state $\ket{\psi_0}$, the unormalised state becomes $\ket{\psi(\tau)} = c(\tau) \sum_i c_i e^{-|E_i - E_0| \tau} \ket{u_i}$ where $c(\tau)$ is the normalisation factor. Provided the assumptions in \autoref{problem:observ}, i.e., a nonvanishing energy gap and a nonvanishing $c_0 = \eta^{-1/2}$, the spectral weight of the unwanted excited eigenstates is exponentially suppressed with increasing $\tau$. In the large $\tau$ limit, the state becomes the ground state $\lim_{\tau  \rightarrow \infty} |\psi(\tau) \rangle = c_0 \ket{u_0}$. The procedure for obtaining the excited states is similar,  as illustrated in \autoref{fig:cartoon}(b1).
Note that the spectral filter method has been well-established in the existing literature, in particular \cite{lin2020near,zeng2021universal,zhang2022computing,wang2023quantum}. In this work, we introduce a full-stack approach to eigenstate property and energy estimation by randomised composite LCU decomposition of a Gaussian operator into elementary gates. For simplicity, we refer to this full-stack randomised LCU approach as our method, abbreviated as RLCU.

Now, let us discuss the construction of RLCU. A $(\mu,\varepsilon)$-randomised LCU formula, following the convention in the LCU formula~\cite{childs2012Hamiltonian,zeng2022simple}, of a general operator $g$ is defined to be
\begin{equation} 
\label{eq:randomLCU_main}
\tilde g = \mu_g \sum_{i} \Pr(i) U_i,
\end{equation}
such that the spectral norm distance $\|g-\tilde{g}\|\leq \varepsilon$, \sun{as illustrated in \autoref{fig:cartoon}(a1)}. Here, $\mu>0$ is the normalisation factor, $\Pr(i)$ is a probability distribution associated with an instance specified by $i$, and $\{U_i\}_i$ is a group of unitaries.
\autoref{eq:randomLCU_main} can be extended to a continuous form
\begin{equation}
    \tilde g = \mu_g \int_{-x_c}^{+x_c} dx p(x) U(x)
    \label{eq:randomLCU_continous_main}
\end{equation}
where we require that $p(x)~ (x \in [-x_c, x_c])$ is a well-defined probability distribution. 
A natural choice for $U_i$ is the real-time evolution $U_{i} = U(t_i) := e^{-iHt_i} $ with time length $t_i$ because quantum systems in nature evolve governed by the Hamiltonian. 
Then,  one can decompose a non-unitary, Gaussian spectral filter on the basis of real-time evolution by setting the distribution $p(t) = \frac{1}{\sqrt{2 \pi}} e^{-t^2/4}$. Although the Gaussian filter is represented in an integral form, it has a well-defined probability distribution and can be well-characterised by \autoref{eq:randomLCU_main}, which will be discussed in Supplementary~Sec. II. 


While most early FTQC algorithms assume perfect implementation of real-time evolution,  $U(t_i)$ in general is not directly implementable at the quantum circuit level.
An established way to implement $U(t_i)$ without involving other oracles is through Trotterisation~\cite{childs2018toward}: when the Hamiltonian can be decomposed into Pauli operators, each $U(t_i)$ can be implemented using elementary quantum gates without additional qubit overhead.
However, the issue with the Trotter method is that the remainder of a $k$th Trotter formula (i.e., Trotter error) is non-negligible, which is polynomial in the order of $k$~\cite{childs2021theory}. 
As a result, any ground-state property estimation protocol based on Trotterisation inevitably scales polynomially with the inverse of the target accuracy and  loses the advantage of achieving high precision. 
Below we show how to preserve the logarithmic dependence on the precision.


We consider dividing the time evolution operator $U(t_i)$ into $\nu$ segments, which can be written as $U( t_i ) = \left( S(\delta t_i) V(\delta t_i) \right)^{\nu}$ with $\delta t_i = t_i / \nu$. Here $S(\delta t_i)$ is a $2k$th-order Trotter formula and $V(\delta t_i)$ is the corresponding Trotter error within duration $\delta t_i$. Hereafter, the order $2k$ is omitted when there is no ambiguity. To preserve the high precision property,  we choose to implement the Trotter remainder $V$ as well, as opposed to implementing $S$ only in conventional Trotter methods.
By doing so, we can implement the spectral filter with high precision.
\revise{
Note that our algorithmic design supports the integration of advanced Hamiltonian simulation algorithms (e.g.~\cite{rendon2024improved,watson2025exponentially}).  
The benefits of using Trotter error compensation and comparison with other methods are discussed in Supplementary Sec. I.}

The spectral filter can now be formally rewritten as
\begin{equation}
    \tilde g = \mu_g \sum_i \Pr(i) \left(V(\delta t_i) S(\delta t_i) \right) ^\nu
    \label{eq:spectral_projector_decomp_VS_main}
\end{equation}
which consists of other LCU formulae in it.
Here, $\nu$ is chosen such that the approximation error is sufficiently small.  
We could see that \autoref{eq:spectral_projector_decomp_VS_main} is a modified version of the original LCU formula given by \autoref{eq:randomLCU_main}, which  involves hierarchical formulations, as illustrated in \autoref{fig:cartoon}(a). Specifically, it adopts a composite structure that integrates both the products of individual LCU components and the summations of LCU components. 
More concretely, suppose $ g_1$ takes a  $(\mu_1,\varepsilon_1)$-LCU formula of $g$, $g_1 = \mu_1 \sum_i \Pr(i) U(t_i)$ in \autoref{eq:randomLCU_main}. Each summand $U(t_i)$ is divided into $\nu$ segments, with each segment $U(\delta t_i)$ taking a   $(\mu_2,\varepsilon_2)$-LCU formula, specifically, $
\tilde U(\delta t_i) = \mu_2 \sum_{i_{\ell}} \Pr(i_{\ell}) P_{i_{\ell}}
$.	 
To analyse the properties of the composite form of LCU formulae, we introduce the following proposition.

\begin{proposition}[Composite LCU]
\label{prop:composite_LCU}
The formula 
$ 
g_2 = \mu_1\sum_i \Pr(i) (\tilde{U}(  t_i/\nu))^{\nu}
$
is a $(\mu,\varepsilon)$-LCU formula of $g$, with $\mu  := \mu_1  \mu_2^{\nu}$, and $\varepsilon \leq \varepsilon_1 +   \mu_1 
 \mu  \varepsilon_2$.
\end{proposition}


Leveraging \autoref{prop:composite_LCU}, we could calculate the error in the composite LCU form provided the error of each individual LCU approximation. 
This work will analyse the propagation of errors and the change of normalisation factors within this composite LCU framework, outlined in \autoref{fig:section_connection} in Methods. The proof is shown in Supplementary~Sec. II.

We shall briefly outline the implementation procedure as illustrated in \autoref{fig:cartoon}(a2,a3). First, the Trotter remainder $V(\delta t_i)$ is decomposed into easy-to-implement unitary operators \begin{equation}
\label{eq:TrotterError_LCU_V}
	V = \mu \sum_j \Pr(j) W_j
\end{equation}
where $\Pr(j) $ is the probability associated with the unitary operator $W_j$,   $\mu$ is the associated normalisation factor and $\delta t_i$ is omitted as this equation holds in general. Importantly, the decomposition in \autoref{eq:TrotterError_LCU_V} can be explicitly derived by $V = U S^{\dagger}$ using the Taylor expansion.
By doing so, we can compensate for the simulation error by sampling $W_j$ according to $\Pr(j)$ in \autoref{eq:TrotterError_LCU_V}.
This effectively realises a high-precision spectral filter, with the individual terms illustrated in \autoref{fig:cartoon}(a3).
In general cases, the elementary operator $W_j$ in \autoref{eq:TrotterError_LCU_V} can be chosen as Pauli operators or the exponentiation of Pauli operators, as commonly used in Trotter and LCU methods~\cite{childs2018toward,zeng2022simple}. The maximum gate count in the whole block of $W_{i_q}$ is saturated as $\wtm(H) + n$, where $\wtm{H}$ is the largest weight of the Hamiltonian terms (see methods). 

For systems with certain symmetry, the ancilla may not be needed. To enable ancilla-free measurement, we choose $W_j$ to be a symmetry-conserved operator, because choosing it as a Pauli operator will break the symmetry in the composite implementation. Specifically, to conserve the symmetry of either particle number or total spins, $W_j$ is chosen to be either the \textsc{swap} operator, Pauli-$Z$ operator, or their exponentiation,  detailed shortly after.

 
\vspace{8pt}
\noindent
\textbf{Quantum circuit realisation and depth analysis}


\noindent
The key quantity involved in the algorithm outlined in \autoref{fig:cartoon}   is $ 
	\braket{U^{\dagger} (t_j) O U(t_i)},
$ where $U({t_i}) = e^{-iHt_i}$, and $O = I$ for eigenenergy estimation and $O $ being an Pauli operators for observable estimation.
As   introduced above, the unitary $U(t_i)$ is implemented by a composite LCU formula with an explicit form as
\begin{equation}
    U(t_i) = \mu \sum_{\vec i} \Pr (\vec i)     \prod_{q =1}^{\nu} W_{i_q} S 
     \label{eq:U_ti_LCU}
\end{equation}
where $\vec i = (i_1, i_2, ..., i_{\nu})$ and $\nu$ being the segment number.
The observable expectation is thus given by      
\begin{equation}
\label{eq:observable_dynamics_main}
 \braket{U^{\dagger} (t_j) O U(t_i)} = \mu^2    \sum_{\vec i, \vec j} \Pr(\vec i)  \Pr(\vec j) \braket{\vec {\mathbf{j}} | O | \vec {\mathbf{i}}}
\end{equation}
where we denote $\ket{\vec{\mathbf{ i}}} :=   \prod_{q =1}^{\nu} W_{i_q} S  \ket{\psi_0}$.

For the general case, we can use the Hadamard test circuit to measure $\braket{\vec {\mathbf{j}} | O | \vec {\mathbf{i}}}$, with the circuit depicted in the green box in \autoref{fig:cartoon}(c). Note that there is no control over the shared Trotter term, which save the quantum resources.
For Hamiltonians with certain symmetries, there is no need to use ancilla and controlled unitaries. 
\autoref{fig:cartoon}(c2) shows the quantum circuit implementation for measuring the sampled instance involved in $\braket{\vec {\mathbf{j}} | O | \vec {\mathbf{i}}}$. 
In the ancilla-free measurement scheme, in \autoref{fig:cartoon}(c2), two types of initial states (prepared by $U_p$) are involved. The unitary operator $U_p$ prepares either $\ket{\psi_0} =  U_p \ket{0^{\otimes n}} $ or $\frac{1}{\sqrt{2}}(\ket{\psi_{\textrm{Ref}}} + \ket{\psi_0}) =  U_p \ket{0^{\otimes n}} $, where $\ket{\psi_{\textrm{Ref}}}$ belongs to a different symmetry sector from $\ket{\psi_0}$  and is orthogonal to the initial state (see Methods for details).

To realise the ancilla-free measurement, it is necessary to pair the terms in $S$ and $V$ such that each resulting term preserves the symmetry and does not break it individually.
The issue arises when the Trotter remainder $V$ is expanded in the Pauli basis: applying each individual Pauli term generally breaks the symmetry.
To address this issue, we decompose the Hamiltonian into the basis of \textsc{swap}  and tensor products of Pauli-$Z$ operators, rather than Pauli bases. 
Let us give an example of the 1D Fermi-Hubbard model, which, after the Jordan-Wigner Transformation, takes the form of
$
    H = J_1 \sum_{i} (X_i X_{i+1} + Y_i Y_{i+1}) + J_2 \sum_{i} Z_i Z_{i+1} + h_z \sum_i Z_i 
 $. This can be reformulated as
$
    H = 4 J_1 \sum_i \SWAP_{i,i+1} + (J_2 -J_1) \sum_{i} Z_i Z_{i+1} + h_z \sum_i Z_i
 $ where the identity term in the Hamiltonian is trivial, and will always be removed.
Each individual term in the Hamiltonian is unitary and commutes with the particle number operator.
The only difference lies in the compensation terms, which involve the $\textsc{swap}_{i,i+1}$ operators, Pauli-$Z$ operators, and their corresponding exponentiated forms.
When restricted to a linear nearest-neighbour (NN) architecture, the circuit depth within a segment is $d = \mathcal{O}(1)$, and is shown to be advantageous over other methods.

\begin{figure*}[ht!]
\centering
\includegraphics[width=1\textwidth]{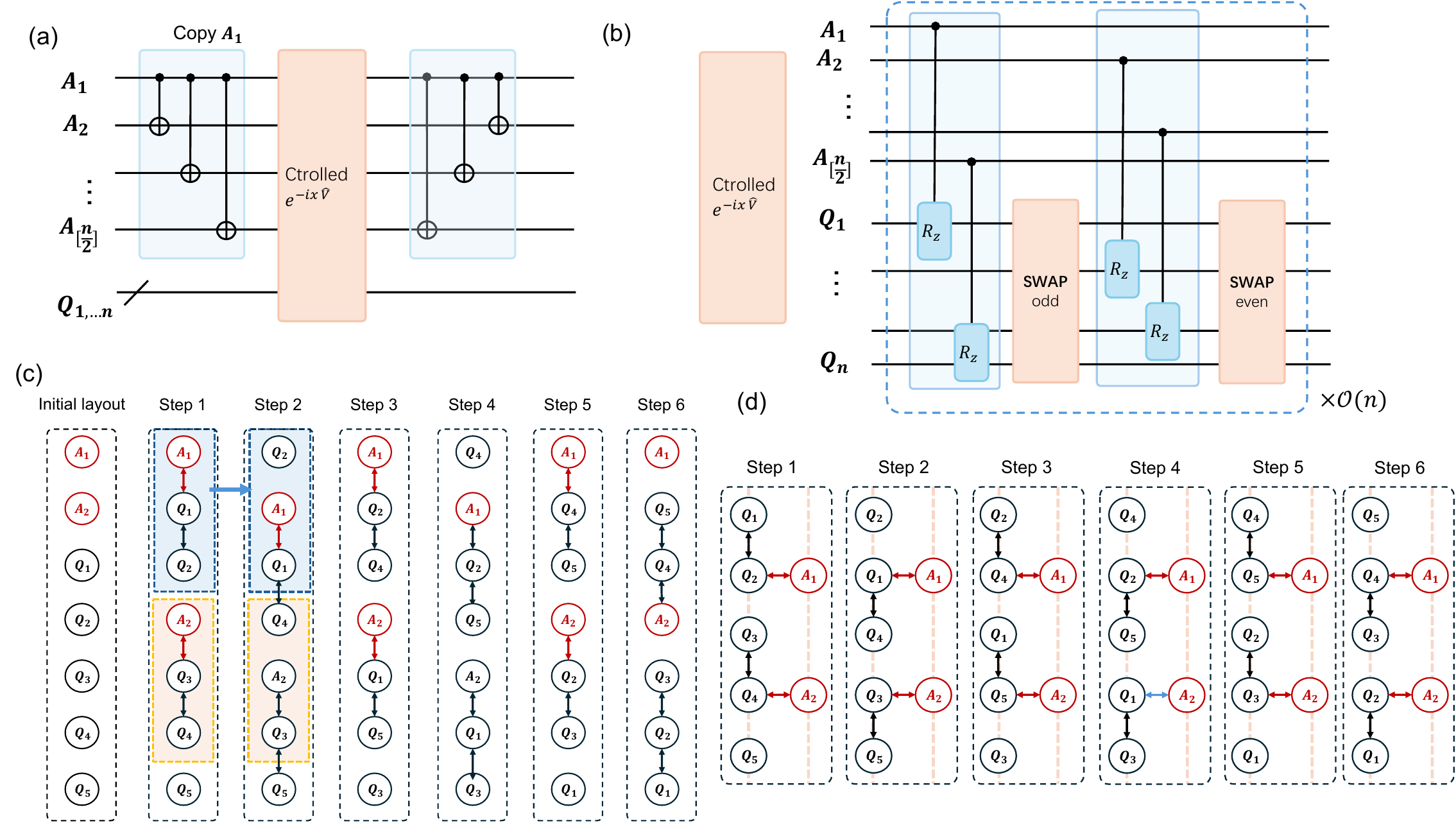}
\caption{ \textbf{
The implementation of the controlled exponentiation of potential terms Ctrl-$e^{-i x \hat{V}}$ on a nearest-neighbour architecture (1D or 2D) using $ [ \frac{n}{2} ]$ ancillas with depth $\mathcal{O}(n)$.}
(\textbf{A}) Copy the classical information on the qubit $A_1$ to $A_2$,...$A_{[n/2]}$, and then undo the copy operation. This circuit is equivalent to using $A_1$ as the single-controlled qubit to control all the other physical qubits $Q_1$ to $Q_n$.
$Q_i$:  $i$th physical qubit (encoding the $n$th spin-orbital). $A_i$: $i$th ancilla.
The copy operation allows all controlled rotations (blue-shaded box in (\textbf{B})) to be implemented using nearest-neighbour gates.
(\textbf{B}) The circuit block for controlled $e^{-i x \hat{V}}$, using nearest-neighbour operations $e^{- i Z_i Z_{i+1}}$ followed by $[n/2]$ \textsc{swap}  operations detailed in \textbf{C}. The circuit block in (\textbf{B}) is repeated $\mathcal{O}(n)$ times. An example is shown in (\textbf{C}). \textsc{cnot} operations are omitted in the sub-figure.
(\textbf{C}) 1D linear architecture. 
The ancillas $A_i$ ($i = 1,2,.., [n/2]$) and physical qubits $Q_i$ ($i = 1,2,.., n$) can be placed in the way shown in Step 1 in $\mathcal{O}(n)$ depth.
The {red} arrow connecting $A_i$ and $Q_{j}$  is used to represent to perform the controlled-$\textrm{R}_z$ rotation, sandwiched by \textsc{cnot} operations on adjacent qubits $Q_j$ and $Q_k$ (connected and illustrated by the black arrow), which realises $e^{- i Z_j Z_{k}}$. 
The black arrow connecting $Q_j$ and $Q_k$ is used to represent performing the corresponding \textsc{cnot} operations in realising $e^{- i Z_j Z_{k}}$, then followed by a \textsc{swap}  operation.  The transformation from Step 1 (the shaded blue and orange boxes) to Step 2 can be realised by 2 \textsc{swap} gates (cyclic \textsc{swap} operation). The rest of the transformation is realised in the same way.
(\textbf{D}) 2D planar architecture.
The qubit connectivity is represented by the orange dashed line.
}
\label{fig:control_V_NN}
\end{figure*}

For electronic problems, the Hamiltonian can also be reformulated with \textsc{swap}  and Pauli-$Z$ operators. The kinetic term under the Jordan-Wigner transformation, for example, $\hat{T}_{ij} = h_{ij} ( \hat{a}_i^{\dagger}   \hat{a}_j 
 + \hat{a}_j^{\dagger}  \hat{a}_i )  = \frac{h_{ij}}{2} (X_i X_j \otimes_{k = i+1}^{j-1} Z_k  + Y_i Y_j \otimes_{k = i+1}^{j-1} Z_k)$ can be reformulated as
$$
 \hat{T}_{ij} = h_{ij} ( 2 \SWAP_{i,j} \otimes_{k = i+1}^{j-1} Z_k - \frac{1}{2} \otimes_{k = i}^{j} Z_k).
 $$
The potential term $\hat{V}_{ij kl}$ can be similarly reformulated such that each individual term commutes with the particle number operator.
Although the Hamiltonian is reformulated in the basis of \textsc{swap}  and Pauli-$Z$ operators, the implementation of the Trotter formulae remains the same as that in the Pauli basis.
The depth complexity for the electronic structure problem in \autoref{eq:molecular_hamil_main} is shown to be $d = \mathcal{O}(n)$ (see Methods). 


\vspace{10pt}

\noindent
\textbf{Asymptotic gate and depth complexity}

\noindent
Our results with randomised composite LCU can nearly match and, in some cases, outperform the previous best methods for gate complexity with respect to $\Delta$, $\varepsilon$, $\lambda$, and $n$, although the sample complexity becomes worse in $\eta$. 
The gate complexity for estimating generic Hamiltonian's eigenstate properties is summarised as follows.

\begin{theorem}[Gate complexity for general cases (Informal)]
\label{thm:observ_estimation_main}

\textbf{Observable estimation} (\autoref{problem:observ}): To achieve the error of observable's expectation on the eigenstate $\ket{u_j}$ within $\varepsilon$ with success probability   $1-\vartheta$, the gate complexity in a single circuit is $\mathcal{O} \left(  (  \Delta^{-1} \ln(  \varepsilon^{-1}))^{1+ \frac{1}{4k+1} }  \right )$ provided $N_s = \mc O\left ( \varepsilon^{-2} \|O\|_1^2 \ln(1/\vartheta) \right)$ samples.

\noindent\textbf{Eigenenergy estimation} (\autoref{problem:eigenenergy}): To achieve the eigenenergy estimation error within $\kappa$, the total gate complexity is $\mathcal{O} \left(    \kappa^{-(1+ \frac{1}{4k+1})}   \ln(1/\vartheta) \right )$  with success probability $1-\vartheta$, approaching to the Heisenberg limit. Alternatively, by using the methods proposed in \cite{wang2023quantum} and \autoref{alg1_main}, the maximum gate complexity in a single circuit can be reduced to $\mathcal{O} \left(  (  \Delta^{-1} \ln(  \kappa^{-1}))^{1+ \frac{1}{4k+1} }  \right )$ at the cost of more samples.

 \end{theorem}

Here, the spectral filter in \autoref{eq:spectral_projector_decomp_VS_main} is constructed using the $2k$th-order Trotter formula as a building block in realising the real-time dynamics in \autoref{fig:cartoon}. 
\red{Note that the spectral filter can be constructed with $k = 0$ (i.e., without the Trotter term $S$) and its gate complexity is also covered by \autoref{thm:observ_estimation_main}. Our zeroth-order design with $k = 0$ is similar to that of \cite{wan2021randomized}, though our sampling procedure is simpler and it shows advantages in depth when qubit connectivity is restricted. 
}

As shown in \autoref{table:ObservableComp}, our method can achieve polylogarithmic dependence on inverse precision, outperforming the QPE-based method and matching the result by QSP. As a variant of QSP, QETU can achieve near-optimal ground state preparation by querying real-time evolution. However, these types of methods intrinsically hinge on a coherent implementation of $e^{-iHt}$, which rules out any random sampling method.  Therefore, it is not straightforward to achieve polylogarithmic dependence on inverse precision by these coherent methods. 
As $\eta$ only appears in sample complexity instead of the gate or depth complexity which is more of a concern in NISQ or early FTQC applications, its dependence is not included in \autoref{table:ObservableComp}. A more detailed description of both the gate and sample complexity can be found in Theorem 3 in Supplementary~Sec. III. The actual \textsc{cnot} and non-Clifford gate counts considering circuit synthesis are shown in \autoref{prop:gate_cost_observ_dynamics} in Methods.





Next, we discuss the depth complexity for various physical Hamiltonians. 
An advantage of our method over QSP-based methods is that the commutator relations can be utilised (when $k \geq 1$), such that it could achieve a better system size dependence. For Heisenberg models, the gate complexity of our method is $\mathcal{O}(n^{1+o(1)} )$. In contrast, the solution given by QSP~\cite{lin2020near} is $\mathcal{O} ( n^{2}   \log\varepsilon^{-1}) $. On the other hand, the implementation of time evolution by Trotter methods will undermine the optimal scaling with respect to $\lambda$, $\Delta$ and $\varepsilon$.
Since controlled operations are necessary in QETU (even for their control-free approach), it becomes suboptimal when qubit connectivity is restricted. To summarise, our  method has the following features:
(1) it has better scaling with respect to $\lambda$, $\Delta$ and $\varepsilon$, though $\eta$ dependence is worse than the other advanced methods. 
(2) For the Heisenberg model,  we shall see that our method based on the $2k$th-order Trotter formula has depth $\mathcal{O}( n^{\frac{2}{4k+1}} \varepsilon^{-\frac{1}{4k+1}})$, which has better system-size $n$ and precision $\varepsilon$ scaling over existing strategies.
{It may be worth noting that one may simultaneously achieve near optimal scaling in both the size and precision as $\mathcal{O}( n^{\frac{2}{4k+1}} \log\varepsilon^{-1})$ if the higher-order commutators in the Trotter error remainder could be compensated.
}
The result for depth complexity is summarised in \autoref{thm:observ_estimate_gate_Lattice_main}. A comparison with other methods is shown in the third column of \autoref{table:ObservableComp}.

\begin{figure*}[t!]
\centering
\includegraphics[width=1\textwidth]{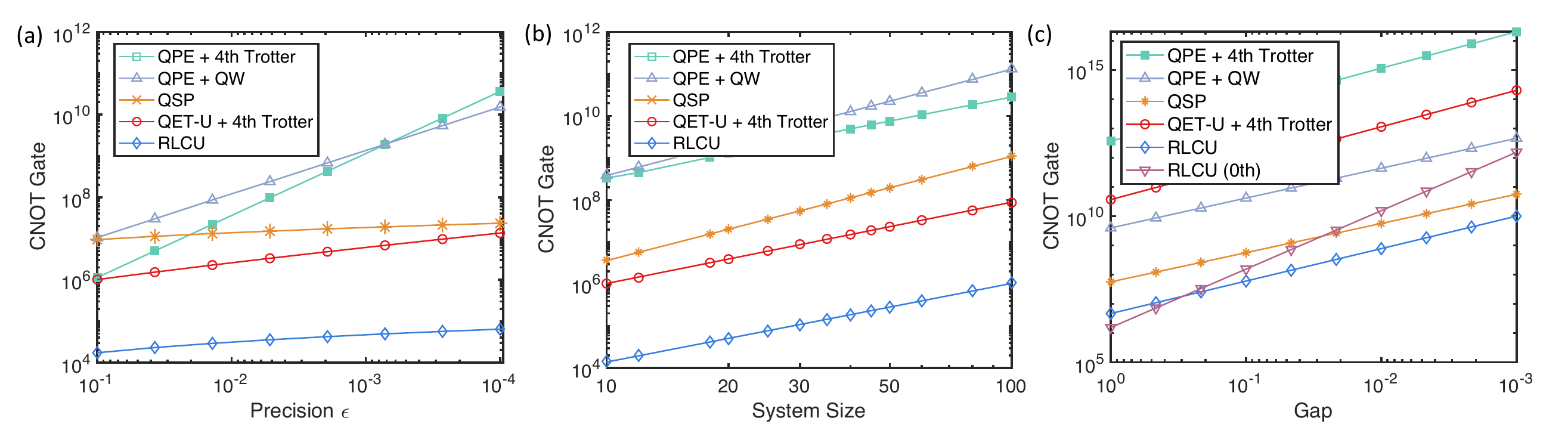}
\caption{ 
\textbf{
Gate count estimates for the eigenstate property estimation tasks for the Heisenberg Hamiltonian and P450 molecule.}
(\textbf{A}) Gate count comparison with different target precision for $20$-site Heisenberg Hamiltonian.
(\textbf{B}) Gate count comparison with increasing system size to achieve precision $0.001$.
The energy gaps are determined through numerical fitting, which agrees well with the results obtained from exact diagonalisation.
(\textbf{C}) Gate count for the P450 with $A$-type active space as a function of the energy gap, which is treated as an independent variable to analyse its effect. 
The pairing orders with both $k = 0$ and $1$ are shown in (\textbf{C}). 
}
\label{fig:SysSize_NC}
\end{figure*}



\begin{theorem}[Gate and depth complexity for lattice Hamiltonians (Informal)]
\label{thm:observ_estimate_gate_Lattice_main}
The gate complexity for estimating  an $n$-qubit Heisenberg Hamiltonian's eigenstate property is $\mathcal{O}(n^{1+\frac{2}{4k+1}}   )$, with circuit depth $\mathcal{O}(n^{\frac{2}{4k+1}}  )$.
\end{theorem} 

The proof idea for \autoref{thm:observ_estimation_main} and \autoref{thm:observ_estimate_gate_Lattice_main} is illustrated in \autoref{fig:section_connection} in Methods. See the formal version of the theorems and the proof in Supplementary~Sec. III.


The result can be extended to the simulation of molecules.  
For electronic problems, the second-quantised electronic-structure Hamiltonian in the plane-wave dual basis, has the form (see \cite{babbush2018encoding,kivlichan2020improved,childs2021theory})
\begin{equation}
H= \hat{T} +\hat{V}  = \sum_{p q} T_{p q} \hat{a}_p^{\dagger} a_q+\sum_p U_p \hat{n}_p+\sum_{p \neq q} V_{p q} \hat{n}_p \hat{n}_q
\label{eq:molecular_hamil_main}
\end{equation}
where $\hat{T}$ and $\hat{V}$ represent the kinetic and potential terms of the fermionic Hamiltonian, respectively,  $\hat a$ and $\hat a^{\dagger}$ are fermionic creation and annihilation operators and $\hat{n}_p$ is the number operator for the corresponding spin-orbital, and  the total number of terms $L = 
\mathcal{O}(n^2)$. 
To estimate the eigenstate property of the Hamiltonian described by \autoref{eq:molecular_hamil_main}, the circuit depth scales as
$\mathcal{O} (n^ {2 + \frac{2}{4k+1}} )$. When restricted to nearest-neighbour (NN) architecture, the circuit depth by QETU scales as
$\Omega (n^ {3 + \frac{1}{2k}} )$. 
More general second-quantised quantum chemistry problems with $L = \mathcal{O}(n^4)$ terms will be discussed in Methods.

As a byproduct, we show that the controlled $e^{-i \theta {H}}$ can be implemented with a linear-depth circuit $d = \mathcal{O}(n)$  comparable to the control-free simulation of electronic problems in \cite{kivlichan2018quantum,babbush2018low}. The result is formalised in \autoref{prop:control_V_NN} with a graphical proof in \autoref{fig:control_V_NN} in Methods. It is directly applicable to a range of quantum algorithms for applications in quantum chemistry, materials and lattice gauge theories, which require controlled unitaries as subroutines.




\vspace{50pt}

\noindent
\textbf{Quantum resource estimation}	









\begin{figure}[t!]
\centering
\includegraphics[width=0.78\linewidth]{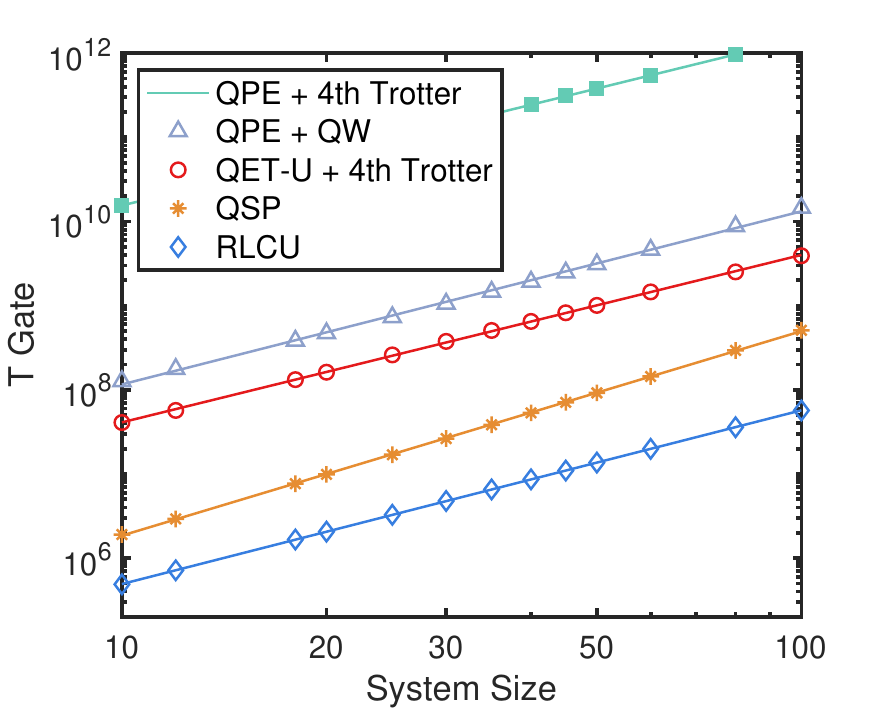}
\caption{ 
\textbf{The scaling of the T gate count with system size for the Heisenberg model.}  The RLCU method involves at most $10^4$ single-qubit Pauli rotation $R_z$ gates and $10^6$ T gates for $20$-qubit Heisenberg model. The circuit synthesis method is detailed in Supplementary Sec. V.
}
\label{fig:Heis_T_main}
\end{figure}
 
\noindent
Here, we compare the resource costs associated with different algorithms listed in Table 1, focusing on \revise{the maximum \textsc{cnot} gates and T gate counts in each circuit} for various physical systems, including lattice models and quantum chemistry problems. 
The QPE method relies on Hamiltonian simulation, which can be realised by the Trotterisation or qubitised quantum walk (QW), with the latter having a better $\varepsilon$ scaling at a cost of a larger overhead. We find that the fourth-order (random) Trotter result performs the best among all the Trotter methods, which is consistent with the result in \cite{childs2018toward}. The quantum circuit is synthesised to \textsc{cnot} gates, single-qubit Clifford gates and non-Clifford gates (including single-qubit $Z$-axis rotation $R_z$ gates and T gates). The elementary operations in the block encoding of $H$ used in QSP are the \textsc{select} and \textsc{prepare} operations~\cite{low2019hamiltonian}. The cost for these two operations is shown in detail in Supplementary~Sec. V, which serves as the basis for analysing the resource requirements of algorithms that query the block encoding of $H$.






We consider the Heisenberg Hamiltonian
\begin{equation}
H = \sum_{i=1}^{n-1}( J_x X_{i}X_{i+1} + J_y Y_{i}Y_{i+1}  + J_z Z_{i}Z_{i+1})  +   H_f.
\label{eq:Hamil_XXZ}
\end{equation}
where the external field is applied $H_f = h_x \sum_{i=1}^n X_i + H_b$ with additional field $H_b$ acting on the boundary. The periodic boundary condition is imposed.
When $h_x = 0$ when an additional field  $H_b = \sqrt{c^2-1}( Z_1 - Z_n )$ is applied, the ferromagnetic Heisenberg Hamiltonian with negative couplings $J_i$  has a constant gap $\Delta(c) = 4(c-1)$  in the infinite size limit (see~\cite{koma1997spectral}). 
Here we consider a more challenging regime with the antiferromagnetic types of couplings with $J_x = J_y = 1$ and $J_z = 2 J_x$ and field $h_x = 0.25$, in which more excited states will emerge and is usually more interesting. Nevertheless, even in this case, we find by numerical fitting that for $n \leq 100$, the energy gap is not very small.
Indeed, the gap can be fitted by a polynomial function with leading order $\Delta = \mathcal{O} (n^{-\frac{1}{2}})$, which agrees quite well with the actual gap at small system sizes, while an exponential decaying function does not agree well, see Supplementary~Sec. VIII for the fitting results. 
\revise{We set the initial state overlap as a constant, as similarly used in \cite{lin2021heisenberg,von2021quantum,ding2023even,wan2021randomized}, whose dependence is analysed in Theorem 5.}

First, we present the gate number  estimates for the above Heisenberg model with different target precisions, as shown in \autoref{fig:SysSize_NC}(a), which validates the high-precision feature of our method. Next, we show the gate count dependence on the system size. 
\autoref{fig:SysSize_NC}(b) clearly shows that our method has a better system size dependence than QSP, both in the gate count and in the asymptotic scaling. The T gate count scaling for the Heisenberg model has a similar behaviour and is shown in \autoref{fig:Heis_T_main}. The large reduction compared to QSP arises from exploiting the commutation relations of Hamiltonian terms. Note that our method only requires rotation gates $R_z (\theta)$ with identical and small angles, and thus further reductions for T gates can be achieved by using the protocols in \cite{campbell2016efficient,duclos2015reducing}.

We further discuss the resource cost for the cytochrome enzyme (P450) molecule with the A-type active space specified in~\cite{goings2022reliably}, as shown in \autoref{fig:SysSize_NC}(c). 
The methods for compiling the single- and double-excitation operators are used to reduce gate overhead  which is more efficient than naively Trotterising each fermionic term.
However, the cost for Trotterisation in every step is not optimised, in which each term is nonlocal and thus contributes substantially to the total gate count. The methods for simulating fermionic Hamiltonians can be directly incorporated~\cite{childs2021theory,kivlichan2020improved,su2021nearly}. Moreover, our result reflects the gate count in the worst-case scenario. The commutation relations among Hamiltonian terms can be naturally exploited to further reduce the computational cost of our method. We leave the improvements about leveraging the properties of molecular Hamiltonians to future work. The T gate counts for the Heisenberg model and P450 molecules are presented in \autoref{fig:Heis_NC_Eps} in Methods.

\vspace{5pt}
\noindent
\textbf{Implementation on IBM devices}

\noindent
We focus on estimating the low-lying eigenenergies of normalised $12$-sites anisotropic Heisenberg Hamiltonians, especially the ground-state energy and the first excited-state energy. The simulations are performed on the IBM processor with the ancilla-free measurement scheme (see \autoref{fig:cartoon}(c2)).  The reference state is set to be orthogonal to the initial antiferromagnetic state.  The eigenenergy estimation algorithm is illustrated in \autoref{fig:cartoon}(b2), where the search by $D_{\tau}(\omega)$ over $\omega$ (based on experimental measurements) only requires classical computation, indicated by the red line in \autoref{fig:IBM}.   
The experimental estimation of the ground-state energy achieves remarkably high accuracy, with an error of about $0.001$ ($0.01J$ with Heisenberg coupling $J$), as shown in the zoom-in region in \autoref{fig:IBM}.  Then, we consider the antiferromagnetic XXZ model and present the search for its low-lying eigenenergies in \autoref{fig:IBM_Methods} in Methods. The experiment involved circuits with up to 2,000 \textsc{cnot} gates and 20,000 single-qubit gates with a relatively long evolution time.
The maximum energy error remains within $0.005$ for both eigenvalues. 
The high precision is attained without relying on error mitigation techniques owing to the intrinsic algorithmic noise robustness.



\begin{figure}[t!]
\centering
\includegraphics[width=0.95\linewidth]{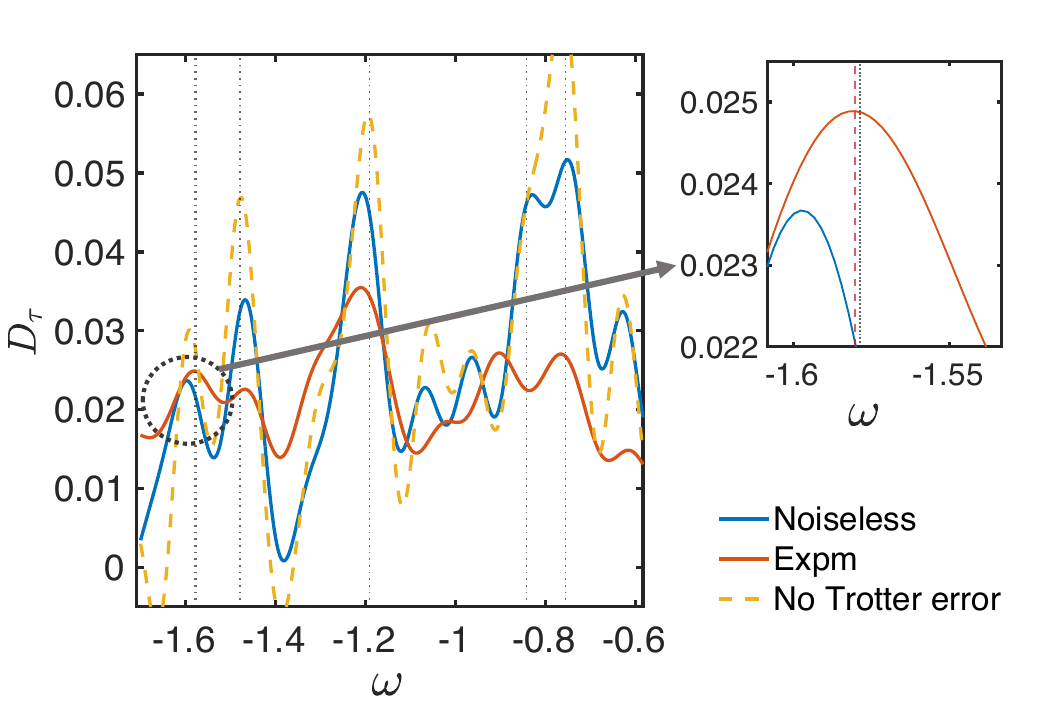}
\caption{
\textbf{Implementation of the RLCU algorithm on IBM quantum devices}.  We consider searching the ground state energy of a 12-qubit normalised anisotropic Heisenberg Hamiltonian \autoref{eq:Hamil_XXZ} without any external field. We present the ideal result with finite $\tau$ and finite cutoff but without any Trotterisation error, represented by the orange dotted lines. We also show the results obtained using the noiseless and noisy Trotterised quantum circuit and the experimental data, denoted by the blue and red lines, respectively. 
For different lines, $D_{\tau} (\omega)$ is computed classically using data points obtained from different setups, including both numerical simulations and experimental measurements.
The right panel is a zoom-in of a smaller range of ground-state energy estimates shown in the left panel. The red dotted line represents the experimentally estimated ground-state energy, which is extremely close to the ideal value shown by the black dotted line, with an error of 0.001. The energy estimation error for the excited state is about 0.005.
}
\label{fig:IBM}
\end{figure}

\section{Discussion}
\noindent
We have provided a full-stack quantum algorithm based on randomised composite LCU for estimating the eigenstate property and eigenenergy of many-body systems. 
While previous works have mostly considered query complexity, our approach can achieve near-optimal precision and system size scaling at the quantum circuit level, even with nearest-neighbour connectivity. We present an ancilla-free strategy by choosing the elementary operators to be symmetry-conserved operators (e.g. \textsc{swap}  and Pauli-$Z$ operators), which is particularly important for implementation on quantum devices, as the restriction of connectivity will incur a large overhead when compiling it into local operations.
Moreover, we show a concrete gate count analysis for various physical models concerning the circuit synthesis, which shows remarkable improvements in both asymptotic scaling and actual gate counts. Our work presents concrete resource estimates for lattice and molecular systems, providing guidance for the computation of physical problems on real devices. 
We would note that it is precisely through adopting this bottom-up quantum circuit design methodology that we can demonstrate deterministic energy estimation in experiments, which achieves higher accuracy and deeper circuit execution~\cite{yoshioka2025krylov,stanisic2022observing,tang2024exploring} without relying on variational ansatz. 

Similar to other spectral filter methods, the random-sampling-based method cannot overcome the fundamental limitations set up by the initial state and the energy gap. Ref.~\cite{lee2023evaluating} numerically investigated its scaling for several many-body examples, which quickly decreases with increasing system size, aligning with the complexity conjecture. Nonetheless, even when provided a sufficiently good initial state, preparing the ground state of  2-local Hamiltonians remains as BQP-hard (known as the guided local Hamiltonian problem)~\cite{gharibian2022dequantizing,gharibian2022improved}. Therefore, it does not rule out the possibility of quantum speedup. In addition, similar to other random-sampling spectral filter methods (\cite{zhang2022computing,lin2021heisenberg}), the sample complexity with respect to $\eta$ is less optimal than that achieved by QSP with amplitude amplification which, nonetheless, also introduces a large circuit compilation cost. On one hand, our algorithm is compatible with initial state preparation methods such as adiabatic evolution and variational methods.
On the other hand, the random sampling method offers non-trivial advantages. In particular, when considering qubit connectivity, the depth complexity is independent of $\eta$ and nearly independent of the system size, which is not achievable by protocols relying on coherent implementations.
Therefore, our approach could be useful for estimating the requirement for early quantum usefulness, where the primary constraint lies in gate count and circuit depth.



This work also provides a user-friendly toolbox for researchers to analyse the individual costs for elementary units in quantum simulation, and thus enables comparison across different eigenstate algorithms with various initial conditions. 
Our framework (decomposing the task into the elementary operations), along with the toolbox for analysing the cost for each elementary operation, is readily useful as a building block for end-to-end resource estimations for a broad class of quantum algorithms.
We also note that recent works consider preparing the ground state by simulation of an open system dynamics described by Lindbladians or a thermal process~\cite{chen2023local}, which may partially overcome the limitations set by the initial state~\cite{lee2023evaluating} though the simulation time~\cite{ding2023single,zhan2025rapid} is longer than the spectral filter method by numerics. As the fundamental building block of these methods  \cite{ding2023single,zhan2025rapid,chen2023local,cubitt2023dissipative} is querying the real-time evolution, the actual resource cost can be analysed similarly with our approach.



\begin{figure*}[ht!]
\centering
\includegraphics[width=1\textwidth]{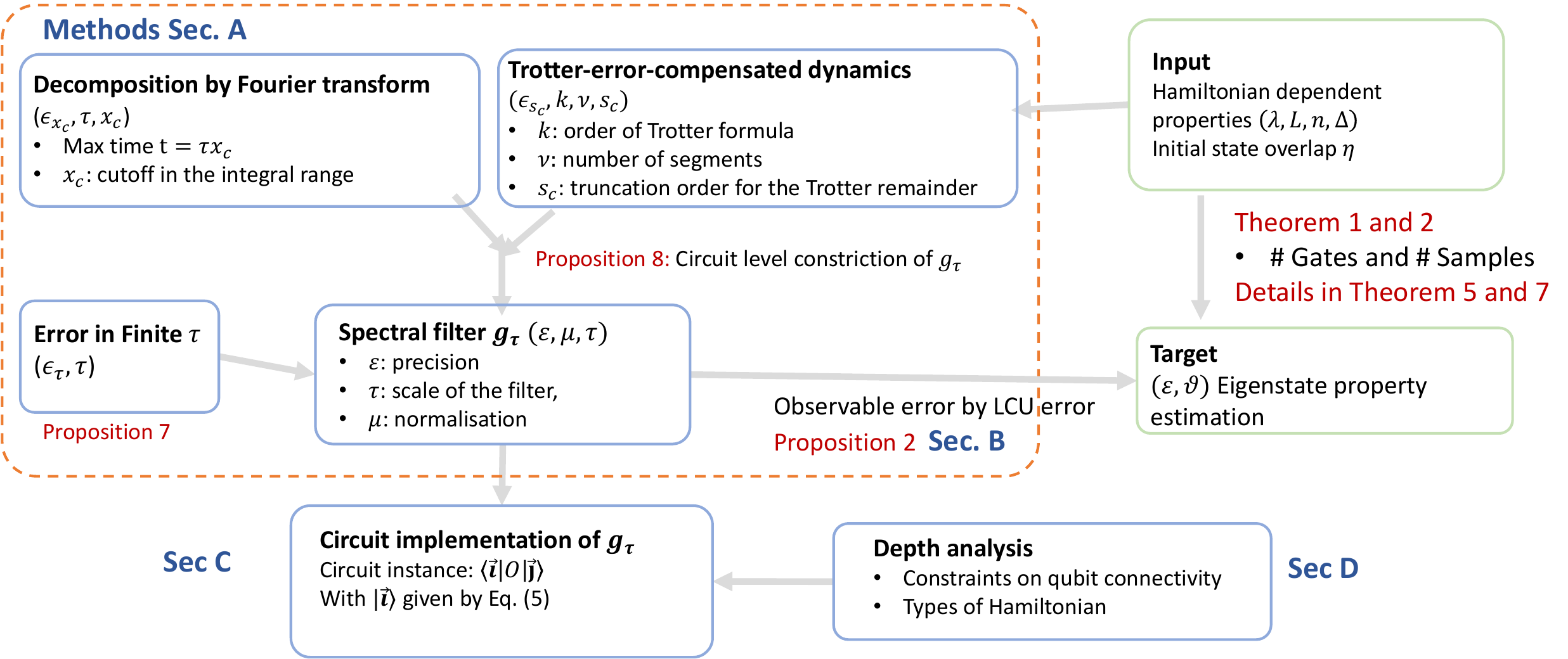}
\caption{
\textbf{Illustration for the hierarchy of various components within the composite LCU.} The figure illustrates the connections between different subsections in Methods. This visualisation aids in understanding how the target problem (\autoref{problem:observ} and \autoref{problem:eigenenergy}) can be achieved, specifically, to achieve $(\varepsilon, \vartheta)$-eigenstate property estimation. Arrows in the figure indicate direct connections between components. The key parameters relevant to the construction of LCU are highlighted.  
}
\label{fig:section_connection}
\end{figure*}

\section{Materials and Methods}
\label{sec:methods}


This section is centred around \autoref{fig:section_connection}, which shows the connections of different subsections in Methods and illustrates the proof idea of \autoref{thm:observ_estimation_main} and \autoref{thm:observ_estimate_gate_Lattice_main}.
In \autoref{sec:methods_LCU}, we show how to construct the composite LCU formula as outlined in \autoref{fig:cartoon} and  \autoref{alg1_main}. The hierarchy of different components in the composite LCU is shown in  \autoref{fig:section_connection}. Then, we will show the methods for eigenenergy estimation and eigenstate property estimation, and provide the circuit depth and gate complexity for the two tasks.
Specifically, \autoref{sec:methods_observable_estimation}   shows how errors in observable expectations can be bounded given the approximation error of the composite LCU formula. \autoref{sec:measurement} shows the circuit realisation of the composite LCU form by presenting the explicit form of each circuit instance. An ancilla-free composite LCU formula and measurement scheme are presented for symmetry-conserved cases.
\autoref{sec:gate_complexity_analysis} analyses the depth and gate complexity for general and Hamiltonian-specific cases.



\subsection{Composite LCU for the spectral filter}
\label{sec:methods_LCU}

{
Now, we show how to construct the composite LCU form of the spectral filter.
We mainly consider the Gaussian filter $g_{\tau}(H) = g(\tau H)$, but present the construction of Hermitian $g$ in a general form following the convention in \cite{zeng2021universal}. Therefore, the results can be readily applied to the investigation of other functions, such as a product of Lorentz and Gaussian functions   \cite{huo2021shallow} and the Gaussian derivative function in \cite{wang2023quantum}. }
To address the eigenstate tasks, we consider the spectral filter $g( \tau(H-\omega) )$ with $\omega = E_j$, which can be expressed as 
$g( \tau(H-\omega) ) = \sum_{i=0}^{2^n-1} g(\tau (E_i - \omega)) \ket{u_i}\bra{u_i}.$

\begin{algorithm}[H]
\begin{algorithmic}[1]
\State{Construct the composite LCU formula as described in \autoref{prop:composite_LCU}.
We expand a Gaussian filter $g$ into a combination of real-time evolution $U(t_i)$ by \autoref{eq:randomLCU_continous_main}, which is then decomposed into Pauli operators as $U(t_i) = (S_{2k}(t_i/\nu ) V_{2k}(t_i/\nu))^\nu$. Overall this takes a composite LCU form, with the LCU form of $ V_{2k}$ in \autoref{eq:TrotterError_LCU_V} and the explicit form in \autoref{eq:time_evo_decomp_main}). 
    }
\State {Realise the composite LCU formula by random sampling. We first generate $t_i$ and $t_j$ (at the level of spectral filter decomposition) and $\vec{i}$ and $\vec{j}$  (at the circuit level, see \autoref{eq:U_ti_LCU}) by sampling from the corresponding probability distributions, which determines the circuit intsance in each run (see \autoref{eq:observable_dynamics_main} for example). The circuit for evaluating $N(O)$ and $D$ (in \autoref{eq:N_O_LCU}) is shown in \autoref{fig:cartoon}(c). 
    The ancilla-free scheme is discussed in Methods \autoref{sec:measurement}.
    }
\State{Estimate the observable expectation value by  $\braket{O} = \frac{N_{\tau}(O)}{D_{\tau}}$, and eigenenergy by $E_j = \argmax_{\omega} D_{\tau}(\omega)$.
The estimator of $N_{\tau}(O)$ and $D_{\tau}$ is given by \autoref{eq:estimator_v_main}. The observable expectation on the eigenstate is given by \autoref{eq:OND_main}. As illustrated in \autoref{fig:cartoon}(b), the eigenenergy could be determined by $\max_{\omega} D_{\tau}(\omega)$ where the search process is evaluated by pure classical computation. The eigenenergy is searched via $\hat{D}_{\tau}(\omega)$ with purely classical search process.}
\end{algorithmic}
\caption{An overview of the random-sampling algorithm for \autoref{problem:observ} and \autoref{problem:eigenenergy}.
} 
\label{alg1_main}
\end{algorithm}

For an input state $\ket{\psi_0}=\sum_i c_i \ket{u_i}$, the state after applying the spectral filter $g( \tau(H-\omega) )$ at a finite $\tau$ is given by
\begin{equation}\label{Eq:coolingdef_main}
	\ket{\psi(\tau)} = \frac{g(\tau (H-\omega) )\ket{\psi_0} }{\|g(\tau (H-\omega))\ket{\psi_0} \|} = \frac{\sum_i g(\tau (E_i - \omega))c_i\ket{u_i}}{\sum_i |c_i|^2 g(\tau (E_i-\omega))^2}.
\end{equation}
It is easy to see that the Gaussian function $g(\tau (E_i-\omega))$ decreases exponentially with $\tau$ and $E_i - \omega$. When taking $\omega = E_j$, the amplitudes of the normalised state $\ket{\psi(\tau)}$ concentrate to the eigenstate with energy $E_j$, and the evolved state asymptotically approximates the  eigenstate $\ket{u_j}$ with nonzero $|\braket{\psi_0|u_j}|^2 \neq 0$ for sufficiently large $\tau$ as  
$ 
    \lim_{\tau\to\infty} g( \tau(H-E_j) ) \ket{\psi_0} \propto \ket{u_j}.
$

There are two important parameters $\tau$ and $\omega  $ in the spectral filter   $g(\cdot)$, in which   $\tau$ is an imaginary time scaling factor, and $\omega = E_j$ indicates a shift of the original function.
We shall see that $\tau$ indicates the timescale for the spectral filtering procedure, and larger $\tau$ will cool the state closer to the target eigenstate.
The shift $\omega$ plays an important role in searching the eigenenergies (see \autoref{fig:cartoon}(b)) and in the eigenstate property estimation. 
From the above equation, readers may wonder if we can still get the concentration around the eigenstate $\ket{u_j}$ if we do not know the value of $E_j$ a priori.
Here, we can find that $\omega$ only appears in classical post-processing and will not be involved in quantum measurement. Therefore, we will classically tune $\omega$ as a variable to find the true eigenenergy without increasing any quantum resource cost.


The Gaussian filter can be decomposed into the basis of real-time evolution as 
\begin{equation}
\begin{aligned}
    \label{Eq:superposed}
 g(\tau(H-\omega) ) = c \int^\infty_{-\infty} \textrm dp(x)  
 e^{ix(\omega -  H)} 
\end{aligned}
\end{equation}
with  $p(x) $ being a Gaussian distribution. 
The quantum state given by \autoref{Eq:coolingdef_main} becomes
\begin{equation}
\begin{aligned}
	\ket{\psi(\tau)}  = c\int^\infty_{-\infty} \textrm dp(x) e^{ i \tau x \omega} \ket{\phi(x\tau)}, \nonumber
\end{aligned}
\end{equation}
which is now a superposition of real-time evolved states $\ket{\phi(x\tau)} = e^{-ix\tau H}\ket{\psi_0}$ with probability distribution $\textrm dp(x) = p(x)\textrm dx$.

Instead of preparing the above  quantum state directly, we focus on the goal of obtaining arbitrary observable expectation values.  
Specifically, we aim to measure any observable $O$ of the evolved state $\ket{\psi(\tau)}$, i.e., 
\begin{equation}\label{eq:OND_main}
	\braket{O}_{\psi(\tau)} = \braket{\psi(\tau)|O|\psi(\tau)} = \frac{N_\tau(O) }{D_\tau(\omega)},
\end{equation}
where 
$ 
D_\tau(\omega) =   \braket{\psi_0| g^2(\tau(H-\omega)) |\psi_0}$ and $
N_\tau(O) =   \braket{\psi_0| g(\tau(H-\omega)) O g(\tau(H-\omega)) |\psi_0}$. Hereafter, we have omitted $\omega$ in $N_\tau$ for simplicity.
The denominator is 
$D_\tau(\omega) = c^2 \int_{-\infty}^{\infty} \textrm{d} \tilde p(x) e^{i\tau x \omega} 	\braket{\psi_0 |  e^{-i \tau x H}|\psi_0}$
 with  $\textrm{d} \tilde p(x) =\frac{1}{2} \textrm{d} x\int^\infty_{-\infty}p(\frac{z+x}{2})p(\frac{z-x}{2})\textrm dz$.  
The expectation values of the denominator and the numerator can be expressed as $v = \tr(O g_{\tau}(H-\omega) \rho_0  g_{\tau}(H-\omega) )$ with $\rho_0 = \ket{\psi_0}\bra{\psi_0}$, and the denominator is obtained by taking $O  = I$.

For the numerator $N_\tau(O)$, we can efficiently obtain it by sampling the distribution $\textrm{d}p(x,x')=\textrm{d}p(x) \textrm{d}p(x')$ and then estimating the mean value $\mathbb E_{x,x'} \braket{\phi(x'\tau)|O|\phi(x\tau )}$, where each term can be measured by the Hadamard test circuit or using the ancilla-free measurement strategy detailed in~\autoref{sec:measurement}. 
We can similarly obtain the denominator $D_\tau$ by estimating $ \mathbb E_x \braket{\psi_0|e^{- i x \tau H}|\psi_0}$ with probability $\textrm d\tilde p(x)$. 

The unnormalised eigenstate can be effectively realised by applying a spectral filter $g_{\tau\rightarrow \infty} (H - E_j)$ to an initial state.  When $\tau \rightarrow \infty$,   the  expectation value of an observable on the eigenstate satisfies $$\braket{O} = \frac{N_{\tau \rightarrow \infty}(O)}{D_{\tau \rightarrow \infty}}. $$ Here, it is also easy to check that the denominator is nonvanishing, given by
$ 
	D_{\tau \rightarrow \infty}(E_j) = \eta$
under the assumption.  Thus we arrive at the ideal observable expectation. 

\sun{
To realise the spectral filter with unitary operations, we   avoid the infinite time length by truncating the time in real-time evolution to a given threshold. The observable when considering a finite $
\tau$ is estimated by
$ 
	\braket{O}_{\tau} = {N_{\tau}(O)} / {D_{\tau}}. 
$ Below we discuss the LCU form of $g$ when considering the cutoff.}
The truncated Gaussian filter takes an explicit composite LCU form of
\begin{equation}
\begin{aligned}
	g_{\tau}(H-\omega) &= c \int_{-x_c}^{+x_c} dx p(x)  e^{ix\tau \omega}   e^{-i\tau x H}
\end{aligned}
\end{equation}
with the integral region $[-x_c, x_c]$.
The integrand is a real-time evolution with  time length $\tau x$.

Let us consider implementing $ e^{-i\tau x H}$ with more elementary gates as illustrated in \autoref{fig:cartoon}(a). Specifically, consider the following LCU decomposition
 \begin{equation}
 	e^{-i\tau x H} =  \mu( x \tau)\sum_{\vec{r} \in \mathcal{K}_x} \Pr(\vec{r}, x\tau, \nu(x \tau)) U_{\vec{r}} 
 \label{eq:time_evo_decomp_main}
 \end{equation} 
where $\vec{r}$ specifies the unitary $U_{\vec{r}}$ involved in the LCU formula of $ e^{-i\tau x H}$,  $U_{\vec{r}}$ is a unitary operator, and $\Pr(\vec{r}, x\tau, \nu(x \tau))$ represents the normalised decomposition coefficients of $U_{\vec{r}}$.
Then, we have 
\begin{equation}
\begin{aligned}
& g_{\tau}(H-\omega) 	\\
&= c(\mu) \int_{-x_c}^{+x_c} dx p_{\mu}(x)  e^{ix\tau \omega} \sum_{\vec{r} \in \mathcal{K}_x} \Pr(\vec{r}, x\tau, \nu(x \tau)) U_{\vec{r}} 
\end{aligned}
\label{eq:g_tau_real_expansion_main}
\end{equation}
where we have defined the normalisation factor 
$ 
	c(\mu) := c \int_{-x_c}^{+x_c} p(x) \mu(x \tau) dx  
$
and $ 
	p_{\mu}(x) = p(x) \mu(x \tau)/c(\mu)
$, see Proposition 5  for the detailed derivation.
 
Then we show how to find the operator $U_{\vec{r}}$  and the corresponding probability distribution, with which the LCU decomposition in \autoref{eq:time_evo_decomp_main} can be determined. The overall idea is to divide the time evolution into $\nu$ segments, and for each segment, we realise both the Trotter evolution and the Trotter error, as proposed in \cite{zeng2022simple}.
For evolution time $t$, let us denote the real-time evolution $U(t) := e^{-i H t} = U_m^{\nu}$ with time interval $m = t  / \nu$. We choose to implement the unitary $U(m) = e^{-i m H }$ by  a deterministic $2k$th-order Trotter formula $ S_{2k}$  and the Trotter remainder   ${V}_{2k}$, which gives us
$
{U}(m)  = {V}_{2k}(m) S_{2k}(m).
 $ As shown in the main text, the LCU formula of ${V}_{2k}$ can be expressed as
 \begin{equation}
 \label{eq:V_2k_m_decomp}
      \tilde{V}_{2k}(m) = \mu(m) \sum_r \Pr(r, m, \nu(m)) W_r 
 \end{equation}
 where $W_r$ is a unitary operator with the error of the formula $\varepsilon_{2k}(m)$.
The overall LCU formula of $U(t)$ is a product of each individual formula
\begin{equation}
    U(t) = (U(m))^{\nu} = \mu(t) \sum_{\vec{r} \in \mathcal{K}_t} \Pr(\vec{r}, t, \nu(t)) U_{\vec{r}} 
    \label{eq:U_t_LCU_form}
\end{equation}
where $\vec{r} := (r_1, r_2, ..., r_{\nu})$ with each $U_{r_i}$ being sampled from the distribution of $\Pr(r, m, \nu(m))$ introduced in \autoref{eq:V_2k_m_decomp} and we denote $ \Pr(\vec{r}, t, \nu(t)) := \prod_{i \leq \nu} \Pr(r_i, m, \nu(m))$ and $U_{\vec{r}} := \prod_{i \leq \nu} W_{r_i} S_{2k}$. With some derivation, one can prove that $\mu = \mu(t)^{\nu}$ and $\varepsilon_{2k}(t) \leq  \nu \mu(m) \varepsilon_{2k}(m)$.
\autoref{eq:U_t_LCU_form} shows how $U_{\vec{r}}$ can be sampled and thus how $g$ can be realised.
This is illustrated in \autoref{fig:cartoon}(a3); in the figure, the index by $i$ is related to the probability $\Pr(i)$ when decomposing $g$ into real-time evolution $U(t_i)$.
 

Using BCH formula, the $2k$th-order remainder   $V_{2k}(m):=U(m)S_{2k}(m)^\dag$ takes an explicit form of
$ V_{2k}(m) = \exp(i \sum_{s=1}^{\infty} E_{2k,s} m^s)$
with Hermitian operators $E_{2k,s} $.
Since the $2k$th-order Trotter error is compensated, we know that $E_{2k,s} = 0$ for $s \leq 2k$. Expanding $V_{2k}(m)$, we have
$ 
V_{2k}(m)=  \sum_{s=0}^\infty F_{2k,s}(m) 
$
where we group the terms by the order of $m$, and  $F_{2k,s}$ denotes the $s$-order expansion term of $V_{2k}(m)$ associated with $m^s$. 



Below, we briefly discuss the error due to truncation by $s_c$.
The truncation error is found to have a quick decrease with an increasing truncation order $s_c$ \cite{zeng2022simple}. Using the fact of $F_{2k,s\leq 2k}(m) = 0$, we can rewrite $V_{2k}^{(s)}(m)$ as
$ 
V_{2k}^{(s)}(m)= I + \sum_{s=2k+1}^{s_c} F_{2k,s}(m).
$
Given the truncation, the  LCU formula for $U(m)$ is
\begin{equation} \label{eq:U_sc_approxi_main}
\begin{aligned}
 {U}_{2k}^{(s_c)}(m)  = {V}_{2k,s}(m) S_{2k}(m),
\end{aligned}
\end{equation}
which consists of a deterministic second-order Trotter formula and the Trotter error compensation term.
The overall LCU formula for $U(t)$ is to repeat the sampling of $\tilde{U}_{2k}(m)$ for $\nu$ times,
$ 
	{U}_{2k}^{(s_c)}(t) = \left( {U}_{2k}^{(s_c)}(m) \right)^\nu
$.

\autoref{eq:g_tau_real_expansion_main} is a composite LCU formula; more specifically, it is a $(c(\mu),0)$ LCU formula. When we consider a finite $x_c, s_c$, it will introduce algorithmic errors. 
When considering a finite $s_c$, the spectral filter becomes
\begin{equation}
\label{eq:def_g_tau_x_s}
	g_{\tau, x_c, s_c}(H-\omega) =  c \int_{-x_c}^{x_c} dx p(x)  e^{ix\tau \omega}   {U}_{2k}^{(s_c)}(x\tau, \nu(x\tau))
	\end{equation}
It is easy to check that $\| g_{\tau, x_c, s_c}(H-\omega)\| \leq c (1+\varepsilon_{s_c})$.
The operator distance between $g_{\tau,x_c} $ and $g_{\tau,x_c, s_c}$ due to  finite $s_c$ is
\begin{equation}
\begin{aligned}
& \|g_{\tau,x_c} - g_{\tau,x_c, s_c}  \| \\
& \leq c \int_{-x_c}^{x_c} dx p(x)  e^{ix\tau \omega}   \| U(x\tau) - {U}_{2k}^{(s_c)}(x\tau, \nu(x\tau)) \| \leq   \varepsilon_{s_c}    
\end{aligned}
\label{eq:g_tau_sc_error}
\end{equation}
when $\| U - {U}_{2k}^{(s_c)}(x\tau, \nu(x\tau)) \| \leq \varepsilon_{s_c} /c$.
In order to achieve an additive error of the approximation  $ 
	\|U(t) - {U}_{2k}^{(s_c)}(t)) \| \leq \varepsilon_{s_c} 
$, Lemma 1 in Supplementary~Sec. III B gives the required segment numbers $\nu$.

Based on the above result, the total error of the spectral filter function $g$ can be bounded by the sum of the individual error terms. 
The derivation process is illustrated within the orange box of \autoref{fig:section_connection}. Specifically, we can bound the error in spectral filter construction by comparing the operator distance between 
$ 
	\| g_{\tau, x_c, s_c}  - g_{\tau, x_c, s_c \rightarrow \infty}  \|,
$
which gives an upper bound for the numerator and the denominator. 
Using the triangular inequality, the operator error between $g_{\tau}$ and $g_{\tau,x_c, s_c} $ defined in \autoref{eq:def_g_tau_x_s} is 
\begin{equation}
\label{eq:g_tau_xc_sc}
	\|g_{\tau} - g_{\tau,x_c, s_c}  \| \leq \varepsilon_{x_c}   + \varepsilon_{s_c}. 
\end{equation}

We find that   the segment number of the order 
\begin{equation}
\label{eq:nu_order_main}
    \nu =  \mathcal{O}((\lambda \tau x_c)^{1+\frac{1}{4k+1}} )
\end{equation} suffices to ensure the LCU construction error up to~$\varepsilon$. The detailed segment number along with the constants is provided in Theorem 5 in Supplementary~Sec. III.
The required segment number for eigenenergy estimation is shown in Theorem 7 in Supplementary~Sec. III. 
Together with Proposition 3 in Supplementary~\autoref{sec:gate_complexity_analysis}, one proves \autoref{thm:observ_estimation_main}.

For lattice models, the segment number can be further reduced to $\nu = \mathcal{O}\left(n^{\frac{2}{4k+1}} (\tau x_c)^{1+\frac{1}{4k+1}} \varepsilon^{-\frac{1}{4k+1}}\right)$. This result leads directly to the proof of \autoref{thm:observ_estimate_gate_Lattice_main}.

\subsection{Observable estimation }
\label{sec:methods_observable_estimation}

The previous section focused on error propagation in the LCU formulation.
Here, given the constructed LCU formula, we analyse the estimation of eigenstate properties and the associated error.
We first show how to measure the expectation value of the observable $O$. Denote the initial state on which the spectral operator acts as $\rho$. The expectation value of the observable $O$ on the normalised state can be expressed by
\begin{equation}
	\braket{O} = \frac{N(O)}{D} = \frac{\tr(g \rho g^{\dagger}O)}{\tr(g \rho g^{\dagger})}. \nonumber
\end{equation}
The numerator can be expressed by
\begin{equation}
    N(O) = \mu^2 \sum_{ij} \Pr(i) \Pr(j) e^{i (t_i - t_j) \omega } \tr(U(t_i) \rho U^{\dagger}(t_j) O).\label{eq:N_O_LCU}
\end{equation}

Denote the expectation value of the estimator of the numerator over measurement outcomes with $U(t_i)$ and $U(t_j)$, $\tr(U(t_i) \rho U(t_j)^{\dagger} O)$, as $\bar N_{ij}(O)$. The expectation value of $N(O)$ is given by sampling over the distribution
\begin{equation}
	\bar N(O) = \mu^2 \mathbb E_{ij} \bar N_{ij}(O)
\end{equation}

When considering finite gate complexity and sample complexity, the eigenstate property is estimated by
\begin{equation}
\label{eq:observ_expec_finite}
	\hat{O}_{\tau, x_c, s_c} = \frac{\hat N_{\tau, x_c, s_c}(O)}{\hat D_{\tau, x_c, s_c}}. 
\end{equation}

The selection of $\tau, x_c, s_c$ can be determined by analysing the error of $\hat{O}_{\tau, x_c, s_c}$ compared to the ideal value, which can be analysed using the operator distance detailed in the above section.
\autoref{prop:RLCU_err_observ_bound} shows the error in observable estimation when we are given a  $(\mu,\varepsilon)$  random sampling formula of~$g$ by $\tilde g$ in \autoref{eq:randomLCU_main}.

 
\begin{proposition}[Observable estimation using the composite LCU formula] \label{prop:RLCU_err_observ_bound}
The estimation error of $N_g(O):= \tr(g\rho g^{\dagger} O)$  is bounded by
$ 
\varepsilon_N := |\hat N_{\tilde g}(O) - N_g(O)| \leq   \|O\|(2\mu^2 \varepsilon + \varepsilon_n),
$
with $N_s =\mu^4\ln(2/\vartheta)/\varepsilon_n^2$ samples and a success probability $1-\vartheta$. 
 The error for the denominator is bounded by 
	$\varepsilon_D := |\hat D_{\tilde g}  - D_{g} | \leq 2\mu^2 \varepsilon + \varepsilon_n$.
The error of observable expectation $\varepsilon_O$ is
bounded by $ \varepsilon_O \leq D_g^{-1} ((\braket{O} + 1)\varepsilon_D +  \varepsilon_N)	$.

\end{proposition}

{Given \autoref{prop:RLCU_err_observ_bound}, the error of the observables can be bounded given the RLCU form.  The proof is shown in Supplementary Sec. II (where the formal version is presented).
}
Using these results of error propagation in the LCU form, we can determine the required segment number $\nu_c$ in order to achieve the desired accuracy as shown in \autoref{eq:nu_order_main}. 
Theorem 5 in Supplementary~Sec. III and Theorem 7 in  Supplementary~Sec. IV present the required segment number $\nu_c$ and the gate count (with the actual prefactors included) for achieving the desired accuracy for the two problems. They can be used to estimate the actual gate numbers needed for physical Hamiltonians.

The total gate count is determined by multiplying  $\nu_c$ and the gate count for realising $S_{2k}$ and $V_{2k}$ within each segment, denoted by $g$ hereafter (with an abuse of notation).
The gate count for lattice models is straightforward $g = \mathcal{O}(n)$. For electronic structure problem in \autoref{eq:molecular_hamil_main}, a naive realisation of $S_{2k}$ and $V_{2k}$ will result in the cost $g = \mc O (n^3)$. However, we could bring down the scaling by considering the properties of fermions.
The kinetic operator is quadratic and thus can be diagonalised by an efficient circuit transformation, either by Givens rotations or fermionic fast Fourier transform (FFFT)~\cite{kivlichan2020improved,kivlichan2018quantum,babbush2018low}.
By using the results in \cite{low2018hamiltonian}, $e^{-i m \hat{T}}$ and $e^{-i m \hat{V}}$ can be implemented with $\mc O(n \log n)$ gates with $m$ being the time length.
The compensation term will cost $\mc O(n )$ gates. Therefore, we have $g = \mathcal{O}(n\log n)$ and  the total gate count for electronic structure problem is $g_{\rm tot} = \tilde{\mc O} ( n (\lambda \Delta^{-1})^{1+\frac{1}{4k+1}}  \log \varepsilon^{-1}) $.

\subsection{Estimation by quantum circuits: general strategy and ancilla-free strategy}
\label{sec:measurement}


In this section, we discuss the circuit realisation for   \autoref{alg1_main}. 
As discussed in the main text, to measure $\braket{\vec {\mathbf{j}} | O | \vec {\mathbf{i}}}$ (introduced in \autoref{eq:observable_dynamics_main}), we can use the Hadamard test circuit with an extra ancillary qubit.
For Hamiltonians with certain symmetries, we do not need this ancillary qubit and thus do not require any controlled operation. Below, we first discuss the general cases by introducing an ancillary qubit. Then we discuss the ancilla-free composite LCU formula and the corresponding measurement schemes.

\vspace{5pt}

\noindent
\textbf{General cases.}
There are two extreme cases:  Case I $t_i = 0$ (or $t_j = 0$) and Case II $t_i = t_j$.
For $t_j = 0$, each component in \autoref{eq:observable_dynamics_main} $\braket{ \vec {\mathbf{  j}} | O | \vec {\mathbf{  i}}}$ can be measured using a Hadamard test circuit, which is the circuit depicted in the green box in \autoref{fig:cartoon}(c).

For $0 < t_j < t_i$, suppose that $U(t_j)$ is divided into $m$ segments and  $U(t_i)$ is divided into $\nu$ segments with the first $m$ segments being set the same as that of $U(t_i)$.   The quantum circuit that can measure $\braket{\vec {\mathbf{j}} | O | \vec {\mathbf{i}}}$ is shown in \autoref{fig:cartoon}.
After post-selecting $\ket{+}$ on the ancillary qubit (but before measurements on $B$), the circuit on system $B$ outputs 
\begin{equation}
    \frac{1}{2}\left(   \prod_{q =1}^{m} W_{j_q} S +   \prod_{q =1}^{\nu} W_{i_q} S \right) \ket{\psi_0}.\nonumber
\end{equation}
The case of $t_i = t_j$ is a special case in \autoref{fig:cartoon}(c1) by only implementing the circuits in the green box.

Here, we can use a single-shot measurement strategy to estimate $N(O)$ in \autoref{eq:N_O_LCU}.
Given the sequence of unitary operators obtained from sampling, the circuit in \autoref{fig:cartoon}(c1) can be used to measure $\braket{\vec {\mathbf{j}} | O | \vec {\mathbf{i}}}$. Specifically, we initialise the ancilla in $\ket{+}$, apply the controlled-unitaries, and perform the measurement on the $X$ basis, with the measurement outcome recorded as $a = \{0, 1\}$. Similarly, we repeat the process but with an inverted phase gate $S^{\dagger}$ applied before measurement, with the measurement outcome  $b = \{0, 1\}$.



One can verify that the estimator   
$ 
	\hat d =  (-1)^a + i (-1)^{b}
$ is unbiased, $\mathbb E_{a, b}  \hat d =  \braket{\vec {\mathbf{j}} | O | \vec {\mathbf{i}}}$. 
Now, take an estimator as
\begin{equation}
	\hat{v} = c^2(\mu)  e^{i \omega(t_i - t_j)} \hat{d}
	\label{eq:estimator_v_main}
\end{equation} which can be proven to be unbiased: 
 \begin{equation}
     \mathbb E_{t_i, t_j, \vec{i}, \vec{j}} \mathbb E_{a,b}\hat{v}  = \braket{\psi_0|g_{\tau}(H-\omega)Og_{\tau}(H-\omega)\psi_0}.
 \end{equation} 
 
By analysing the error dependence on the finite $\tau$, $x_c$, segment number $\nu$ in \autoref{eq:nu_order_main}, we arrive at the result of property estimation in \autoref{thm:observ_estimation_main} (see detailed description in Theorem 5 and Theorem 7). More detailed proof can be found in Supplementary~Sec. III and Sec. IV.



\vspace{5pt}
\noindent
\textbf{Ancilla-free composite LCU formula and the measurements}
In the above section, given a sampled configuration $(t_i, t_j, \vec{i}, \vec{j})$, the real and imaginary part of $\braket{\vec {\mathbf{j}} | O | \vec {\mathbf{i}}}$ can be obtained by the circuit in \autoref{fig:cartoon}(c). 
In cases of the Heisenberg model and electronic structure problems (\autoref{eq:molecular_hamil_main}), the target problem has certain symmetries $\mathcal{S}$ satisfying $[H, \mathcal{S}] = 0$.  
For symmetry-conserved systems with $[U, \mathcal{S}] = 0$, it is possible to estimate the expectation value of a unitary $\braket{\psi_0 | U | \psi_0}$  without ancilla, as proposed in \cite{o2021error,cortes2022quantum}. However, the issue here is that as the unitary operator is realised by a Trotter-LCU expansion in \autoref{eq:U_ti_LCU}, at least there exists a Pauli operator $W_{i_q}$ that does not commute with $\mathcal{S}$. 
We first briefly introduce how to measure  $\braket{\psi_0 | U | \psi_0}$ without ancilla, where the unitary is either $U = e^{-iHt}$ or $U = e^{-iHt_1} \hat O e^{-iHt_2}$, followed by the design of the unitary in \autoref{eq:U_ti_LCU} such that $[U, \mathcal{S}] = 0$.  

The expectation value $\braket{\psi_0 | U | \psi_0}$ can be expressed as $\braket{\psi_0 | U | \psi_0} = r e^{i\theta}$.
If $\ket{\psi_0}$ is a product state, or can be prepared as $\ket{\psi_0} = U_p \ket{0}^{\otimes n}$, the amplitude  of the expectation value $r = |\braket{ \psi_0 |   U    | \psi_0}|$ can be obtained by  measuring $U_p^{\dagger}U U_p \ket{0}^{\otimes n}$ in the computational basis.
The next step is to obtain the phase $\theta$, for which we make use of the fact that the unitary operation conserves the symmetry of $\mathcal{S}$, $[U, \mathcal{S}] = 0$.
To do so,  a reference state is introduced $\ket{\psi_{\mathrm{Ref}}}$, which lies in a different sector of the initial state, such that $\braket{\psi_{\mathrm{Ref}} | U | \psi_0} = 0$. Here, $\braket{\psi_{\mathrm{Ref}}| U | \psi_{\mathrm{Ref}}}$ can be computed classically. 
In addition, the following state can be prepared
$$
    \ket{\phi_0} = U_p \ket{\psi_0} = \frac{1}{\sqrt{2}} (\ket{\psi_{\mathrm{Ref}}} + \ket{\psi_0}).
$$
As $ r_s=|\braket{ \psi_0 | U_p^{\dagger} U U_p  | \psi_0}|$ can be measured on a computational basis, the phase $\theta$ can be computed. 

Note the there will be double solutions for determining  $\theta$ in general and may be hard to distinguish. One may track the dynamics of expectation values to determine the correct phase.

The preparation of the superposition state is remarkably simple, requiring only one additional \textsc{cnot} gate compared to the original state, which is used for real demonstrations on IBM devices. Discussions can be found in Supplementary~Sec. III. Now, to measure $\braket{\vec {\mathbf{j}} | O | \vec {\mathbf{i}}}$, we use the circuit in \autoref{fig:cartoon}(c) to generate 
\begin{equation}
    U_p^{\dagger} \left(    \prod_{q =1}^{m} W_{i_q}^{\dagger} S^{\dagger} \right) O \left(  \prod_{q =1}^{\nu}  W_{i_q} S\right) U_p \ket{\psi_0},\nonumber
\end{equation}
and then measure on a computational basis. 


Recall that our algorithm consists of implementing the Trotter formula $S$ and the Trotter remainder $V$.
For the implementation of $S$, we need to pair the terms to ensure each individual term does not break the symmetry.
Take the quantum chemistry problems, $H = \hat  T + \hat  V$, for example.
We group each term $\hat{T}_{ij} = h_{ij}  (\hat a_i^{\dagger} \hat  a_j + \hat  a_j^{\dagger} \hat  a_i)$ in the kinetic term $\hat T$, which conserve the particle number. 
Its exponentiation $e^{-i \theta \hat{T}_{ij}}$ can be directly implemented, which is similar to the implementation for the Heisenberg model case.
For the potential term $\hat{V}$, we can group its component $g_{ijkl} ( \hat a_i^{\dagger}  \hat a_j^{\dagger}  \hat a_k  \hat a_l +  \hat a_l^{\dagger}  \hat a_k^{\dagger}  \hat a_j  \hat a_i)$ and implement its exponentiation similarly.

Note that operators are typically indicated by using the hat notation. In this work, to avoid confusion with estimators, we reserve the hat notation exclusively for fermionic operators and estimators.
In this paper, the symbol $V$ refers to the Trotter remainder. The potential term in molecular Hamiltonians is denoted as $\hat{V}$ and is distinguished from the Trotter remainder by the hat notation.

The issue is that if we expand the Trotter remainder into  {Pauli bases},
then applying each individual Pauli term will break the symmetry.
One may think of expanding $V(m)$ into
$ 
    V(m) =  \sum_{k,l,i,j} \hat{T}_i^k \hat{V}_j^l m^{k+l}
$
where $\hat{T}_i$ and $\hat{V}_j$ are components of the kinetic and potential terms, respectively.
Although each term conserves the symmetry, the issue is that $\hat{T}_i$ or $\hat{V}_j$ is not unitary, and cannot be implemented directly.
To address this issue, we decompose the Hamiltonian into \textsc{swap}  and tensor products of Pauli-$Z$ operators rather than Pauli operators. 

The explicit formulation for the Fermi-Hubbard model and quantum chemistry problems has been presented in the main text.
Although the Hamiltonian is reformulated in the basis of \textsc{swap}  and Pauli-$Z$ operators, the implementation of the Trotter formulae remains the same as that in the Pauli basis.
Only the compensation term will be different from the case with one ancillary qubit. As shown in \autoref{sec:gate_complexity_analysis}, the ancilla-free measurement strategy shows advantages when qubit connectivity is restricted.

\begin{table*}[ht!]

\centering
\begin{tabular}{c| c c | c | c c} 
\hline
\hline
Hamiltonians           & NN (ancilla-free) & NN (1-ancilla)   & NN (QETU)  \cite{dong2022ground}   & arbitrary    \\
\hline
Heisenberg model  &  $\boldsymbol{d} = \mc O (\boldsymbol{1})$ &  $d = \mc O (n)$ &  $d = \mc O (n)$    & $d = \mc O (1)$       \\
    &   $g = \mc O (n) $  & $g=\mc O(n)$ &  $g = \mc O (n)   $   &  $g = \mc O (n)   $   \\
    \hline
Electronic structure    & $\boldsymbol{d} = \mc O (\boldsymbol{n})$ & $ {d = \Omega (n^2)}$   & $d = \Omega (n^2)$ & $d = \mc O (n)$ &  \\
 (\autoref{eq:molecular_hamil_main})   &   $g = \mc O (n^2) $ & $g = \mc O (n^3) $    & $g = \mc O (n^3) $ & $g = \mc O (n \log n) $ \\
       &  & $\boldsymbol{d} = \mc O (\boldsymbol{n})$  ($[n/2]$ ancilla)   &   &      \\
\hline
Molecular Hamiltonians  & $d = \mc O (n^2)$ & $d = \Omega (n^3)$ &  $d = \Omega (n^3)$ & $d = \mc O (n^2)$  \\
(\autoref{eq:molecule_general_MCH}) &   $g = \mc O (n^3) $ &  $g = \mc O (n^4)   $ &  $g = \mc O (n^4) $ & $g = \mc O (n^2 \log n) $    \\
 
\hline
\hline
\end{tabular}
\caption{ \textbf{Circuit depth $d$ and gate count $g$ in each segment with nearest-neighbour (NN) or arbitrary architecture.}
The ancilla-free or the 1-ancilla schemes are compared when using our method or QETU method. 
}
\label{table:hamil_dep_cost}
\end{table*}

\subsection{Analysis of circuit depth and gate complexity }
\label{sec:gate_complexity_analysis}

We first discuss the gate complexity for general Hamiltonians. Then we analyse the circuit depth and gate complexity for various Hamiltonians of physical relevance when using either our method or other early fault-tolerant quantum algorithms. 
A comparison with a focus on circuit depth for each elementary block is shown in \autoref{table:hamil_dep_cost}.

\noindent
\textbf{Gate complexity for general cases}
The maximum resource appears in the case of $t_j = 0$ in \autoref{fig:cartoon}(c), represented by the orange box.  In this case we need to implement one controlled Trotter formula plus two controlled compensation terms specified by ${V_i}$. The minimum resource appears in the case when $t_j = t_j$, in which case we need to implement the Trotter formula plus two controlled compensation terms specified by ${V_i}$. This is because a controlled Trotter is more costly than a controlled  compensation operation, as analysed in \autoref{prop:gate_cost_observ_dynamics}. 
The resources in other cases with $t_j < t_i$ will be between Case I and Case II.


Define the following characters of the Hamiltonian,
$ \operatorname{wt}(H) :=\sum_{l=1}^{L} \mathrm{wt}\left (P_{l}\right )$, and $ \mathrm{wt}_{m}(H) :=\max _{l} \mathrm{wt}\left (P_{l}\right ) $.
Here, wt$\left (P_{l}\right )$ indicates the weight of the Pauli operator $P_{l}$, i.e., the number of $\{X, Y, Z\}$ terms in $P_l$.
The following result gives the required elementary gates

\begin{proposition}[Elementary gate count]
\label{prop:gate_cost_observ_dynamics}
Suppose   the composite LCU is realised by 
\autoref{alg1_main} and the Hamiltonian dynamics $U = e^{-iHt}$ is realised by \autoref{eq:U_ti_LCU} with maximum segment number $\nu$.
The gate complexity of the circuit instance in eigenstate property estimation is the following: \textsc{cnot} gate number: $\nu(4 \times 5^{k-1} \wt(H) + 4 \wtm (H) + 2 \min(n, s_c \wtm)) - 2L$, single-qubit Pauli rotation gate number: $(2L + 4)\nu $. 

\end{proposition}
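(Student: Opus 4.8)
The plan is to reduce the cost of the entire circuit instance to the cost of a single time segment and then assemble the $\nu$ segments, tracking boundary corrections separately. By \autoref{eq:U_ti_LCU} each instance of $U(t_i)$ is a product of $\nu$ blocks of the form $W_{i_q} S$, and the worst-case discussion preceding the proposition identifies $t_j=0$ as the most expensive case, where a block must be realised in controlled form: one controlled Trotter operator $S$ together with the associated controlled compensation operators $W$. I would therefore first count the CNOTs and single-qubit rotations in one such controlled block, and only at the end multiply by $\nu$ and collect the global correction.

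The elementary unit is the Pauli exponential $e^{-i\theta P}$. Writing $P = C_P^{\dagger} Z_1 C_P$ with $C_P$ a Clifford consisting of a CNOT ladder and single-qubit basis changes, one obtains the standard decomposition into $2(\wt(P)-1)$ CNOTs surrounding a single $\Rz(\theta)$; in controlled form the conjugating Clifford stays uncontrolled and only the central rotation is promoted to a controlled-$\Rz$, costing $2\wt(P)$ CNOTs and two rotations. Summing over the Hamiltonian terms replaces $\wt(P)$ by $\wt(H)=\sum_l \wt(P_l)$ from \autoref{eq:H_character_main}, which is the origin of all the $\wt(H)$- and $\wtm(H)$-dependence in the statement.

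I would then insert this unit into the two structures forming a block. For the Trotter factor I would use the Suzuki recursion, in which $S_{2k}$ is built from $5^{k-1}$ copies of the second-order formula $S_2$, each a forward and a backward sweep over the $L$ terms; counting $2\wt(H)$ CNOTs per controlled sweep gives, over two sweeps and $5^{k-1}$ copies, the $4\times 5^{k-1}\wt(H)$ CNOT contribution, while counting one rotation per exponential (with same-Pauli exponentials merged at seams) gives the Trotter part of the rotation count. For the compensation factor I would use the explicit LCU form in \autoref{eq:TrotterError_LCU_V} and \autoref{eq:DTS2kNew}: each sampled $W$ is a Pauli rotation of weight at most $\wtm(H)$ times a tensor product of single-qubit Paulis supported on at most $\min(n, s_c \wtm)$ qubits, since the order-$s_c$ truncation of $V_{2k}$ is a product of at most $s_c$ Hamiltonian terms and cannot exceed the $n$ qubits. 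The two controlled compensation operators then contribute $4\wtm(H) + 2\min(n, s_c \wtm)$ CNOTs and the additive four controlled-$\Rz$ rotations, which supply the ``$+4$'' in the rotation count.

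The main obstacle is the boundary bookkeeping needed to fix the exact constants and the global $-2L$ correction. Pauli exponentials with shared support — the seams between consecutive sweeps of the Suzuki recursion, and the junction between one segment's trailing Pauli string and the next segment's leading Trotter exponential — let adjacent CNOT ladders merge or cancel, and one must show these cancellations remove precisely $2L$ CNOTs across the whole circuit while determining the final rotation count. I would handle this by writing the full circuit as a concatenation of $C_P$-conjugated rotations, collapsing the Clifford layers at each junction, and verifying that only one sweep's worth of ladder endpoints survives. The other delicate point is justifying the worst-case support $\min(n, s_c \wtm)$ of the compensation Pauli strings directly from the truncation order $s_c$, so that the $2\min(n, s_c \wtm)$ term is tight.
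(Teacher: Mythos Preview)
Your approach is essentially the paper's: both count per segment in the worst case $t_j=0$ (one controlled Trotter plus two controlled compensation operators), both use the standard decomposition of a controlled Pauli exponential into a $2(\wt(P)-1)$-CNOT ladder around a controlled $\Rz$ (itself two CNOTs and two rotations), both invoke the $2\times 5^{k-1}$ exponentials in $S_{2k}$, and both bound the compensation Pauli string by $\min(n,s_c\wtm)$ exactly as you do.

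The one place you depart from the paper is your explanation of the $-2L$. You propose a global boundary-merging argument in which CNOT ladders at segment seams cancel. The paper does nothing of the sort: it simply writes the per-segment controlled-Trotter CNOT count as $2\times 5^{k-1}(2\wt(H)-2L)$, with the $-2L$ arising from $\sum_l 2(\wt(P_l)-1)=2\wt(H)-2L$ in each sweep, and then adds the compensation contribution and multiplies by $\nu$. Indeed, in the paper's own derivation the $-2L$ sits \emph{inside} the factor of $\nu$, not outside as in the proposition statement. So your cross-segment cancellation machinery is unnecessary for the argument the paper gives and is not how the paper obtains the term; you can drop that step and simply sum the per-segment pieces.
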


We give more details in the following.	
Overall, the circuit within each segment consists of one controlled Trotter plus controlled compensation terms.
Each of the components within the Trotter formula takes the form of $e^{-i H m}$ with evolution time $m = t/\nu$.
The cost by Trotter is given by Lemma 3.

A controlled-Pauli-rotation gate, Ctrl-$\exp(-i P \theta/2)$ with Pauli operator $P$, can be realised by $2(\wt(H)-1)$ \textsc{cnot} gates, a controlled single-qubit  $R_z(\theta)$ gate with rotation angle $\theta$, and some single-qubit Clifford gates. Furthermore,  the controlled $R_z(\theta)$ gate can be decomposed into two single-qubit $Z$-axis rotation gates and two \textsc{cnot} gates.
Therefore, the cost for the controlled Trotter operator Ctrl-$S_{2k}(m)$, in each segment, the number of \textsc{cnot} gates are
$ 
	2 \times 5^{k-1} (2 \wt (H) -2L)
$
and $2L$ controlled single-qubit Pauli rotations.

The compensation term $V$  consists of a controlled multi-qubit Pauli rotation operator and some controlled Pauli operators. The controlled multi-qubit Pauli rotation operator can be realised by $2 \wtm (H)$ \textsc{cnot} gates and two single-qubit Pauli rotations.
The Pauli gates are randomly drawn
from $\{P_l\}_l$ according to the corresponding probability distribution.
In the worst case, the gate sequence length is the truncated value $s_c$, and thus the \textsc{cnot} gate number is bounded by $(s_c \wtm(H))$. Note that saturation occurs when the number of compensation Pauli operations reaches 
$n$.
To sum up, for the compensation term, the number of \textsc{cnot} gates is upper bounded by
\begin{equation}
\label{eq:compensation_CNOT_gate}
	2 \wtm (H) + \min(n, s_c \wtm) \leq 3n
\end{equation}
and 2 single-qubit Pauli rotations.
As there is a saturation of the gate count for the Trotter-error-compensation indicated by \autoref{eq:compensation_CNOT_gate}, one can take $s_c$ to be infinity in deriving the asymptotic scaling in gate complexity in \autoref{thm:observ_estimation_main} and \autoref{thm:observ_estimate_gate_Lattice_main}.

{One can thus check that Case I is more costly than Case II in \autoref{fig:cartoon}.}
For observable dynamics described in \autoref{eq:observable_dynamics_main}, there are two compensation terms, but simply one controlled Trotter Ctrl-$S_{2k}$, as illustrated in \autoref{fig:cartoon}(c1).
Together with the cost for the Trotter formula, the number of \textsc{cnot} gates is upper bounded by
\begin{equation}
	\nu (4 \times 5^k \wt (H) -2 L + 4 \wtm (H) + 2 \min(n, s_c \wtm)   ).
\end{equation}
The number of single-qubit Pauli rotations is $(2 L +4)\nu $.

The gate complexity with the ancilla-free strategy can be analysed in a similar way. Its key advantage is better depth scaling as it allows circuit compilation using only nearest-neighbour gates without overhead. This is discussed in detail in \autoref{sec:gate_complexity_analysis}.

\vspace{5pt}

\noindent
\textbf{Hamiltonian-specific circuit compilation and depth analysis.} The circuit depth and gate count for \autoref{problem:observ} and \autoref{problem:eigenenergy} using either the ancilla-free scheme and the one-ancilla scheme are displayed in \autoref{table:hamil_dep_cost}. Below, we will discuss the result when the qubit connectivity is taken into account.


We consider Hamiltonians including (1) Heisenberg models, (2) electronic structure in the plane wave dual basis in \autoref{eq:molecular_hamil_main}, and (3) second-quantised molecular Hamiltonian in the form of
$ 
    {H}=\sum_{i,j=1}^n h_{ij} \hat{a}_i^\dagger \hat{a}_j + \frac{1}{2}\sum_{i,j,k,l,=1}^n g_{ijkl}\hat{a}_i^\dagger \hat{a}_j^\dagger \hat{a}_k \hat{a}_l.
$
A common strategy for efficiently representing the Hamiltonians with fewer terms and low weights~\cite{berry2019qubitization,von2021quantum,lee2021even,motta2021low} is to reformulate the two-body fermionic operators as a sum of squared one-body operators by the Cholesky decomposition~\cite{motta2021low}, reformulated as 
\begin{equation}
    {H} = \hat{K}+\frac{1}{2} \sum_{\ell}^{\Gamma} \hat{L}_{\ell}^{2},
 \label{eq:molecule_general_MCH}
\end{equation}
where $\hat{K}$ and $ \hat{L}_{\ell}$ are the one-body terms, the number of terms is $\Gamma = \mc O (n)$ \cite{lee2021even}, and the constant has been removed. 


\begin{figure*}[tb!]
\centering
\includegraphics[width=0.75\linewidth]{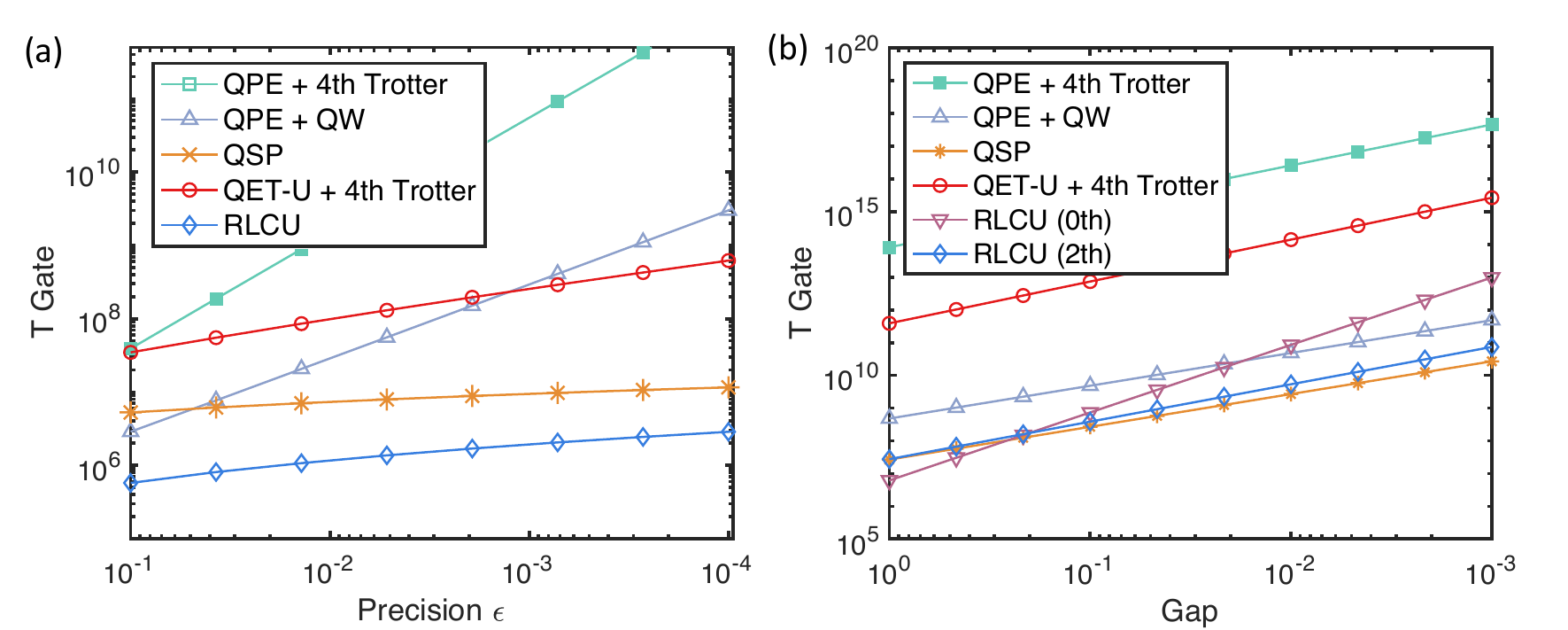}
\caption{ 
\textbf{The T gate count  under the same setup of \autoref{fig:SysSize_NC} in the main text.} (\textbf{A}) The T gate count dependence on precision $\varepsilon$ for the Heisenberg model.
(\textbf{B}) The T gate count dependence on the energy gap for the P450 molecule. As noted in the main text, the simulation of fermionic dynamics with Trotterisation is not optimised in this work. The total weight $\wt(H) = 591$. 
}
\label{fig:Heis_NC_Eps}
\end{figure*}

\begin{figure}[t!]
\centering
\includegraphics[width=0.8\linewidth]{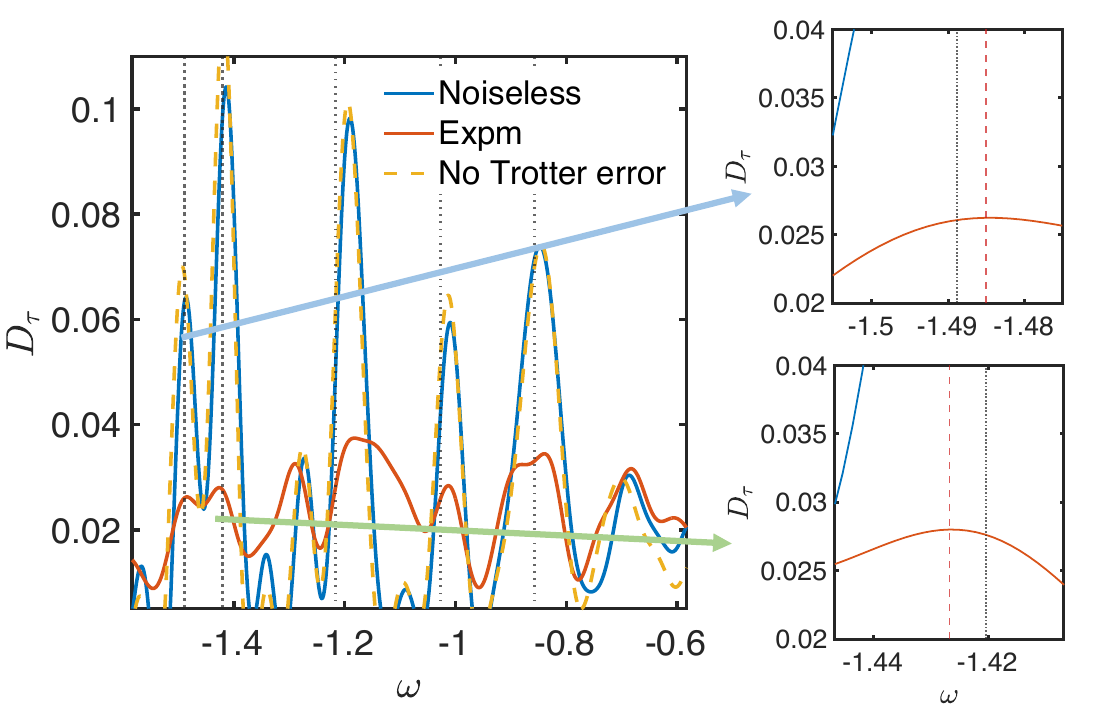}
\caption{ \textbf{
Searching ground state and first excited state energies of the XXZ Heisenberg Hamiltonian on IBM quantum devices.}
The parameters of the Hamiltonian are $J_x = J_y = 0.8$, $J_z = 1$, and $h_z = 0.2$ in \autoref{eq:Hamil_XXZ} in the main text. The initial state is chosen as a Neel state. The reference state is prepared with only one additional \textsc{cnot} gate compared to the original state. The imaginary time and the cutoff are set as $\tau = 4$ and $x_c = 2$.  The figure on the right provides a zoomed-in view of a narrower energy range, highlighting the estimated ground- and first-excited energies shown in the left panel with errors $0.003$ and $0.006$, respectively. 
}
\label{fig:IBM_Methods}
\end{figure}

For general cases, the circuit within each segment consists of two parts: the controlled implementation of the Trotter formula $S$ and the Trotter remainder $V$.
The only difference is the compensation term, which uses the symmetry-conserved gates as the elementary gates, i.e.,  \textsc{swap}, Pauli-$Z$ gates and their exponentiations.
We observe that to realise the compensation term,  at most $d = \mathcal{O}(n)$ and $g = \mathcal{O}(n)$ are required in the worst case, which is fewer than the implementation of the Trotter formula in all physical Hamiltonian cases. \autoref{table:hamil_dep_cost} displays the gate complexity (including both \textsc{CNOT} gates and non-Clifford gates) within each segment when taking the qubit connectivity into account. We elaborate on these results below.

We first discuss the Heisenberg-type Hamiltonians.
Each component $e^{-i \delta t_i (J_1 X_{i}X_{i+1}+ J_1 Y_{i}Y_{i+1}+ J_2 Z_{i}Z_{i+1})}$ with time duration $ \delta t_i$ can be realised by 3 \textsc{cnot} gates and 3 single-qubit Pauli rotation gates.
When restricted to a linear NN architecture, $d_{\textrm{Trotter}} = \mathcal{O}(1)$.
In the ancilla-free scenario, the compensation term is $\SWAP_{i,i+1}$, Pauli-$Z$ operators and the exponentiations, resulting in $d = \mathcal{O}(1)$ and $g = \mathcal{O}(n)$.
In the one-ancilla scenario, since we need to implement controlled two-qubit Pauli rotations, when restricted to linear NN architecture, we need to swap the ancillary qubit sequentially to qubit $1, ...n$ and then perform the controlled rotation. Then, we need to undo the swap to change back the ordering of the qubits. The compensation term can be done when the ancillary qubit is adjacent to the target controlled qubit.
Compared to the ancilla-free scheme, we cannot do it in parallel, resulting in $d = \mathcal{O}(n)$ and $g = \mathcal{O}(n)$.
When removing the restriction of the qubit connectivity, both scenarios have $d = \mathcal{O}(1)$ and $g = \mathcal{O}(n)$.

For the electronic structure problem in \autoref{eq:molecular_hamil_main}, it can be grouped into ${H} = \hat{T} + \hat{V}$. Below, we discuss the cost for the ancilla-free scheme and the one-ancilla scheme with a linear NN architecture.
The kinetic term is a non-interacting term which can be diagonalised as $\hat{T} = \hat{C} (\sum_i \alpha_i \hat{n}_i) \hat{C} ^{\dagger}$.
For the implementation of $e^{- i \theta \hat{T}}$, an NN architecture requires $d = \mathcal{O}(n)$ and $g = \mathcal{O}(n^2)$.
For the interacting potential term $e^{- i \theta\hat{V}}$ which consists of $\mathcal{O}(n^2)$ terms, an NN architecture requires $d = \mathcal{O}(n)$ and $g = \mathcal{O}(n^2)$. Therefore, $d_{\textrm{Trotter}} = \mathcal{O}(n)$ and $g_{\textrm{Trotter}} = \mathcal{O}(n^2)$. 
For the compensation term, since there is only one term in the form of $\SWAP_{i,j} \otimes_{k = i+1}^{j-1} Z_k$ or $  \otimes_{k = i+1}^{j-1} Z_k$ or $  \otimes_{k = i}^{j} Z_k$, which can be implemented at most $d_{\textrm{Remainder}} = \mathcal{O}(n)$ and $g_{\textrm{Remainder}} = \mathcal{O}(n)$. We note that for the 2D Fermi-Hubbard model, which is a special case of \autoref{eq:molecular_hamil_main}, the compensation term can be implemented with $d_{\textrm{Remainder}} = 1$ given a planar nearest-neighbour architecture.
To sum up, an NN architecture requires $d = \mathcal{O}(n)$ and $g = \mathcal{O}(n^2)$.

We then discuss the gate and depth cost when we need a control ancilla (used in both the so-called one-ancilla scheme and QETU). The first observation is that we do not need to implement controlled $\hat{C}$. The rest of the diagonal part is similar to the case of the Heisenberg model, which requires $d = \mathcal{O}(n)$ and $g = \mathcal{O}(n)$.
The implementation of $e^{- i \theta \hat{T}}$ requires $d = \mathcal{O}(n)$ and $g = \mathcal{O}(n^2)$.
However,   the implementation of $e^{- i \theta\hat{V}}$ {requires $d = \Omega (n^2)$ and $g = \Omega (n^2)$ in an NN architecture.}
The challenge of realising the controlled $\hat{V}$ with one control qubit is that at each time, the control qubit can only control one rotation $R_z(\theta_{ij})$ given a pair of $(i,j)$. Therefore, the controlled rotation cannot be realised in parallel. Naively, we could first apply a cyclic swap operation and then realise each individual $R_z(\theta_{ij})$. This results in the total depth $d = \mathcal{O}(n^3)$ and gate count $g= \mathcal{O}(n^3)$.
A lower bound on the depth complexity for realising the controlled $\hat{V}$
may be given by $\Omega (n^2)$.
The compensation term requires at most $d = \mathcal{O}(n)$ and $g = \mathcal{O}(n)$.
Therefore, an NN architecture requires $d = \Omega(n^2)$ and $g = \mathcal{O}(n^3)$. Note that the estimate for gate complexity with 1-ancilla may not be optimal. 

The depth can be reduced to $\mc O (n)$ provided $[n/2]$ ancillary qubits. The result is summarised in \autoref{prop:control_V_NN}.  

\begin{proposition}
\label{prop:control_V_NN}
When restricted to linear nearest-neighbour qubit connectivity, a controlled exponentiation Ctrl-$e^{i \theta {H}}$ with molecular Hamiltonian ${H}$ in \autoref{eq:molecular_hamil_main} can be implemented in depth $\mathcal{O}(n)$ using $[n/2]$ ancillary qubits. The circuit implementation is illustrated in   \autoref{fig:control_V_NN}.  When restricted to only one ancillary qubit, the circuit depth is $\Omega(n^2)$.
\end{proposition}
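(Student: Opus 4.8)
The plan is to prove the two bounds separately: the $\mathcal{O}(n)$ upper bound with $[n/2]$ ancillas, realised by the construction of \autoref{fig:control_V_NN}, and the $\Omega(n^2)$ lower bound with a single ancilla. For the upper bound I would first split $H = \hat T + \hat V$ into its commuting kinetic and potential groups, each of which is exactly exponentiable (the full controlled Trotter step being a constant-length product of these), so it suffices to bound Ctrl-$e^{i\theta \hat T}$ and Ctrl-$e^{i\theta \hat V}$ individually. The kinetic part is quadratic, hence $e^{i\theta \hat T} = \hat C e^{i\theta D}\hat C^{\dagger}$ with $D = \sum_i \alpha_i \hat n_i$ diagonal and $\hat C$ a free-fermion (Givens/FFFT) network of depth $\mathcal{O}(n)$ on a line~\cite{kivlichan2018quantum,babbush2018low}; since $\hat C$ and $\hat C^{\dagger}$ cancel when the control is off, only the diagonal rotation $e^{i\theta D}$ must be controlled. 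After conjugating by $\hat C$, every remaining controlled gate --- those from $\hat T$ and all of $\hat V = \sum_{p \neq q} V_{pq}\hat n_p \hat n_q$ --- is a controlled-diagonal rotation.

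The crux of the upper bound is to realise these controlled-diagonal rotations in depth $\mathcal{O}(n)$. The key observation is that a controlled-diagonal gate reads its control only in the computational basis, so the control value may be duplicated onto the $[n/2]$ ancillas by a CNOT cascade: for any basis component $\ket{c}$ of the control, the cascade produces $\ket{c}$ on every ancilla --- exactly the resource needed to apply many controlled phases at once --- and is uncomputed at the end, all in depth $\mathcal{O}(n)$ on a line. I would then combine this fan-out with a linear nearest-neighbour swap network in the spirit of~\cite{kivlichan2018quantum}: over $n$ rounds of parallel nearest-neighbour swaps, every pair $(p,q)$ is brought into adjacency exactly once, and in each round the $\mathcal{O}(n)$ simultaneously adjacent pairs are each serviced in parallel by the neighbouring ancilla carrying the control, applying the corresponding Ctrl-$e^{i\theta V_{pq}\hat n_p \hat n_q}$ in $\mathcal{O}(1)$ depth per round. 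The interleaved placement of one ancilla beside each adjacent pair is precisely the geometry of \autoref{fig:control_V_NN}(c,d). Adding the $\mathcal{O}(n)$ cost of the fan-out, its inverse, and the $\hat C,\hat C^{\dagger}$ layers yields total depth $\mathcal{O}(n)$ with $[n/2]$ ancillas.

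For the lower bound I would argue by congestion at the single control qubit $c$. I would first note that a gate-count bound alone is insufficient: $\hat V$ has $\Theta(n^2)$ independent coefficients, forcing $\Omega(n^2)$ elementary gates, but on a line of $\mathcal{O}(n)$ qubits this only gives depth $\Omega(n)$. The stronger bound comes from the fact that each term $\hat n_p \hat n_q$, once controlled, induces a genuine three-body phase on $(c,p,q)$ applied iff $c = n_p = n_q = 1$; such a phase cannot be generated by nearest-neighbour two-qubit gates unless $c$ (or a qubit currently carrying its value) participates in a gate coupling it to the region of that pair. With only one ancilla there is no room to hold a second copy of the control, so at any layer the control information resides on a single site, which can be involved in at most one two-qubit gate per layer. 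Since the $\Theta(n^2)$ distinct controlled pairwise phases each demand at least one such control-incident gate, and these cannot be shared across distinct pairs, the control qubit must participate in $\Omega(n^2)$ gates, forcing depth $\Omega(n^2)$.

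The main obstacle I expect is making the lower-bound congestion argument fully rigorous: one must show that the controlled phase for each pair genuinely requires a distinct gate incident to the control, ruling out schemes that amortise several pairwise phases through one interaction or exploit cancellations across layers. I would formalise this with an information-flow/light-cone argument tracking how the dependence of the output unitary on the control reaches each pairwise interaction, together with the observation that the $\{V_{pq}\}$ may be treated as free parameters, so the $\Theta(n^2)$ phases are functionally independent and cannot be collapsed into fewer control-incident gates.
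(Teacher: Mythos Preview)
Your proposal is correct and follows essentially the same approach as the paper. The upper bound construction you describe --- diagonalising $\hat T$ by a linear-depth Givens/FFFT network so that only diagonal rotations need be controlled, then fanning out the control onto the $[n/2]$ ancillas via a CNOT cascade and combining this with a linear swap network so that each of the $\Theta(n^2)$ pairwise rotations is serviced in $\mathcal{O}(1)$ depth by a locally adjacent ancilla copy --- is exactly the construction depicted in \autoref{fig:control_V_NN}, and the paper's ``proof'' is in fact just a reference to that figure.

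For the lower bound the paper gives only the same informal congestion argument you sketch: with a single control qubit, at each layer the control can participate in at most one gate, and since the $\Theta(n^2)$ controlled $R_z(\theta_{pq})$ rotations cannot be parallelised, depth $\Omega(n^2)$ follows. The paper does not make this rigorous either --- it simply asserts the bound in the surrounding discussion --- so the gap you flag (ruling out amortisation or cancellation schemes, and handling the possibility that system qubits themselves could temporarily carry control information) is genuine, but it is a gap shared by the paper rather than a defect in your proposal relative to it. Your suggestion to treat the $\{V_{pq}\}$ as free parameters and argue via functional independence is a reasonable route toward closing it.
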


A graphic proof is shown in \autoref{fig:control_V_NN}. Specifically, 
\autoref{fig:control_V_NN}(a,b) shows the overall structure of the circuit with $\mc O(n)$ depth.
\autoref{fig:control_V_NN}(a) shows how to copy the classical information of $A_1$ to the rest of ancilla and finally give it back to $A_1$. 
By using the circuit (example) illustrated in \autoref{fig:control_V_NN}(c) for 1D architecture and \autoref{fig:control_V_NN}(d) for 2D architecture, the controlled-$H$ can be realised with only NN operations.

When considering an arbitrary architecture, the gate count can be further reduced from $\mathcal{O}(n^2) $ to $\mathcal{O}(n \log n)$ using the results in \cite{low2018hamiltonian}. Note that in \cite{childs2021theory,kivlichan2018quantum,babbush2018low}, the potential term
$\hat V$ is a symmetric translationally-invariant two-body coupling term, specifically $\hat V_{p q}=\sum_{\nu \neq 0} \frac{2 \pi}{\Omega k_\nu^2}  \cos \left(k_\nu \cdot r_{p-q}\right) \hat{n}_p \hat{n}_q$ where  momentum modes are defined as  $k_\nu=2 \pi \nu / \Omega^{1 / d}$, $r_m =m (2 \Omega / N)^{1 / d}$ with momemtum difference $m = p-q$, and the computational cell volume $\Omega$. This reduction from $\mathcal{O}(n^2)$ to  $\mathcal{O}(n \log n)$ uses the translationally-invariant property.
The exponentiation of $e^{-i x \hat{T}}$ can be implemented by FFFT,  $\textrm{FFFT} e^{-i \sum_i t_i \hat{n}_i} \textrm{FFFT}^{\dagger} $, which can be implemented with $g = \mathcal{O}( n \log n)$  and $d = \mathcal{O}(\log n)$. The exponentiation of  $e^{-i x \hat{V}}$ can be implemented with $d = \mathcal{O}(n)$ and $g = \mathcal{O}(n\log n)$.

In addition,   the results for molecular Hamiltonians in \autoref{eq:molecule_general_MCH} are displayed in \autoref{table:hamil_dep_cost} where the dominant cost comes from the potential terms.

Table V in Supplementary Sec. I displays the total gate and depth complexity concerning both \textsc{cnot} and T gates, which can be obtained by using the cost within each segment shown in \autoref{table:hamil_dep_cost}.
We can readily find that our method performs better in circuit depth in all these cases when qubit connectivity is restricted to an NN architecture.

In addition to the \textsc{cnot} gate count estimates in the main text, the corresponding T gate count is shown in \autoref{fig:Heis_NC_Eps}. The search for the eigenenergies of the XXZ Heisenberg model is presented in \autoref{fig:IBM_Methods}.

References in the Supplementary Materials include \cite{haah2021quantum,ross2016optimal,bocharov2015efficient,mosca2021polynomial,maslov2016advantages,childs2019faster,cai2023stochastic}.









\vspace{6pt} 
\noindent
\textbf{Acknowledgments:}
J.S. thanks Xiao Yuan for useful discussion and support on the project, and Yukun Zhang, Andrew Green for useful discussions on the manuscript.
J.S. would like to express gratitude to Georg Schusteritsch for providing the molecular Hamiltonians used in this work.
We acknowledge use of the IBM quantum cloud experience for this work.
The views expressed are those of the authors and do not reflect the official policy or position of IBM or the IBM quantum team.
\noindent\textbf{Funding:}
T.G. thanks  UKRI MR/S031545/1, and EPSRC EP/X018180/1 and EP/W032635/1.
J.S. and M.S.K. acknowledge funding from the UK EPSRC through EP/Z53318X/1, EP/W032643/1 and EP/Y004752/1, the KIST through the Open Innovation fund and the National Research Foundation of Korea grant funded by the Korean government (MSIT) (No. RS-2024-00413957).
\textbf{Author contributions:}
J.S. and P.Z. developed the theoretical aspect of the project. J.S. performed the resource estimations under the supervision of P.Z.. J.S., P.Z. and T.G. contributed to the complexity analysis. J.S. performed simulations on the IBM cloud under the supervision of M.S.K.. T.G. and M.S.K. supervised the project. 
J.S., Z.P. and T.G. wrote the original draft. All the authors contributed to the discussion and writing up the manuscript.
\noindent\textbf{Competing Interests:}
The authors declare that there are no competing interests.
\noindent\textbf{Data and materials availability:}
All data needed to evaluate the conclusions are present in the paper and the Supplementary Materials. 
The source data files and code are available on Zenodo \url{10.5281/zenodo.17854694} and GitHub \url{https://github.com/jinzhao-sun/RLCU}.

 \clearpage

\widetext
 
\renewcommand{\thesection}{S\arabic{section}}
\renewcommand{\theequation}{S\arabic{equation}}
\renewcommand{\thefigure}{S\arabic{figure}}
\renewcommand{\thetable}{S\arabic{table}}

\newpage

\setcounter{page}{1}

\section*{Supplementary Materials for "High-precision and low-depth quantum algorithm design for eigenstate problems"}

\renewcommand{\addcontentsline}[3]{\oldaddcontentsline{#1}{#2}{#3}}

\tableofcontents

\newpage
\vspace{10pt}

In \autoref{appendix:A}, we provide an overview of quantum algorithms for ground state preparation and energy estimation, followed by an overview of the theoretical results in this work. In the remaining supplementary sections, we present the technical details that support the main results of this work.
In \autoref{appendix:spectral_filter_A}, we discuss the construction of the spectral filter with composite LCU. We discuss the error in observation estimation given an LCU form.
In \autoref{sec:observ_estimate}, we discuss the building blocks of the eigenstate property estimation algorithm outlined in \autoref{alg1_main}. We will follow the proof idea illustrated in \autoref{fig:section_connection} and Methods to prove the main theorems for eigenstate property estimation in the main text.
In \autoref{sec:En_estimate}, we prove the result for eigenenergy estimation in the main text.
\autoref{sec:gate_gs_BE} analyses the circuit compilation cost for block-encoding-based methods.
\autoref{sec:gate_gs_QSP} and \autoref{sec:gate_gs_QPE} discuss the cost when using QSP and QPE.





\section{Comparison with other works} 
\label{appendix:A}
\subsection{Overview of quantum algorithms for ground state preparation}

In the main text, we have provided an overview of the quantum algorithms for ground state preparation and property estimation. Here, we reviewed the progress in this rapidly growing field.  
Among these quantum algorithms, spectral filter-based methods offer a rigorous and deterministic solution, with clear assumptions regarding the initial overlap and energy gap, which will be the major focus of this work.
In spectral filter methods, QSP has achieved near-optimal query complexity $\mathcal{O}( \Delta^{-1} \log(\varepsilon^{-1}) )$ for target precision $\varepsilon$, as one of the state-of-the-art algorithms~\cite{lin2020near}. 
While it is favourable in the long term, QSP hinges on querying the block encoding of the Hamiltonian $H$, which is challenging for noisy intermediate-scale quantum (NISQ) or early fault-tolerant quantum computing (FTQC) applications.
In the early FTQC regime~\cite{lin2021heisenberg,katabarwa2023early}, where the number of logical qubits is limited, minimising controlled operations and achieving low circuit depth are essential goals.

Considering the feature of early FTQC~\cite{lin2021heisenberg}, there has been considerable progress on quantum algorithms for ground-state property estimation~\cite{wang2023faster,lin2021heisenberg,zeng2021universal,lu2021algorithms,zhang2022computing,huo2021shallow,he2022quantum,wang2023quantum,ding2024quantum,ding2023even,wang2023faster}. 
In 2021, Lin and Tong proposed a spectral filter algorithm based on random sampling~\cite {lin2021heisenberg}, which achieved the Heisenberg limit for ground-state energy estimation. Using similar techniques, Zeng~\textit{et~al.}~\cite{zeng2021universal} and Zhang~\textit{et~al.}~\cite{zhang2022computing} showed that the time complexity can achieve logarithmic in precision $\log(1/\varepsilon)$ for the ground state property estimation problem. Note that Ref.~\cite{zeng2021universal} extends to eigenstate property estimation. Wan~\textit{et~al.} proposed the randomised algorithm for ground energy estimation, which has a relatively worse gap dependence~\cite{wan2021randomized}. 
On the other hand, following the spirit of QSP, QETU was proposed~\cite{dong2022ground} to prepare the ground state, which achieves near-optimal asymptotic scaling but avoids querying the block-encoding of the Hamiltonian by querying time evolution.
The above algorithms assume the usage of only one ancillary qubit and achieve good asymptotic sample and time complexity, which are competitive for the application of noisy intermediate-scale quantum and early FTQC.
However, these algorithms assume perfect and efficient queries to the time evolution operator $e^{-iHt}$.
It remains unclear whether the good properties of the above early FTQC algorithms can be preserved if we further expand the time evolution operator into elementary gates.
For example, if the time evolution operator is implemented using Trotterisation methods, this will eliminate the advantage of logarithmic scaling in precision. On the other hand, if we introduce advanced block-encoding-based methods to realise the time evolution operator, it needs many ancillary qubits and nonlocal controlled gates, which violates the spirit of early FTQC. 
The system-size dependence of the algorithms is rarely discussed in existing works, as it highly depends on the detailed circuit-level implementation of $e^{-iHt}$ as well as the qubit connectivity of the device.
The central objective of this work is to design full-stack algorithms with high-precision and low-depth features towards   NISQ and early FTQC applications.

\textbf{Summary of the main results.} 
We provide end-to-end gate complexity analysis of the eigenstate property and energy estimation task. Our randomised LCU approach achieves nearly logarithmic scaling on inverse precision, with improved scaling on $\Delta$ and $\lambda$ compared to oracle-free methods.  We have built up a framework based on randomised composite LCU formulae, which contains several hierarchies,  for analysing the gate complexity for the eigenstate problems. 
The comparison with other methods is shown in \autoref{table:ObservableComp} with detailed comparison in \autoref{table:ObservableComp_Methods} and \autoref{table:EnComp}. For eigenstate property estimation, we focus on the comparison for the maximum gate complexity per circuit, while in the literature, the gate complexity may refer to the total number of gates required in the whole algorithm.
For electronic structure problems specified in \autoref{eq:molecular_hamil_main} which is usually compared in the existing literature, the gate complexity scales $\tilde {\mathcal{O}} ( n(\lambda \Delta^{-1} \log \varepsilon^{-1})^{1+ \frac{1}{4k+1}}  )$, which is nearly linear in $n$ (excluding the dependence on $\lambda$) and logarithmic in inverse precision. We note that compared to~\cite{reiher2017elucidating,lee2021even,babbush2018encoding,babbush2018low}, this result is for the depth or gate count in a single-run circuit, which has worse sample complexity.

In addition to addressing gate complexity for generic Hamiltonians, the second contribution of our work is the advantages in circuit depth for various physical problems. These include lattice models and second-quantised plane-wave electronic structures with $n$ spin orbitals and $L = \mathcal{O}(n^2)$ terms in the Hamiltonian. Notably, the 2D Fermi-Hubbard model falls within the problem class. 
In theory, for problems that conserve certain symmetries, we demonstrate the ability to achieve high-precision and low-depth eigenstate property estimation without the need for ancillary qubits. To accomplish this, we design new Trotter and Trotter-error compensation circuits that maintain the system's symmetries.
The random sampling approach can exploit the commutation relation of the target Hamiltonian terms to reduce the gate complexity, outperforming QSP for the 1D lattice model. Moreover, our approach favours a linear nearest-neighbour  architecture. When restricted to nearest-neighbour architecture, for the 1D lattice model, the circuit depth scales $d = \mathcal{O} (n^{\frac{2}{4k+1}})$, while the circuit depth in the QETU-Trotter method scales as $d_{\mathrm{QETU}} = \mathcal{O} (n^{1 + \frac{1}{2k}})$. For electronic structure problems specified in \autoref{eq:molecular_hamil_main}, when considering the commutation relation and restricting to a nearest-neighbour architecture, the circuit depth is $d = \mathcal{O} (n^{2+\frac{2}{4k+1}})$, while QETU requires $d = \Omega (n^{3+\frac{1}{2k}})$. A side product is that controlled $e^{-i \theta {H}}$ can be implemented by a linear-depth circuit $d = \mathcal{O}(n)$, summarised in \autoref{prop:control_V_NN}, comparable to the control-free simulation of electronic problems in \cite{kivlichan2018quantum,babbush2018low}. This result can be directly useful for other quantum algorithms which require controlled unitaries as a subroutine.
    
In the existing resource estimation works~\cite{kim2022fault,reiher2017elucidating,goings2022reliably,wecker2014gate,campbell2019random}, the energy estimation or eigenstate preparation for practical problems, such as second-quantised quantum chemistry problems with $L = \mathcal{O}(n^4)$ terms~\cite{lee2021even,babbush2018encoding,berry2019qubitization} and condensed-phase electrons~\cite{kivlichan2020improved,campbell2021early}, is predominantly based on phase estimation. 
A typical strategy is to encode the eigenspectra of the Hamiltonian in a unitary for phase estimation by the evolution $e^{-iHt}$, which is synthesised by Trotterisation~\cite{kivlichan2020improved,campbell2021early}, or a qubitised quantum walk~\cite{babbush2018encoding,lee2021even} with eigenspectrum proportionally to $e^{\pm i \arccos(H/\lambda)} $, where $\lambda$ is a parameter related to the norm of the Hamiltonian.
However, due to the cost of phase estimation, the circuit depth will inevitably be polynomial in the precision, which is not optimal for eigenstate energy and property estimation. \revise{Recent works~\cite{wan2021randomized,kiss2025early} have also estimated resources for algorithms based on the computation of the cumulative distribution function (CDF) of the spectral measure of a Hamiltonian which was originally proposed in \cite{lin2021heisenberg}.}

The third contribution is to present resource estimations with the method developed in this work, which has better asymptotic scaling in precision and system size in theory, and also has a smaller circuit compilation overhead in practice.
In the NISQ applications, the major bottleneck is the number of two-qubit gates for the noisy quantum computer. The major overhead for the error-corrected quantum computer is the T gate count~\cite{babbush2018encoding}, which requires more gates to perform error correction than Clifford gates. 
We provide a systematic comparison with the existing advanced methods, which include QPE combined with Trotter formulae or qubitised quantum walk (as have been used in \cite{lee2021even,babbush2018encoding,goings2022reliably}),  QSP~\cite{lin2020near} and QETU. {
The \textsc{cnot} gate cost for a 20-site Heisenberg model is about $3 \times 10^5$ while the T gate cost is about $6 \times 10^6$.}

For different query models, we provide a useful toolbox for researchers to analyse the individual costs for elementary units (like block encoding and controlled real-time evolution), and thus enable comparison across different eigenstate preparation methods with various initial conditions. We hope that our framework (decomposing the task into the elementary operations) and the toolbox for analysing the cost for each elementary operation can be useful as a building block for resource estimations for other quantum algorithms based on querying the block encoding of $H$ or the time evolution.
We analyse the actual gate count for typical problems and compare it with the state-of-the-art methods.

\subsection{Problem setup and comparison with existing works}

\noindent
\textbf{Problem setup.}
Consider a gapped $n$-qubit Hamiltonian with a Pauli decomposition $H = \sum_{l=1}^L \alpha_l P_l := \lambda \sum_{l=1}^L \tilde{\alpha}_l P_l$ where $P_l$ is a Pauli operator, $\lambda := \sum_l |\alpha_l| $, and $\tilde{\alpha} := \alpha_l/\lambda$.
The eigenstate property estimation task is to estimate the expectation value of an observable $O$ on the $j$th eigenstate of $H$, $\ket{u_j}$. The assumptions are the following. We assume that we have an estimate of $E_j$, $\hat{E}_j$ with a small estimation error $\kappa:= |\hat{E}_j - E_j|$. We assume that we have a good initial state $\ket{\psi_0}$ which has a nonvanishing overlap with the target eigenstate, $\eta:= |\braket{\psi_0|u_j}|^2 = \Omega(1/\poly(n))$. We assume a nonvanishing energy gap $\Delta:= \min (E_{j+1}-E_j, E_j - E_{j-1} )$.

{Note that in \autoref{problem:observ}, we follow the convention in~\cite{zeng2021universal}, which has a slight difference in the definition of the initial overlap $\eta$ compared to Refs.~\cite{lin2020near,dong2022ground}. }

Below, we provide relatively formal descriptions for \autoref{thm:observ_estimation_main} and \autoref{thm:observ_estimate_gate_Lattice_main}, respectively.

\begin{theorem}[Eigenstate property estimation for generic Hamiltonians (formal version of \autoref{thm:observ_estimation_main})]
\label{thm:observ_estimation_main_methods}

Suppose we use the method in \autoref{alg1_main} and the conditions and assumptions in \autoref{problem:observ} hold.
Observable estimation (\autoref{problem:observ}): To achieve the error of observable's expectation on the eigenstate $\ket{u_j}$ within $\varepsilon$, the gate complexity in a single circuit is $\mathcal{O} \left( 5^{k-1} L (\lambda \Delta^{-1} \ln(\eta^{-1} \varepsilon^{-1}))^{1+ \frac{1}{4k+1} }  \right )$ when the number of samples is $N_s = \mc O\left ( {\eta ^{-2}\varepsilon^{-2}} \|O\|_1^2 \ln(1/\vartheta) \right)$  with a success probability at least $1-\vartheta$.

Eigenenergy estimation (\autoref{problem:eigenenergy}): To achieve the eigenenergy estimation error within $\kappa$ using \autoref{alg1_main}, the  gate complexity in a single circuit is $\mathcal{O} \left( 5^{k-1} L (\lambda  \kappa^{-1} \ln(\eta^{-1}))^{1+ \frac{1}{4k+1} }  \right )$, with number of samples $N_s = \mc O\left ( \eta ^{-2}  \ln(1/\vartheta) \right)$ (independent of $\kappa$) with a success probability at least $1-\vartheta$, approaching to the Heisenberg limit. Alternatively,  by using the methods proposed in \cite{wang2023quantum} and \autoref{alg1_main},  the gate complexity in a single circuit is $\mathcal{O} \left( 5^{k-1} L (\lambda \Delta^{-1} \ln(\eta^{-1} \kappa^{-1}))^{1+ \frac{1}{4k+1} }  \right )$ at the cost of more number of samples $N_s = \mc O\left (\eta^{-2} \Delta^{4}\kappa^{-4} (\ln ({\kappa^{-2} \eta^{-1}}) )^{2} \ln (1/\vartheta) ) \right)$  with a success probability at least $1-\vartheta$.

 \end{theorem}

\begin{theorem}[Gate and depth complexity for lattice Hamiltonians]
\label{thm:observ_estimate_gate_Lattice_main_Methods}
For $n$-qubit Heisenberg Hamiltonians, to estimate the observable on the eigenstate with a precision $\varepsilon$ and a success probability $1 -\vartheta$, in a single run, the gate complexity is $\mathcal{O}(n^{1+\frac{2}{4k+1}} \Delta^{-(1+\frac{1}{4k+1})} \varepsilon^{-\frac{1}{4k+1}} \log(\vartheta^{-1}))$. The circuit depth when compiled on qubits with nearest-neighbour geometry is $\mathcal{O}(n^{\frac{2}{4k+1}} \Delta^{-(1+\frac{1}{4k+1})} \varepsilon^{-\frac{1}{4k+1}} \log(\vartheta^{-1}))$.
\end{theorem} 

More detailed versions with the actual gate overheads are presented in \autoref{thm:observ_estimation_detailed} and \autoref{thm:En_estimation_detailed}, respectively. The eigenenergy is first assumed to be known a priori. This is the case for linear algebra tasks, in which the true solution corresponds to the ground state of a constructed Hamiltonian with the eigenenergy $E_0$ being exactly zero. The task with an unknown eigenenergy will be discussed in Supplementary \autoref{sec:unknown_energy}. 

The comparisons with advanced methods for eigenstate property and eigenenergy estimation are displayed in \autoref{table:ObservableComp_Methods} and \autoref{table:EnComp}, respectively.
\revise{The near-optimal eigenstate property estimation is indeed enabled by our algorithm design. Specifically, our algorithm design does not rely on a coherent implementation of phase estimation, allowing the use of different Hamiltonian simulation strategies. }

\revise{As the implementation of real-time evolution in Fig. 1(a2) is a subroutine in the eigenstate algorithm, one could employ the recent advances in Hamiltonian simulation algorithms. There are various approaches that can achieve favourable scaling in the system size and other key parameters, e.g. \cite{haah2021quantum,rendon2024improved,hejazi2024better,watson2025exponentially}. }

The reason why the Trotter error compensation method is chosen and its particular suitability are the following: 1)  It integrates naturally into the randomised linear-combination-of-unitaries construction; 2) The actual overhead is small -  the actual overhead with prefactors are explicitly calculated in Theorem 5 and Theorem 7; 3) In each circuit run, the circuit structure is deterministic, determined by sampling the operations according to a well-defined randomisation procedure.
For the interleaved forward and backward time-evolutions \cite{haah2021quantum}, they have shown the scaling advantage but the drawback is that the actual gate count may be even worse than that of either standard Trotter or QSP in small sizes. 

In addition, in an ongoing project, we expect that the Trotter error compensation may indeed simultaneously achieve optimal scaling in the system size and precision for $k$-local Hamiltonians. Therefore we expect that the gate complexity in Theorem 2 may be improved to $\mathcal{O}(n^{1+\frac{2}{4k+1}} \log(\varepsilon^{-1}))$

\revise{
It may be worth noting here about the advantages of the above Point 3. For each sampled time $t$, we do not need to change the step-size as in \cite{watson2025exponentially} (in which the stepsize $\{s_i\}_{i=1}^m$ needs to be changed for $m$ times). This has a two-fold advantage (1) it does not incur additional measurement overhead could bring bias and fluctuations to the estimated result). (2) in theory single-shot measurement is sufficient (i.e. we do not require the expectation value of $\braket{O(t)}$).  On the other hand, extrapolation-based methods typically require additional measurements at multiple evolution times, which increases the variance of observable estimation and also may be more susceptible to measurement noise. }

As noted in the main text, the spectral filter can be constructed with the Trotter expansion order $k = 0$ (i.e., without the Trotter term $S$). The corresponding gate complexity is covered in both tables. Our zeroth-order design with $k = 0$ is similar to that of \cite{wan2021randomized}, though our sampling process is simpler, as we only need to sample the first-order terms. In contrast, Ref.~\cite{wan2021randomized} employs full-order pairing and needs to sample from higher-order terms. Another difference is that our zeroth-order design may be advantages when qubit connectivity is restricted. 

To summarise, our work provides a full-stack solution, from high-level query-based design down to end-to-end algorithmic design. This shifts in perspectives which in our view is an appropriate approach for bridging near-term capabilities and long-term goals of fault-tolerant quantum computing. It is precisely through adopting this bottom-up methodology that we are able to achieve the near-optimality at the gate level, and are able to demonstrate the deterministic eigenstate solution on current quantum devices.
 

\begin{table*}[ht!]
\centering
\begin{tabular}{c c c c c} 
\hline
\hline
Methods              & Gate complexity   & Depth complexity (lattice models)    & Extra qubits   \\
\hline
QPE + Trotter ($2k$th-order)    & $ {\mathcal{O}}( L\tilde \Delta^{-(1+\frac{1}{2k})} \varepsilon^{-(1 + \frac{1}{2k})} )$  & 
 $ {\mathcal{O}}( n^{1+\frac{1}{2k}} \Delta^{-(1+\frac{1}{2k})} \varepsilon^{-(1 + \frac{1}{2k})} )$ & $\log(\varepsilon^{-1})+\log (\Delta^{-1} )$     \\
QPE + QW~\cite{lee2021even}   & $ \tilde{\mathcal{O}}( L \tilde \Delta^{-1} \varepsilon^{-1})$ & $ {\mathcal{O}}( n^2 \Delta^{-1} \varepsilon^{-1})$   & $\log(L) + \log(\varepsilon^{-1})+\log(\Delta^{-1})$      \\
QSP~\cite{lin2020near} & $ {\mathcal{O}}( L \tilde \Delta^{-1} \log\varepsilon^{-1})$ & $ {\mathcal{O}}( n^2 \Delta^{-1} \log\varepsilon^{-1})$ &   $\log(L) + \log(\varepsilon^{-1})+\log(\Delta^{-1})$ \\
QETU~\cite{dong2022ground} & $\tilde {\mathcal{O}}( L \tilde \Delta^{-(1+\frac{1}{2k})} \varepsilon^{-\frac{1}{2k}})$ & $ {\mathcal{O}}(n^{1+\frac{1}{2k}}\Delta^{-(1+\frac{1}{2k})} \varepsilon^{-\frac{1}{2k}})$  & 1 \\
This work ($2k$th-order)            &  $\tilde{\mathcal{O}}(L  \tilde\Delta^{-(1+\frac{1}{4k+1})} \log \varepsilon^{-1}) )$ & $\tilde {\mathcal{O}}(n^{\frac{2}{4k+1}}  \Delta^{-(1+\frac{1}{4k+1})}  \varepsilon^{-\frac{1}{4k+1}})$  & 0 or 1     \\
    (zeroth-order)          &  $\mathcal{O}(\tilde\Delta^{-2} \log^2 \varepsilon^{-1})$  & $\mathcal{O}(n^2 \Delta^{-2} \log^2 \varepsilon^{-1})$ & 0 or 1     \\
\hline
\hline
\end{tabular}
\caption{Comparison of observable estimation on the eigenstate of a Hamiltonian (\autoref{problem:observ}). Here we compare the gate complexity of the algorithm in a single coherent run. The results in the second column in the table are based on \autoref{alg1_main} and \autoref{thm:observ_estimation_main} using the $2k$th-order Trotter error compensation. Here, $\tilde\Delta := \lambda/\Delta$.
The results in the third column in the table are based on \autoref{thm:observ_estimate_gate_Lattice_main} when the $2k$th-order Trotter formula is used.
The eigenenergy is assumed to be known a priori and the unknown eigenenergy case is analysed in \autoref{sec:unknown_energy}.
The dependence on $\eta$ is not included in the table since it only appears in sample complexity for our method.
Similar to other random-sampling spectral filter methods (see \cite{zhang2022computing,lin2021heisenberg}), the sample complexity with respect to $\eta$ is $\mathcal{O}(\eta^{-2})$ for which the optimal scaling is $\mathcal{O}(\eta^{-1/2})$ achieved by QSP and QETU with amplitude amplification.
As noted in the main text, one may simultaneously achieve near optimal scaling in both the size and precision as $\mathcal{O}( n^{\frac{2}{4k+1}} \log\varepsilon^{-1})$ if the higher-order commutators in the Trotter error remainder could be compensated.
}
\label{table:ObservableComp_Methods}
\end{table*}

\begin{table*}[t!]
\centering
\begin{tabular}{c c c c c} 
\hline
\hline
Methods              & Gate complexity   & Depth complexity (lattice models)    & Extra qubits   \\
\hline
QPE + Trotter ($2k$th-order)    & $ {\mathcal{O}}( L \lambda^{1+\frac{1}{2k}} \varepsilon^{-(1+\frac{1}{2k})}  )$  & 
 $ {\mathcal{O}}( n^{1+\frac{1}{2k}}   \varepsilon^{-(1 + \frac{1}{2k})} )$ & $\log(\varepsilon^{-1})+\log (\Delta^{-1} )$     \\
QPE + QW~\cite{lee2021even}   & $ \tilde{\mathcal{O}}( L \lambda \varepsilon^{-1}  )$ & $ \tilde{\mathcal{O}}( n^2   \varepsilon^{-1})$   & $\log(L) + \log(\varepsilon^{-1})+\log(\Delta^{-1})$      \\
QSP~\cite{lin2020near} & $ \tilde{\mathcal{O}}( L \lambda  \varepsilon^{-1}  )$ & $ \tilde{\mathcal{O}}( n^2 \varepsilon^{-1}  )$ &   $\log(L) +  \log(\varepsilon^{-1})+\log(\Delta^{-1})$ \\
QETU~\cite{dong2022ground} & $ \tilde{\mathcal{O}}( L \lambda^{1+\frac{1}{2k}} \varepsilon^{-(1+\frac{1}{2k})}  )$ & $ \tilde{\mathcal{O}}( n^{1+\frac{1}{2k}} \varepsilon^{-(1+\frac{1}{2k})}  )$  & 1 \\
This work ($2k$th-order)        &  $\tilde {\mathcal{O}}(L   \lambda^{1+\frac{1}{4k+1}}  \varepsilon^{-(1+\frac{1}{4k+1})}    )$ & $\tilde {\mathcal{O}}(n^{\frac{2}{4k+1}}  \varepsilon^{-(1+\frac{1}{4k+1})} )  $  & 0 or 1     \\
    (zeroth-order)   and \cite{wan2021randomized}         &  $\tilde {\mathcal{O}}(\lambda^2 \varepsilon^{-2} )$  & $\tilde {\mathcal{O}}(n^2 \varepsilon^{-2} )$ & 0 or 1     \\
\hline
\hline
\end{tabular}
\caption{Total gate complexity in eigenenergy estimation up to estimation error $\varepsilon$ (\autoref{problem:eigenenergy}). Note that this table presents the total gate complexity, which includes the sample complexity.
The dependence on $\eta$ is not included in the table, as similarly discussed in \autoref{table:ObservableComp}. 
}
\label{table:EnComp}
\end{table*}

\begin{table*}[t!]
\centering
\begin{tabular}{c| c c | c } 
\hline
\hline
Hamiltonians      & Ancilla-free method (NN)     & QETU (NN)  \cite{dong2022ground}   & arbitrary     \\
\hline
1D lattice models  &  $d = \mc O (\mathbf{ n^{\frac{2}{4k+1}} } \Delta^{-(1+\frac{1}{4k+1})}  \varepsilon^{-\frac{1}{4k+1}}) $   & $d = \mc O (n^{1+\frac{1}{2k}} \Delta^{-(1+\frac{1}{2k})}  \varepsilon^{-\frac{1}{2k}}) $  &  $d = \mc O (n^{\frac{2}{4k+1}} \Delta^{-(1+\frac{1}{4k+1})}   \varepsilon^{-\frac{1}{4k+1}} )$   \\
    &   $g = \mc O (n^{1+\frac{2}{4k+1}} \Delta^{-(1+\frac{1}{4k+1})}  \varepsilon^{-\frac{1}{4k+1}}) $   &  $g = \mc O (n^{1+\frac{1}{2k}} \Delta^{-(1+\frac{1}{2k})}  \varepsilon^{-\frac{1}{2k}}) $ & $g = \mc O (n^{1+\frac{2}{4k+1}} \Delta^{-(1+\frac{1}{4k+1})}  \varepsilon^{-\frac{1}{4k+1}}) $ \\
    \hline
Electronic structure    & $d = \mc O (\mathbf{ n^{2 + \frac{2}{4k+1}} } \Delta^{-(1+\frac{1}{4k+1})}  \varepsilon^{-\frac{1}{4k+1}})$  & $d = \Omega (n^{3+\frac{1}{2k}} \Delta^{-(1+\frac{1}{2k})}  \varepsilon^{-\frac{1}{2k}}) $ & $d = \mc O (n^{2+\frac{2}{4k+1}} \Delta^{-(1+\frac{1}{4k+1})}  \varepsilon^{-\frac{1}{4k+1}})$ \\
 (\autoref{eq:molecular_hamil_main})   &   $g = \mc O (n^{3+\frac{2}{4k+1}} \Delta^{-(1+\frac{1}{4k+1})}  \varepsilon^{-\frac{1}{4k+1}}) $ &  $g = \mc O (n^{4+\frac{1}{2k}} \Delta^{-(1+\frac{1}{2k})}  \varepsilon^{-\frac{1}{2k}}) $  &  $g = \tilde{\mc O} (n^{2+\frac{2}{4k+1}} \Delta^{-(1+\frac{1}{4k+1})}  \varepsilon^{-\frac{1}{4k+1}}) $ \\
\hline
\hline
\end{tabular}
\caption{Gate complexity with respect to the energy gap $\Delta $, target precision $\varepsilon$, and system size $n$ for different Hamiltonians' eigenstate property estimation.
In the second and third columns, the qubit connectivity is restricted to a linear nearest-neighbour (NN) architecture. The fourth column is the result of our method when there is no restriction on qubit connectivity.
In this table, the commutation relation of the Hamiltonian terms is used to improve the system-size scaling.
It is worth mentioning that without restriction on connectivity, the gate complexity of electronic structure problems studied in this table can be $g = \tilde{\mc O} ( n (\lambda \Delta^{-1} \log \varepsilon^{-1})^{1+\frac{1}{4k+1}} )  $ which is logarithmic in inverse precision. 
The circuit depth results for different physical Hamiltonians are displayed in \autoref{table:dep_cost_tot}. The gate-complexity estimate for electronic Hamiltonians using QETU may not be optimal. QSP and methods based on full-order pairing \cite{wan2021randomized} have worse system-size scaling, and thus they are not included in \autoref{table:dep_cost_tot}.
}  
\label{table:dep_cost_tot}
\end{table*}

\section{Construction of the spectral filter}
\label{appendix:spectral_filter_A}

\subsection{Composite LCU formulae for decomposing a nonunitary operator} \label{ssc:LCUdecompNonunit}

  
In the main text, we have introduced a few tools to analyse the property of the composite form of an LCU formula. \autoref{prop:composite_LCU} shows how to bound the error of a composite LCU formula written in the discretised form. We provide the proof here.

\begin{proof}(of \autoref{prop:composite_LCU})
	
Recall that the LCU formula of $g$ can be written as
\begin{equation}
     g_2 =\mu_1   \sum_i \Pr(i) (U(t_i/\nu))^\nu.
\end{equation}	
Given the  LCU formula of $U(t_i/\nu)$, 
$g_2$ can be written as
\begin{equation}
    g_2 =\mu_1 \mu_2^\nu \sum_i \Pr(i) \left( \sum_{i_{\ell}} \Pr(i_{\ell}) P_{i_{\ell}} \right)^\nu = \mu_1 \mu_2^\nu \sum_i \Pr(i)  \sum_{ \{i_{\ell} \}} \prod_{q = 1}^{\nu}\Pr(i_{\ell_q}) \prod_{q = 1}^{\nu} P_{i_{\ell_q}} 
\end{equation}
with some abuse of notation.

Given a $(\mu_2, \varepsilon_2)$-LCU form of $U(\delta t)$, one can prove that the product of LCU formula $U(\delta t)^{\nu}$ has   a normalisation factor $\mu_2' = \mu_2^\nu$ and an error  $\varepsilon_2' \leq \nu \mu_2' \varepsilon_2$. 
We can prove the result by using the triangle inequality $$\|g_2 - g\| \leq  \|g_1 - g\| + \|g_2 - g_1\| \leq \varepsilon_1 + \nu \mu_1  \mu_2^\nu \varepsilon_2,$$
which completes the proof.
\end{proof}

To implement LCU in practice, we could consider either a discretised form or a continuous form. As shown in the main text,  the Fourier transform gives an explicit form for decomposing the spectral filter into unitary operators in a continuous form.
When the integral form of a spectral filter has a well-defined probability distribution, it can be well-characterised by \autoref{eq:randomLCU_main}.
Therefore, we use the continuous form for the Gaussian spectral filter. 
We provide a discretised version of the spectral filter in \autoref{sec:discretised_version} and show  that the discretisation error for the Gaussian spectral filter can be sufficiently small.

We provide the composite LCU formula in a continuous form for completeness.

\begin{proposition}[Composite LCU formula in a continuous form]
\label{prop:composite_LCU_continous}
	Suppose $ g_1$ is a  $(\mu_1,\varepsilon_1)$-LCU formula of $g$,
	\begin{equation}
		g_1 = \mu_1 \int dx\, p(x)U(x)
	\end{equation} Suppose that each of the summand $U(x)$ has a   $(\mu_2,\varepsilon_2)$-LCU formula, 
	\begin{equation}
		\tilde U(x) = \mu_2(x)  \int dy\, q(x,y) V(x,y)
	\end{equation}
	Then the formula
	\begin{equation}
		 g_2 = \mu_1 \int dx\, p(x)\tilde U(x)  	\end{equation}
is a $(\mu,\varepsilon)$-LCU formula of $g$, with $\mu  := \mu_1 \int_{-\infty}^{\infty} p(x) \mu_2(x) dx
$, and $\varepsilon = \varepsilon_1 + \mu_1 \varepsilon_2$.
\end{proposition}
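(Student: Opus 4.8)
The plan is to mirror the structure of the proof of \autoref{prop:composite_LCU}, but with the discrete index replaced by integrals and, crucially, \emph{without} the segment-wise product: here each summand $U(x)$ is approximated only once by $\tilde U(x)$ rather than raised to the $\nu$th power, so the $\nu\mu$ amplification of the discrete case does not appear. First I would substitute the inner $(\mu_2,\varepsilon_2)$-LCU formula into the definition of $g_2$ and interchange the order of integration to obtain
\[
  g_2 = \mu_1 \int dx \int dy\; p(x)\,\mu_2(x)\,q(x,y)\,V(x,y).
\]
The first task is then to recognise this as a genuine continuous RLCU formula in the sense of \autoref{eq:randomLCU_continous_main}, which amounts to identifying a valid joint probability density over $(x,y)$ and reading off the normalisation factor.

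To this end I would set $\mu := \mu_1 \int p(x)\mu_2(x)\,dx$ and define the joint density $P(x,y) := p(x)\mu_2(x)q(x,y)/\int p(x')\mu_2(x')\,dx'$. Its nonnegativity follows from that of $p$ and $q$ together with the positivity of $\mu_2(x)$, and its normalisation follows by integrating out $y$ first, using $\int q(x,y)\,dy = 1$, and then $x$. This gives $g_2 = \mu \int\!\int dx\,dy\;P(x,y)V(x,y)$, exactly the claimed form with normalisation factor $\mu = \mu_1 \int p(x)\mu_2(x)\,dx$.

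For the error bound I would apply the triangle inequality $\|g - g_2\| \le \|g - g_1\| + \|g_1 - g_2\|$, the first term being at most $\varepsilon_1$ by hypothesis. For the second term I would write $g_1 - g_2 = \mu_1 \int dx\,p(x)\bigl(U(x) - \tilde U(x)\bigr)$, pull the spectral norm inside the integral by the operator-valued triangle inequality, and bound the integrand pointwise by the inner approximation error $\|U(x) - \tilde U(x)\| \le \varepsilon_2$. Since $p$ integrates to one, this yields $\|g_1 - g_2\| \le \mu_1 \varepsilon_2$, and hence $\|g - g_2\| \le \varepsilon_1 + \mu_1 \varepsilon_2$, as claimed.

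The step I expect to require the most care is verifying that $P(x,y)$ is a bona fide probability density and that the interchange of the $x$- and $y$-integrations is justified. This is routine under the standing assumptions implicit in the statement, namely that $\mu_2(x) > 0$ on the support of $p$, that the inner bound $\|U(x) - \tilde U(x)\| \le \varepsilon_2$ holds uniformly in $x$, and that the relevant integrals converge (guaranteed once the time cutoff $x_c$ is imposed, as in the main text). The operator-norm manipulations themselves are elementary, so no genuine obstacle arises beyond this bookkeeping.
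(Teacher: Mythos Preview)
Your proof is correct and follows essentially the same approach as the paper: both rewrite $g_2$ with the normalised density, identify $\mu=\mu_1\int p(x)\mu_2(x)\,dx$, and bound the error via the triangle inequality $\|g-g_2\|\le\|g-g_1\|+\|g_1-g_2\|\le\varepsilon_1+\mu_1\varepsilon_2$. The only cosmetic difference is that you package the sampling measure as a joint density $P(x,y)$, whereas the paper factorises it as a marginal $p_\mu(x)=\mu_1\mu_2(x)p(x)/\mu$ times the conditional $q(x,y)$; these are equivalent since $P(x,y)=p_\mu(x)q(x,y)$.
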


\begin{proof}(of \autoref{prop:composite_LCU_continous})
	
The formula of $g_2$ can be written as
	\begin{equation}
		 g_2 =\mu_1 \int dx p(x) \mu_2(x) \int dy q(x,y) V(x,y)
	\end{equation}	
	
We define
\begin{equation}
	p_{\mu}(x) = \mu_1 \mu_2(x) p(x) /\mu
\end{equation}
with $\mu  := \mu_1 \int_{-\infty}^{\infty} p(x) \mu_2(x) dx
$.
Then $g_2$ takes the form of
	\begin{equation}
		 g_2 = \mu\int dx p_{
		 \mu}(x)  \int dy q(x,y) V(x,y).
   \label{eq:composite_form}
	\end{equation}	
	
The proof is straightforward by using the triangle inequality $\|g_2 - g\| \leq  \|g_1 - g\| + \|g_2 - g_1\| \leq \varepsilon_1 + \mu_1 \varepsilon_2$.
	Note that $p_{\mu}(x)$ and  $q(x,y)$ (for given $x$) are both normalised and can thus be regarded as probability distributions. Therefore, $g_2$ can be realised in a random-sampling way by sampling from the distribution $p_{\mu}(x)$ and then $q(x,y)$.

\end{proof}

\autoref{eq:composite_form} is a general $(\mu,\varepsilon)$ composite LCU form of $g$.
As we shall see in the later discussion, to reduce the maximum evolution time (related to $x$) we usually set a truncation of $x$ in the integral, i.e.,
\begin{equation}
    g_1 = \mu_1 \int_{-x_c}^{x_c} dx p(x) U(x)
\end{equation}
with  $\tilde \varepsilon_1 = \varepsilon_1 + \varepsilon_c$ and $\varepsilon_c$ being a small truncation error.
Then we may set the constant $\mu_2 $ to be $\mu_2 = \max_x 
\mu_2(x)$ in the LCU formula of 
 $U(x)$,
$ 
		\tilde U(x) = \mu_2  \int dy\, q(x,y) V(x,y).
$
Then  the LCU formula of 
 $g$ could be simplified as
	\begin{equation}
		 g_2 = \mu\int dx p
		 (x)  \int dy q(x,y) V(x,y),
   \label{eq:composite_form_const_mu}
	\end{equation}	
with $\mu = \mu_1 \mu_2$.

\subsection{Spectral filter by randomised composite LCU formulae}

Recall that we choose a Gaussian spectral filter $g_{\tau }(H) = e^{- \tau ^2 H^2}$. Below, we elaborate on a few properties of the spectral filter.
A general matrix function acting on the Hamiltonian is defined as
 \begin{equation} \label{eq:g_H}
    g( H ) := \sum_{i=0}^{N-1} g (E_i ) \ket{u_i}\bra{u_i}.
\end{equation}
where $g(h) : \mathbb R  \rightarrow \mathbb C$ is a generic continuous-variable function determining the transformation of the spectrum of the Hamiltonian.
As a spectral filter, the function  $g(h)$ is required to satisfy strictly non-increasing absolute value, $|g(h')|< |g(h)|,~\forall |h'|> |h|$, and vanishing asymptotic value, $\lim_{\tau\rightarrow\infty}|g(\tau h')/g(\tau h)|=0,~\forall |h'|> |h|$, and is an even function, $g(h) = g(-h)$. In this work, we choose the Gaussian function $g(h)=e^{-h^2}$, corresponding to a generalised imaginary-time evolution $g(\tau H)=e^{-\tau^2 H^2}$.  
As shown in the main text, the Gaussian spectral filter is decomposed into the basis of real-time evolution and is further decomposed into elementary operations, either Pauli operators for general cases or symmetry-conserved operators for ancilla-free consideration.

Given an LCU form of $g$, \autoref{prop:RLCU_err_observ_bound} shows the error in observable estimation. Below we provide the proof of \autoref{prop:RLCU_err_observ_bound}.

\begin{proposition}[Formal version of \autoref{prop:RLCU_err_observ_bound}] 
For a target operator $g$ and its $(\mu,\varepsilon)$-randomised LCU formula defined in \autoref{eq:randomLCU_main}, if we estimate the value on the unnormalised state $N_g(O):= \tr(g\rho g^{\dagger} O)$ with an initial state $\rho$ and observable $O$,   then the distance between the mean estimator value $\hat{O}$ and the true value $N_g(O)$ is bounded by
\begin{equation} \label{eq:Obound}
\varepsilon_N := |\hat N_{\tilde g}(O) - N_g(O)| \leq   \|O\|(2\mu^2 \varepsilon + \varepsilon_n),
\end{equation}
with a success probability $1-\vartheta$. Here, we use the variant of the Hadamard test circuit for $N_s=\mu^4\ln(2/\vartheta)/\varepsilon_n^2$ times and $\|O\|$ is the spectral norm of $O$.
 The error for the denominator is bounded as 
	$\varepsilon_D := |\hat D_{\tilde g}  - D_{g} | \leq 2\mu^2 \varepsilon + \varepsilon_n$.
Given the error  $\varepsilon_D$ and $\varepsilon_N$, the error for the normalised observable expectation $\braket{O}_g = N_g(O)/D_g$ is
\begin{equation}
\begin{aligned}
|\frac{\hat N_{\tilde g}(O)}{\hat D_{\tilde g}}  -\frac{N_g(O)}{D_g} |  \leq \frac{1}{D_g  } ((\braket{O} + 1)\varepsilon_D +  \varepsilon_N)	\end{aligned}.
\end{equation}

\end{proposition}

\begin{proof}(of \autoref{prop:RLCU_err_observ_bound})

Suppose $\tilde{g}$ is a $(\mu,\varepsilon)$-RLCU formula of $g$. 
We first have 
\begin{equation}
	 N_{\tilde g}(O) = \tr(\tilde{g}_{\tau}(H-E_0)O \tilde{g}_{\tau}(H-E_0)) 
\end{equation}
 we have 
\begin{equation}
\begin{aligned}
	| N_{\tilde g}(O) - N_g(O)| &= |\tr(\tilde{g}_{\tau}(H-E_0)O \tilde{g}_{\tau}(H-E_0)) - \tr({g}_{\tau}(H-E_0)O{g}_{\tau}(H-E_0)) | \\
	& \leq \varepsilon (\|\tilde{g}_{\tau} \|+\|{g}_{\tau}\|) \|O\|_{\infty}
	\\
	& \leq 2  \mu^2 \varepsilon\|O\|_{\infty}
\end{aligned}
\end{equation}
Here, we have use the fact that $\|\rho\|\leq 1$, $\|\tilde{g}_{\tau}\|\leq \mu^2$, and $g = g^{\dagger}$.

Suppose we have the estimator $ \hat N_{\tilde g}(O)$ defined in \autoref{eq:estimator_v_main}.
The expectation range is $[-\sqrt{2}\mu^2  \|O\|_{\infty}, \sqrt{2}\mu^2  \|O\|_{\infty}]$.
Using the Hoeffding bound, we have the following probability tail bound for the mean estimator $ \hat N_{\tilde g}(O)$,
\begin{equation}
	\Pr(| \hat N_{\tilde g}(O) -  N_{\tilde g}(O)|) \leq 2 \exp(-\frac{N_s \varepsilon_n^2}{ \mu^4})
\end{equation}
By setting $N_s:=\mu^4\ln(2/\vartheta)/\varepsilon_n^2$, we have the estimation error
\begin{equation}
\varepsilon_N := |\hat N_{\tilde g}(O) - N_g(O)| \leq   \|O\|(2\mu^2 \varepsilon + \varepsilon_n),
\end{equation}
with a success probability $1-\vartheta$.

The result for the denominator can be similarly derived since \begin{equation}
	D_{\tilde g}   := \braket{\tilde{g}_{\tau}^2(H-E_0)} = N_{\tilde g}(I).
\end{equation}
We have
\begin{equation}
\begin{aligned}
	\varepsilon_D : = |\hat D_{\tilde g}  - D_{g} | & \leq 2\mu^2 \varepsilon + \varepsilon_n.
 \end{aligned}
\end{equation}
The error for the normalised observable expectation is
\begin{equation}\label{eq:err_N_div_D}
\begin{aligned}
|\frac{\hat N_{\tilde g}(O)}{\hat D_{\tilde g}}  -\frac{N_g(O)}{D_g} | & = |\frac{(\hat N_{\tilde g}(O) D_g -   N_g(O)D_g ) + ( N_g(O)D_g - \hat D_{\tilde g}(O) N_g)}{\hat D_{\tilde g}D_g}  |   \\
&\leq |\frac{ N_g(O)\varepsilon_D +  D_g \varepsilon_N}{\hat D_{\tilde g}D_g}  |  \\
	& \leq  |\frac{(D_g + N_g(O))\varepsilon_D +  D_g \varepsilon_N}{D_g^2  } | =  \frac{1}{D_g  } ((\braket{O}_g + 1)\varepsilon_D +  \varepsilon_N)
	\end{aligned}
\end{equation}


\end{proof}

The performance of our method in eigenstate property estimation can be evaluated by the error in constructing the RLCU formula. 
We note that the random sampling algorithm is not a deterministic state preparation method, as it cannot prepare the target state $\sigma$. However, when we focus on the property estimation of the target state, our scheme has similar performance to the normal Hamiltonian simulation methods. Specifically, the sample complexities of the former algorithms and the RCLU algorithm to learn the observable properties of the state are similar, as long as the norm of the estimator is a constant. This is guaranteed by \autoref{prop:RLCU_err_observ_bound}.
One can thus compare our method with other deterministic schemes at the same level.   





\section{Eigenstate property estimation}
\label{sec:observ_estimate}

In this section, we provide error analysis for eigenstate property estimation described in \autoref{problem:observ}. We will provide the proof of gate complexity (\autoref{thm:observ_estimation_main}) in the main text.

\subsection{Estimator}

Recall that the task in \autoref{problem:observ} is to estimate the observable expectation on the target eigenstate $\ket{u_j}  $ up to a certain precision $\varepsilon$, which is characterised by 
\begin{equation}
\label{eq:OND_ideal}
	\braket{O} = \frac{N(O)}{D} = \frac{\braket{u_i |O|u_i}}{\braket{u_i |u_i}}
\end{equation}
where  the denominator $D$ and the numerator $N$ is defined in respect to the eigenstate $\ket{u_i}$.
The  unnormalised eigenstate can be effectively realised by applying a spectral filter $g_{\tau\rightarrow \infty} (H - \omega)$ to an initial state, which holds when $\omega = E_j$ and the imaginary-time becomes infinity. 
In this section, the eigenenergy is first assumed to be known a priori. The task with an unknown eigenenergy will be discussed in \autoref{sec:unknown_energy}.

It is easy to see that we arrive at the ideal observable expectation  when $\tau \rightarrow \infty$,
\begin{equation}
	\braket{O} = \frac{N_{\tau \rightarrow \infty}(O)}{D_{\tau \rightarrow \infty}},
 \end{equation}
given a nonvanishing denominator, which is 
$ 
	D_{\tau \rightarrow \infty}(E_j) = |c_j|^2 = \eta. 
$
Note that we assume that the spectral weight $\braket{u_i|\psi_0}$ is nonvanishing.	
The observable when considering a finite $
\tau$ is estimated by
\begin{equation}
	\braket{O}_{\tau} = \frac{N_{\tau}(O)}{D_{\tau}}. 
\end{equation}
The denominator and the numerator have the same definition in \autoref{eq:OND_main}. 
In practice, when considering finite gate complexity and sample complexity, the eigenstate property is estimated by
\begin{equation}
\label{eq:observ_expec_finite}
	\hat{O}_{\tau, x_c, s_c} = \frac{\hat N_{\tau, x_c, s_c}(O)}{\hat D_{\tau, x_c, s_c}}. 
\end{equation}

The selection of $\tau, x_c, s_c$ can be determined by analysing the error of $\hat{O}_{\tau, x_c, s_c}$ compared to the ideal value.
As shown in \autoref{prop:composite_LCU}, the spectral filter can be written as an RLCU formula. The spectral filter takes an explicit form of
\begin{equation}
\begin{aligned}
	g_{\tau}(H-\omega) &= c \int_{-\infty}^{\infty} dx p(x)  e^{ix\tau \omega}   e^{-i\tau x H}\\
\end{aligned}
\end{equation}

The integrand is a real-time evolution with total real-time $\tau x$. Suppose we further use LCU formula to implement $ e^{-i\tau x H}$, which takes the form of
 \begin{equation}
 	e^{-i\tau x H} =  \mu( x \tau)\sum_{\vec{r} \in \mathcal{K}_x} \Pr(\vec{r}, x\tau, \nu(x \tau)) U_{\vec{r}} 
 \label{eq:time_evo_decomp}
 \end{equation} 
where we have follow the definition in \autoref{eq:g_tau_real_expansion_main}: $\vec{r}$ specifies the unitary $U_{\vec{r}}$ involved in the LCU formula of $ e^{-i\tau x H}$,   $\Pr(\vec{r}, x\tau, \nu(x \tau))$ represents the normalised decomposition coefficients of $U_{\vec{r}}$.
Then, we have 
\begin{equation}
\begin{aligned}
	g_{\tau}(H-\omega)	&= c(\mu) \int_{-\infty}^{\infty} dx p_{\mu}(x) e^{ix\tau \omega} \sum_{\vec{r} \in \mathcal{K}_x} \Pr(\vec{r}, x\tau, \nu(x \tau)) U_{\vec{r}}.
\end{aligned}
\label{eq:g_tau_real_expansion}
\end{equation}

\subsection{Error analysis}
\label{sec:observ_estimate_error_analysis}

Now, we discuss the errors when considering finite $\tau$, $x_c$, $s_c$.
Note that \autoref{eq:g_tau_real_expansion} is a composite LCU formula, more specifically, it is a $(c(\mu),0)$-LCU formula. When we consider a finite $x_c, s_c$, it will introduce some errors. The key idea of analysing the errors is to compare the operator distance between 
\begin{equation}
	\| g_{\tau, x_c, s_c}  - g_{\tau, x_c, s_c \rightarrow \infty}  \|
\end{equation}
 which will gives a bound for the numerator and the denominator.
 
 \paragraph{Error due to finite $\tau$.}
 
The imaginary-time $\tau$ determines the strength of the spectral filter. The spectral weight on the unwanted eigenstates, namely those energies away from the pre-set parameters $E$, will be exponentially suppressed. In the infinite time limit, $g$ will effectively project out all the spectral weights on the unwanted eigenstates, given that the initial state has a nonvanishing spectral weight on the target eigenstate.
The error of the denominator and numerator can be analysed by considering the operator distance between $g_{\tau}$ and $g_{\tau \rightarrow \infty}$. In this section, we choose $\omega = E_j$ and omit $\omega$ when there is no ambiguity.

\begin{proposition}[Error due to a finite $\tau$]
When $\tau \geq \frac{1}{\Delta} \sqrt{\ln(2/\varepsilon_{\tau})}$, the error of the denominator and the numerator that are defined in \autoref{eq:OND_main} when compared with those defined with respect to the ideal eigenstates in \autoref{eq:OND_ideal} satisfy
	$|D_{\tau} - D | \leq \varepsilon_{\tau}$, $|N_{\tau}(O) - N(O)| \leq \varepsilon_{\tau} \|O\|$.  
\end{proposition}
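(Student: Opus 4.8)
The plan is to reduce both scalar bounds to a single operator-norm estimate on $g_\tau(H-E_j)$ by exploiting the spectral gap. First I would note that in the limit $\tau\to\infty$ the Gaussian filter becomes the spectral projector onto the target eigenstate: since $g(\tau(E_i-E_j))=e^{-\tau^2(E_i-E_j)^2}$ equals $1$ at $i=j$ and tends to $0$ otherwise, we have $g_{\tau\to\infty}(H-E_j)=P_j:=\ket{u_j}\bra{u_j}$, so that $D=\braket{\psi_0|P_j|\psi_0}=\eta$ and $N(O)=\braket{\psi_0|P_j O P_j|\psi_0}$, consistent with $N/D=\braket{u_j|O|u_j}$ in \autoref{eq:OND_ideal}. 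The heart of the argument is the decomposition $g_\tau(H-E_j)=P_j+R_\tau$ with $R_\tau:=\sum_{i\neq j}e^{-\tau^2(E_i-E_j)^2}\ket{u_i}\bra{u_i}$, whose norm is controlled by the gap: since $|E_i-E_j|\geq\Delta$ for every $i\neq j$ by the assumption $\Delta=\min(E_{j+1}-E_j,E_j-E_{j-1})$, we get $\|R_\tau\|=\max_{i\neq j}e^{-\tau^2(E_i-E_j)^2}\leq e^{-\tau^2\Delta^2}$.

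Next I would pass from operators to the required scalar differences using a telescoping identity rather than expanding a square, which keeps the constants tight. For any bounded $X$, writing $g_\tau X g_\tau-P_j X P_j=g_\tau X(g_\tau-P_j)+(g_\tau-P_j)X P_j$ and using $\|g_\tau\|\leq 1$ (the Gaussian has spectrum in $(0,1]$), $\|P_j\|=1$, and $\|g_\tau-P_j\|=\|R_\tau\|$ yields $\|g_\tau X g_\tau-P_j X P_j\|\leq(\|g_\tau\|+\|P_j\|)\|X\|\,\|R_\tau\|\leq 2\|X\|\,\|R_\tau\|$. Taking $X=O$ and then sandwiching by the unit vector $\ket{\psi_0}$ (so that $|\braket{\psi_0|M|\psi_0}|\leq\|M\|$) gives $|N_\tau(O)-N(O)|\leq 2\|O\|\,\|R_\tau\|$. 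For the denominator I would take $X=I$ and observe that the second telescoping term $(g_\tau-P_j)P_j=R_\tau P_j$ vanishes by orthogonality of the supports, leaving $|D_\tau-D|\leq\|R_\tau\|$ (indeed one even has $g_\tau^2-P_j=R_\tau^2$, hence $\leq\|R_\tau\|^2$).

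Finally I would substitute the hypothesis $\tau\geq\frac{1}{\Delta}\sqrt{\ln(2/\varepsilon_\tau)}$, which is equivalent to $e^{-\tau^2\Delta^2}\leq\varepsilon_\tau/2$, so $\|R_\tau\|\leq\varepsilon_\tau/2$; this turns the two estimates into $|D_\tau-D|\leq\varepsilon_\tau/2\leq\varepsilon_\tau$ and $|N_\tau(O)-N(O)|\leq 2\|O\|\cdot\varepsilon_\tau/2=\varepsilon_\tau\|O\|$, as claimed. I expect the only genuinely delicate point to be obtaining the clean factor in the numerator bound: a naive expansion $g_\tau O g_\tau=(P_j+R_\tau)O(P_j+R_\tau)$ produces an extra $\|R_\tau\|^2\|O\|$ cross term that would spoil the exact constant $\varepsilon_\tau$, so it is important to use the hybrid/telescoping split together with $\|g_\tau\|\leq 1$ to absorb it. A secondary point worth flagging is the harmless degenerate case: if $E_j$ has multiplicity greater than one, $P_j$ should be read as the projector onto the whole $E_j$-eigenspace, and all estimates go through verbatim since the gap still bounds $\|R_\tau\|$.
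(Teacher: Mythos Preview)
Your proof is correct and follows essentially the same route as the paper: both establish $\|g_\tau(H-E_j)-P_j\|\leq\varepsilon_\tau/2$ from the gap assumption and then convert this operator bound into the scalar estimates via the telescoping identity $g_\tau X g_\tau-P_jXP_j=g_\tau X(g_\tau-P_j)+(g_\tau-P_j)XP_j$ together with $\|g_\tau\|\leq 1$. The only cosmetic difference is that the paper invokes its \autoref{prop:RLCU_err_observ_bound} (which packages exactly this telescoping step) whereas you carry it out explicitly; your additional observation that $R_\tau P_j=0$ sharpens the denominator bound to $\|R_\tau\|^2$, which the paper does not note.
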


\begin{proof}
	
The distance between $g_{\tau}(H-E_j)$ and $g_{\tau \rightarrow \infty}(H-E_j)$ can be bounded by
\begin{equation}
	g_{\tau}(H-E_j) - g_{\tau \rightarrow \infty}(H-E_j) = \sum_{i \neq j} g_{\tau}(E_i-E_j) \ket{u_i}\bra{u_i}
	\label{eq:g_tau_eq}
	\end{equation}
where $g_{\tau}(H-\omega)$ is defined in \autoref{eq:g_tau_real_expansion}. 
According to the definition of  $g_\tau$, when $\tau \geq \frac{1}{\Delta} \sqrt{\ln(2/\varepsilon_{\tau})}$, it is easy to check that
\begin{equation}
	g_\tau (E_i - E_j) \leq \varepsilon_{\tau} /2,~\forall i\neq j
\end{equation}
and thus we have 
\begin{equation}
	\|g_{\tau}(H-E_j) - g_{\tau \rightarrow \infty}(H-E_j)\| \leq \varepsilon_{\tau} /2 
	\end{equation}

Using the result in \autoref{prop:RLCU_err_observ_bound}, the error of the numerator can  be bounded by
\begin{equation}
	N_{\tau}(O) - N(O) \leq \varepsilon\|O\|
\end{equation}
The error of the denominator can be obtained straightforwardly by taking $O = I$.

\end{proof}

 \paragraph{Error due to finite truncation in real-time evolution.}
 
Then we analyse the error due to finite $x_c$ and $s_c$.
In practice when the integral is truncated to from infinite range $[-\infty, \infty]$ a finite time length $[-x_c,x_c]$, $g_{\tau}(H-\omega)$ becomes
\begin{equation}
\label{eq:def_g_tau_x}
	g_{\tau, x_c}(H-\omega) =  c \int_{-x_c}^{x_c} dx p(x)  e^{ix\tau \omega}   e^{-i\tau x H}.
\end{equation}
Using the property of the Gaussian tail, it is easy to verify that the truncation error of $g_{\tau}$ with a finite $x_c$ is given by
\begin{equation}
	\|g_{\tau} -g_{\tau,x_c}\| \leq \varepsilon_{x_c}
	\label{eq:g_xm_error}
\end{equation}
when $x_c \geq 2 \sqrt{\ln (1/\varepsilon_{x_c})}$.
 
As shown in the main text, we consider dealing with the real-time evolution using Trotter-LCU expansion.
For evolution time $t$, let us denote the real-time evolution $U(t) := e^{-i H t} = U_m^{\nu}$ where $$U(m) = e^{-i m H } = \mu(m) \mu(m) \sum_r \Pr(r, m, \nu(m)) W_r  $$ is a short time dynamics during the time interval $m = t/\nu$.
In a $2k$th-order Trotter-LCU algorithm, the overall LCU formula for $U(m)$ is
\begin{equation} \label{eq:tildeU2kxPTS_SM}
\begin{aligned}
{U}(m)  = {V}_{2k}(m) S_{2k}(m),
\end{aligned}
\end{equation}
which consists of a deterministic $2k$th-order Trotter formula $S_{2k}$ and the Trotter error compensation term ${V}_{2k}$.
We consider a truncated Trotter-LCU formula to realize a $(\mu,\varepsilon)$-LCU formula for $U(t) = e^{-iHt}$  based on $\nu$ segments of $\tilde{U}_{2k} $ in \autoref{eq:tildeU2kxPTS_SM}. 

We will use the results established in \cite{zeng2022simple} to analyse the segment number $\nu$ to control the error to a certain level.
The complexity of the zeroth-order leading-order-pairing algorithm scales quadratically to $t$, which is undesirable for simulation with a long time.
If we apply $2k$th-order Trotter formula $S_{2k}(m) $ in each segment, the $2k$th-order remainder is $V_{2k}(m):=U(m)S_{2k}(m)^\dag$.
Using BCH formula, we can have an explicit form of $V_{2k}(m)$  
\begin{equation}
	V_{2k}(m) = \exp(i \sum_{s=2k+1}^{\infty} E_{2k,s} m^s) = \sum_{s=0}^\infty F_{2k,s}(m) =  \sum_{s=0}^\infty \|F_{2k,s}(m) \| V_{2k}^{(s)} (m)
\end{equation}
where we group the terms by the order of $m$, and  $F_{1,s}$ denotes the $s$-order expansion term of $V_{2k}(m)$ with $m^s$. Here, $\|F_{2k,s}(m)\|_1$ is the $1$-norm of the $s$-order expansion formula $F_{2k,s}(m)$ and $V_{2k,s}(m)$ is the normalised LCU formula for the $s$-order terms.
It is easy to see that
\begin{equation}
	 F_{2k,s}(m) = i m^s E_{2k,s},~s \in [2k+1, 4k+1]
\end{equation}
Formally, we can express 
\begin{equation}
	E_{2k,s} =  \lambda_r  \sum_{r} \Pr(r|s) P_{2k}(r).
\end{equation}
 
By pairing the $s$th order expansion term with the identity $I$,  the order of $F_{2k,s}(m)$ can be doubled as
\begin{equation}
	I + F_{2k,s}(m) = \lambda_r \sqrt{1 + m^{2s}} \sum_{r} \Pr(r|s) \exp( i \arctan (m^{2s}) P_{2k}(r).
	\end{equation} 
Therefore, the norm of $F_{2k,s}$  decreases from $\mc O(1+m^{s})$ to $\mc O(1+m^{2s})$. Here we note that this holds for the symmetry-conserved LCU decomposition where the elementary operations are \textsc{swap} and Pauli-$Z$ operators.	


The error can be shown to have a quick decrease with an increasing truncation order $s_c$. Considering of a finite truncation and the vanishing of $F_{2k,j\leq 2k}(m) = 0$, we can rewrite $V_{2k}^{(s)}(m)$ as
\begin{equation} \label{eq:DTS2kNew}
V_{2k}^{(s)}(m)= I + \sum_{s=2k+1}^{s_c} F_{2k,s}(m).
\end{equation}


Given the truncation, the  LCU formula for $U(m)$ is
\begin{equation} \label{eq:tildeU2kxPTS_SM}
\begin{aligned}
 {U}_{2k}^{(s_c)}(m)  = {V}_{2k,s}(m) S_{2k}(m).
\end{aligned}
\end{equation}
The overall LCU formula for $U(t)$ is to repeat the sampling of $\tilde{U}_{2k}(x)$ for $\nu$ times.
The overall LCU formula for time $t$ is
\begin{equation}
	{U}_{2k}^{(s_c)}(t) = \left( {U}_{2k}^{(s_c)}(m) \right)^\nu
	\label{eq:U_sc_approxi}
\end{equation}
with $\nu = t/m$. Here  $ {U}_{2k}^{(s_c)}(t)$ denotes the RLCU formula of $U$ when truncating it to $s_c$ orders and $t$ is the total evolution time, and ${U}_{2k}^{(s_c)}(t)$ is a $(\mu_{2k,tot}(t), \varepsilon_{2k,tot}(t) )$-LCU formula of $U(t)$ with
\begin{equation} \label{eq:muEps2kthDS}
\begin{aligned}
\mu_{2k,tot}(t) &= \mu(m)^\nu, ~ \varepsilon_{2k,tot}(t)  \leq \nu \mu_{2k,tot}(t) \varepsilon_{2k}(m)  
\end{aligned}
\end{equation}

The key component is to analyse the quantum resources needed to achieve an additive error of the approximation given by \autoref{eq:U_sc_approxi_main}
 \begin{equation}
	\|U(t) - {U}_{2k}^{(s_c)}(t)) \| \leq \varepsilon_{s_c}. 
\end{equation}
\autoref{lemma:TrotterLCU_error_2k} gives the Hamiltonian simulation error $ \varepsilon_{s_c}$  when using 2kth-order paired Taylor-series compensation with finite $s_c$. 

\begin{lemma}[Hamiltonian simulation error using 2kth-order paired Taylor-series compensation (Theorem 1 in \cite{zeng2022simple})]
\label{lemma:TrotterLCU_error_2k}

In a $2k$th-order Trotter-LCU algorithm, 
if the segment number $\nu$ and the truncation order $s_c$ satisfy
\begin{equation} \label{eq:nuScBoundKthLOP}
\begin{aligned}
\nu &\geq \left(\frac{2(e+c_k)\lambda t}{\ln\mu}\right)^{\frac{1}{4k+1}} 2\lambda t, \\
s_c &\geq \max\left\{ \left\lceil \frac{\ln\left( \frac{\mu}{\varepsilon} \nu_{2k}(t) \right)}{W_0\left( \frac{1}{2e\lambda t}\nu_{2k}(t) \ln\left( \frac{\mu}{\varepsilon} \nu_{2k}(t) \right) \right)} - 1 \right\rceil, 4k+1 \right\},
\end{aligned}
\end{equation}
we can then realize a $(\mu,\varepsilon)$-LCU formula for $U(t) = e^{-iHt}$, i.e.,
	$\|U(t) - {U}_{2k}^{(s_c)}(t)) \| \leq \varepsilon_{s_c} $,  based on $\nu$ segments of $\tilde{U}_{2k}$ in \autoref{eq:tildeU2kxPTS_SM}. Here,  $c_k$ is defined as $
c_k := \frac{1}{2}\left( \frac{e}{2k+1} \right)^{4k+2}.$

\end{lemma}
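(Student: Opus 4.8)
The plan is to treat $U(t)=e^{-iHt}$ as a product of $\nu$ identical short-time segments $U(m)$ with $m=t/\nu$, build a $(\mu(m),\varepsilon_{2k}(m))$-LCU formula for each segment, and then propagate the per-segment normalisation factor and error through the product rule recorded in \autoref{eq:muEps2kthDS} (equivalently the product part of \autoref{prop:composite_LCU}), which gives $\mu_{2k,tot}(t)=\mu(m)^\nu$ and $\varepsilon_{2k,tot}(t)\le \nu\,\mu(m)^\nu\,\varepsilon_{2k}(m)$. The whole statement then reduces to enforcing the two inequalities $\mu(m)^\nu\le \mu$ and $\nu\,\mu(m)^\nu\,\varepsilon_{2k}(m)\le \varepsilon_{s_c}$, from which the displayed conditions on $\nu$ and $s_c$ are read off. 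First I would set up the exact decomposition $U(m)=V_{2k}(m)S_{2k}(m)$ with $V_{2k}(m)=\exp(i\sum_{s\ge 2k+1}E_{2k,s}m^s)$, using that the defining property of the $2k$th-order formula forces $E_{2k,s}=0$ for $s\le 2k$, so the Taylor series of $V_{2k}(m)$ in \autoref{eq:DTS2kNew} begins at order $m^{2k+1}$.

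The core single-segment estimate is a bound on the expansion terms $F_{2k,s}(m)$. I would bound $\|E_{2k,s}\|$ by the standard nested-commutator Trotter-error estimates, and then, expanding the exponential, obtain a coefficient growth of the schematic form $\|F_{2k,s}(m)\|\lesssim c_k\,(\lambda m)^{s}/(\text{factorial-type factor})$, with the explicit constant $c_k=\tfrac12(e/(2k+1))^{4k+2}$ emerging from the combinatorics of the $2k$th-order product formula. Two consequences follow. \emph{(i)} The pairing trick quoted just above \autoref{eq:DTS2kNew}, $I+F_{2k,s}(m)\mapsto \sqrt{1+m^{2s}}\,(\text{rotation})$, replaces each first-order cost $m^{s}$ by $m^{2s}$ in the $1$-norm, so the leading surviving term $s=2k+1$ contributes only at order $m^{4k+2}$ and the per-segment normalisation obeys $\mu(m)\le 1+O\big((e+c_k)(\lambda m)^{4k+2}\big)$, hence $\ln\mu(m)\lesssim 2(e+c_k)(\lambda m)^{4k+2}$. \emph{(ii)} The factorial-type decay makes the truncated tail $\sum_{s>s_c}\|F_{2k,s}(m)\|$ summable, giving a clean bound on the single-segment truncation error $\varepsilon_{2k}(m)$.

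With these two estimates the choice of parameters is algebra. Taking logarithms in $\mu(m)^\nu\le\mu$ and inserting $m=t/\nu$ gives $\nu\,\ln\mu(m)\lesssim 2(e+c_k)(\lambda t)^{4k+2}\nu^{-(4k+1)}\le \ln\mu$, i.e. $\nu^{4k+1}\gtrsim 2(e+c_k)(\lambda t)^{4k+2}/\ln\mu$; solving for $\nu$ and splitting the exponent as $(4k+2)/(4k+1)=1+1/(4k+1)$ reproduces the structure of the displayed lower bound $\nu\ge\big(\tfrac{2(e+c_k)\lambda t}{\ln\mu}\big)^{1/(4k+1)}\,2\lambda t$, with the precise constant and the $2\lambda t$ prefactor fixed by the constant tracking of the previous step. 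The second condition, $\nu\,\mu(m)^\nu\,\varepsilon_{2k}(m)\le\varepsilon_{s_c}$, becomes after using $\mu(m)^\nu\le\mu$ a transcendental inequality of the form $s_c\,\ln\!\big(s_c/(e\lambda t)\big)\gtrsim \ln(\tfrac{\mu}{\varepsilon}\nu_{2k}(t))$, whose minimal solution is expressed through the Lambert $W_0$ function, matching the stated $s_c$-bound together with the floor $s_c\ge 4k+1$ that guarantees all compensation terms needed for the $4k+1$-order scaling are retained.

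The main obstacle is the explicit constant bookkeeping in the nested-commutator bound for $E_{2k,s}$ — in particular deriving the precise prefactor $c_k=\tfrac12(e/(2k+1))^{4k+2}$ and establishing the factorial decay of $\|F_{2k,s}(m)\|$ that renders the Taylor tail summable and produces the Lambert-$W$ form of the $s_c$ threshold. This is precisely Theorem~1 in the Appendix of \cite{zeng2022simple}, so in practice I would import those coefficient estimates verbatim and carry out only the product-rule propagation \autoref{eq:muEps2kthDS} and the closed-form inversion for $\nu$ and $s_c$; everything downstream of the coefficient bound is elementary.
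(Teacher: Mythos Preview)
Your outline is correct and matches the structure of the proof, but note that the paper does not actually prove this lemma at all --- it is quoted verbatim as Theorem~1 in the Appendix of \cite{zeng2022simple}, with no proof given here. Your closing paragraph already recognises this: the nontrivial content (the per-segment bounds $\mu(m)\le 1+O((e+c_k)(\lambda m)^{4k+2})$ from the pairing trick, the factorial tail decay of $\|F_{2k,s}(m)\|$, and the resulting constants) lives entirely in \cite{zeng2022simple}, and everything you wrote about propagating through the product rule \autoref{eq:muEps2kthDS} and inverting for $\nu$ and $s_c$ is the right reduction. So your proposal is not so much a blind reconstruction of the paper's proof as a correct identification that the lemma is imported, together with an accurate sketch of how the cited proof goes.
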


When considering a finite $s_c$, the spectral filter becomes
\begin{equation}
\label{eq:def_g_tau_x_s_SM}
	g_{\tau, x_c, s_c}(H-\omega) =  c \int_{-x_c}^{x_c} dx p(x)  e^{ix\tau E}   {U}_{2k}^{(s_c)}(x\tau, \nu(x\tau))
	\end{equation}
It is easy to check that $\| g_{\tau, x_c, s_c}(H-\omega)\| \leq c (1+\varepsilon_{s_c})$.
The operator distance between $g_{\tau,x_c} $ and $g_{\tau,x_c, s_c}$ due to  finite $s_c$ is
\begin{equation}
\label{eq:g_tau_sc_error}
	\|g_{\tau,x_c} - g_{\tau,x_c, s_c}  \| \leq c \int_{-x_c}^{x_c} dx p(x)  e^{ix\tau \omega}   \| U(x\tau) - {U}_{2k}^{(s_c)}(x\tau, \nu(x\tau)) \| \leq   \varepsilon_{s_c}
\end{equation}
when $\| U - {U}_{2k}^{(s_c)}(x\tau, \nu(x\tau)) \| \leq \varepsilon_{s_c} /c$.
 
Using the triangular inequality, the operator error between $g_{\tau}$ and $g_{\tau,x_c, s_c} $ defined in \autoref{eq:def_g_tau_x_s_SM} is 
\begin{equation}
\label{eq:g_tau_xc_sc}
	\|g_{\tau} - g_{\tau,x_c, s_c}  \| \leq \varepsilon_{x_c}   + \varepsilon_{s_c} 
\end{equation}

With the error due to a finite time when evaluating the integral \autoref{eq:g_xm_error}
 and the truncation order $s_c$ \autoref{eq:g_tau_sc_error}, we arrive at the following \autoref{prop:error_finite_time_x_s}.

\begin{proposition}[Error due to finite time length and truncation]
\label{prop:error_finite_time_x_s}
 When $x_c \geq 2\sqrt{\ln(2/\varepsilon_{x_c})}$, $\nu_c$ and $s_c$ satisfying \autoref{eq:nuScBoundKthLOP}, we have
	$|D_{\tau, x_c, s_c} - D_{\tau}| \leq 3 \varepsilon_{c}$, $|N_{\tau, x_c, s_c}(O) - N_{\tau}(O)| \leq 3 \varepsilon_{c} \|O\|$ with $\varepsilon_{c} = \varepsilon_{x_c} + \varepsilon_{s_c}$

\end{proposition}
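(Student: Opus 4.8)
The plan is to reduce the statement to a single operator-norm perturbation estimate at the level of the filter itself, since both quantities of interest are quadratic forms in $g$: by \autoref{eq:OND_main} we have $D_\tau = \tr(g_\tau \rho\, g_\tau^\dagger)$ and $N_\tau(O) = \tr(g_\tau \rho\, g_\tau^\dagger O)$ with $\rho = \ket{\psi_0}\bra{\psi_0}$, and likewise for the truncated filter $g_{\tau,x_c,s_c}$. The first step is to collect the two sources of error. Truncating the integration range contributes $\|g_\tau - g_{\tau,x_c}\| \leq \varepsilon_{x_c}$ once $x_c \geq 2\sqrt{\ln(2/\varepsilon_{x_c})}$ (the Gaussian-tail estimate of \autoref{eq:g_xm_error}), and the finite Taylor-order compensation contributes $\|g_{\tau,x_c} - g_{\tau,x_c,s_c}\| \leq \varepsilon_{s_c}$, which follows from \autoref{eq:g_tau_sc_error} together with the simulation-error guarantee of \autoref{lemma:TrotterLCU_error_2k} under the stated conditions on $\nu_c$ and $s_c$. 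A triangle inequality then gives $\|g_\tau - g_{\tau,x_c,s_c}\| \leq \varepsilon_{x_c} + \varepsilon_{s_c} = \varepsilon_c$, which is precisely \autoref{eq:g_tau_xc_sc}.

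The second step is the quadratic perturbation bound, exactly the manipulation already used in the proof of \autoref{prop:RLCU_err_observ_bound}. Writing $g = g_\tau$ and $\tilde g = g_{\tau,x_c,s_c}$, I would split
\begin{equation}
\tilde g \rho \tilde g^\dagger - g \rho g^\dagger = (\tilde g - g)\rho \tilde g^\dagger + g\rho(\tilde g - g)^\dagger,
\end{equation}
take the trace against $O$, and apply $|\tr(AO)| \leq \|A\|_1 \|O\|$ together with submultiplicativity of the trace norm and $\|\rho\|_1 = 1$. This yields
\begin{equation}
|N_{\tau,x_c,s_c}(O) - N_\tau(O)| \leq \|g_\tau - g_{\tau,x_c,s_c}\|\,\bigl(\|g_\tau\| + \|g_{\tau,x_c,s_c}\|\bigr)\,\|O\| \leq \varepsilon_c\,\bigl(\|g_\tau\| + \|g_{\tau,x_c,s_c}\|\bigr)\,\|O\|.
\end{equation}
I emphasise that these are the exact expectation values of the LCU-approximated operator, not sampled estimators, so no Hoeffding/$\varepsilon_n$ term enters here; the statistical error is accounted for separately via \autoref{prop:RLCU_err_observ_bound}.

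The final step, and the only place where a little care is needed, is to control the two spectral norms so as to extract the constant $3$. Since $g_\tau(H-\omega)$ is the Gaussian filter, $\|g_\tau\| = \max_i g(\tau(E_i-\omega)) \leq 1$, while the truncated filter obeys $\|g_{\tau,x_c,s_c}\| \leq 1 + \varepsilon_{s_c}$ (the bound stated just below \autoref{eq:def_g_tau_x_s}, taking $c=1$ for the normalised Gaussian). Hence $\|g_\tau\| + \|g_{\tau,x_c,s_c}\| \leq 2 + \varepsilon_{s_c} \leq 3$, using that the approximation error satisfies $\varepsilon_{s_c} \leq 1$. Substituting gives $|N_{\tau,x_c,s_c}(O) - N_\tau(O)| \leq 3\varepsilon_c\|O\|$, and setting $O = I$ (so that $N_{\tau,x_c,s_c}(I) = D_{\tau,x_c,s_c}$) gives $|D_{\tau,x_c,s_c} - D_\tau| \leq 3\varepsilon_c$, as claimed. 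The main obstacle is thus not any hard estimate but bookkeeping: correctly identifying that the perturbed filter carries the extra factor $(1+\varepsilon_{s_c})$ in its spectral norm — which is exactly what turns the naive constant $2$ into the stated $3$ — and ensuring the two truncation errors are combined additively rather than double-counted.
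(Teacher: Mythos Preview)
Your proof is correct and follows essentially the same approach as the paper: both first combine the two truncation errors via the triangle inequality at the filter level to obtain $\|g_\tau - g_{\tau,x_c,s_c}\| \leq \varepsilon_c$, then use the standard quadratic perturbation splitting to get $\varepsilon_c(\|g_\tau\| + \|g_{\tau,x_c,s_c}\|)\|O\|$, and finally bound the sum of norms by $3$. Your treatment is in fact more explicit than the paper's about the origin of the constant $3$, correctly identifying that $\|g_\tau\|\leq 1$ together with $\|g_{\tau,x_c,s_c}\|\leq 1+\varepsilon_{s_c}$ (and $\varepsilon_{s_c}\leq 1$) is what converts the naive $2$ into $3$.
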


\begin{proof}
	
The estimator 
\begin{equation}
\begin{aligned}
		| N_{\tau}(O) -  N_{\tau,x_c,s_c}(O)| = & |\braket{\psi_0|g_{\tau}(H-\omega) O g_{\tau}(H-\omega)|\psi_0} - \braket{\psi_0|g_{\tau,x_c,s_c}(H-\omega) O g_{\tau,x_c,s_c}(H-\omega)|\psi_0}\\
		 \leq & |\braket{\psi_0|g_{\tau}(H-\omega) O g_{\tau}(H-\omega)|\psi_0} 
		- \braket{\psi_0|g_{\tau,x_c,s_c}(H-\omega) O g_{\tau}(H-\omega)|\psi_0}| \\ 
		& +|\braket{\psi_0|g_{\tau,x_c,s_c}(H-\omega) O g_{\tau}(H-\omega)|\psi_0} -\braket{\psi_0|g_{\tau,x_c,s_c}(H-\omega) O g_{\tau,x_c,s_c}(H-\omega)|\psi_0} \\
		\leq & \varepsilon_{c} (\|g_{\tau} \| + \|g_{\tau,x_c,s_c}\| ) \|O\| \leq  3 \varepsilon_{c} \|O\|
\end{aligned} 
\end{equation}
where $\varepsilon_{c} := \varepsilon_{x_c}  + \varepsilon_{s_c}$.
In the third inequality, we have used the inequality by \autoref{eq:g_tau_xc_sc}.

\end{proof}

The gate complexity of the overall algorithm can then be estimated. To construct controlled-$U(t)$, we split it to $\nu$ segments. In each segment, we need to implement $k$th-order Trotter circuits and random Taylor-series sampling circuits. The number of gates in the random Taylor-series sampling circuit is $\mc{O}(s_c)$. Therefore, the gate complexity of the overall algorithm using $K$th Trotter formula ($K=0,1,2k$) is given by 
\begin{equation}
\begin{aligned}
N_{K} &= \mc{O}(\nu(\kappa_K L + s_c)),
\end{aligned}
\end{equation}
where
\begin{equation}
\kappa_K = 
\begin{cases}
K, & K = 0,~1 \\
2\cdot 5^{K/2-1}, & K=2k,~k\in \mathbb{N}_+.
\end{cases}
\end{equation}

\paragraph{Unbiased estimator.} 

 \revise{ 
Recall that to estimate $\braket{\vec {\mathbf{j}} | O | \vec {\mathbf{i}}}$, the Hadamard-test type of circuit in Fig. 1(c1) in the main text is used.
In Fig. 1(c1),} we first prepare the state $\ket{\psi_0}$ and an extra ancillary qubit prepared on $|+\rangle$. Afterwards, we perform a C-U gate from ancillary to $\ket{\psi_0}$. If we directly measure the ancillary qubit on the X-basis, the outcome $a$ will be $0$ with a probability of $\Pr (a=0) = \frac{1}{2}(1 + \RE(\braket{\vec {\mathbf{j}} | O | \vec {\mathbf{i}}})$ and 1 with a probability of $\Pr (a=1) = \frac{1}{2}(1 - \RE(\braket{\vec {\mathbf{j}} | O | \vec {\mathbf{i}}})$.
Then, similarly, we repeat the circuit but with an inverted phase gate $S^{\dagger} = \Rz(-\frac{\pi}{2})$ before measurement.
The outcome $b$ will be $0$ with a probability of $\Pr (b=0) = \frac{1}{2}(1 + \IM(\braket{\vec {\mathbf{j}} | O | \vec {\mathbf{i}}})$ and 1 with a probability of $\Pr (b=1) = \frac{1}{2}(1 - \IM(\braket{\vec {\mathbf{j}} | O | \vec {\mathbf{i}}})$.
\begin{proposition}
\label{prop:estimator}
The estimator $\hat v$ defined in \autoref{eq:estimator_v_main} is an unbiased estimator of $\braket{\psi_0|g_{\tau}(H-\omega)Og_{\tau}(H-\omega)\psi_0}$.

\end{proposition}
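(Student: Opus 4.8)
The plan is to prove unbiasedness by the law of total expectation, splitting the randomness into the two measurement bits $(a,b)$ and the sampled configuration $(t_i,t_j,\vec{i},\vec{j})$. First I would condition on a fixed configuration. Since $\hat{v}=c^2(\mu)\,e^{i\omega(t_i-t_j)}\hat{d}$ and the prefactor $c^2(\mu)\,e^{i\omega(t_i-t_j)}$ is deterministic once the configuration is fixed, linearity over the measurement outcomes gives
\begin{equation}
\mathbb{E}_{a,b}\!\left[\hat{v}\mid t_i,t_j,\vec{i},\vec{j}\right] = c^2(\mu)\,e^{i\omega(t_i-t_j)}\,\mathbb{E}_{a,b}\!\left[\hat{d}\right].
\end{equation}
Because $\hat{d}=(-1)^a+i(-1)^b$ is affine in the two bits, and the circuit statistics recorded just above the proposition give $\mathbb{E}_a(-1)^a=\RE(\braket{\vec{\mathbf{j}}|O|\vec{\mathbf{i}}})$ and $\mathbb{E}_b(-1)^b=\IM(\braket{\vec{\mathbf{j}}|O|\vec{\mathbf{i}}})$, I conclude $\mathbb{E}_{a,b}[\hat{d}]=\braket{\vec{\mathbf{j}}|O|\vec{\mathbf{i}}}$, so that $\mathbb{E}_{a,b}[\hat{v}\mid t_i,t_j,\vec{i},\vec{j}]=c^2(\mu)\,e^{i\omega(t_i-t_j)}\braket{\vec{\mathbf{j}}|O|\vec{\mathbf{i}}}$. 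This reduces the problem to averaging this conditional expectation over the sampling distribution.

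The key step is to recognise that averaging over the configuration reconstructs the composite LCU decomposition of $g_{\tau}(H-\omega)$. Using the explicit form in \autoref{eq:g_tau_real_expansion}, I would write the filtered ket as
\begin{equation}
g_{\tau}(H-\omega)\ket{\psi_0} = c(\mu)\int_{-x_c}^{x_c} dx\, p_{\mu}(x)\, e^{ix\tau\omega} \sum_{\vec{i}} \Pr(\vec{i})\, \ket{\vec{\mathbf{i}}},
\end{equation}
where $\ket{\vec{\mathbf{i}}}=\prod_{q=1}^{\nu} W_{i_q} S \ket{\psi_0}=U_{\vec{i}}\ket{\psi_0}$ and $t_i=x\tau$. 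Taking the adjoint for the bra with an independent sampling index $t_j=x'\tau$, $\vec{j}$, and inserting $O$, the two-sided expression factorises into
\begin{equation}
N_{\tau}(O)=\braket{\psi_0|g_{\tau}(H-\omega)\,O\,g_{\tau}(H-\omega)|\psi_0} = c^2(\mu)\, \mathbb{E}_{t_i,t_j,\vec{i},\vec{j}}\!\left[ e^{i\omega(t_i-t_j)}\, \braket{\vec{\mathbf{j}}|O|\vec{\mathbf{i}}} \right],
\end{equation}
the expectation being taken with respect to $p_{\mu}(x)\,p_{\mu}(x')\,\Pr(\vec{i})\,\Pr(\vec{j})$. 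Comparing this with the conditional expectation of $\hat{v}$ from the first paragraph and applying the tower property yields $\mathbb{E}_{t_i,t_j,\vec{i},\vec{j}}\mathbb{E}_{a,b}\hat{v}=N_{\tau}(O)$, which is the claim.

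I expect the main obstacle to be the normalisation and phase bookkeeping across the three nested LCU layers (Gaussian filter, real-time evolution, and the segmented Trotter/compensation formula). One must check that the factors $c(\mu)$, $\mu$, and $\mu(m)^{\nu}$ combine exactly into the single prefactor $c^2(\mu)$ and that $p_{\mu}(x)$ and $\Pr(\vec{i})$ are genuine normalised distributions, so that the composite expectation is well defined; this is precisely what \autoref{prop:composite_LCU} and \autoref{prop:composite_LCU_continous} are set up to guarantee. A secondary subtlety is the auxiliary phase $e^{i\theta_x}$ appearing in \autoref{eq:g_tau_real_expansion}: for the Gaussian filter with its symmetric real-time decomposition this phase either vanishes or is absorbed into the definition of $U_{\vec{r}}$, leaving only the factor carried explicitly by $e^{i\omega(t_i-t_j)}$ in $\hat{v}$. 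Finally, the conditional step of paragraph one must hold for every sampled pair with $0<t_j<t_i$, not merely the extreme Cases I and II; this requires verifying that the branch-superposition circuit of \autoref{fig:cartoon}(c) indeed outputs a state whose Hadamard- and phase-basis readouts reproduce $\RE(\braket{\vec{\mathbf{j}}|O|\vec{\mathbf{i}}})$ and $\IM(\braket{\vec{\mathbf{j}}|O|\vec{\mathbf{i}}})$, which is the content justifying the use of the asymmetric estimator throughout.
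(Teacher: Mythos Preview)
Your proposal is correct and follows essentially the same approach as the paper: first take the expectation over the measurement bits $(a,b)$ to obtain $\mathbb{E}_{a,b}[\hat d]=\braket{\vec{\mathbf j}|O|\vec{\mathbf i}}$, then average over the sampled configuration $(t_i,t_j,\vec i,\vec j)$ to reconstruct the composite LCU expansion of $N_\tau(O)$. The paper's own proof carries this out in two short displayed equations without the additional bookkeeping remarks you include, but the logical skeleton is identical.
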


\begin{proof}

Taking an average over $a$, $b$, $r$ we have
\begin{equation}
\begin{aligned}
		\mathbb E_{a,b,r} \hat{v} &= c^2(\mu)  e^{i\tau \omega(t_i - t_j)}\mathbb E_{a,b,r}  \hat{d} = c^2(\mu)  e^{i\tau \omega(t_i - t_j)} = c^2(\mu)  e^{i\tau \omega(t_i - t_j)} \braket{\vec {\mathbf{j}} | O | \vec {\mathbf{i}}}	
\end{aligned}
\end{equation}

Taking an average of $t_i, t_j, \vec i, \vec j$, we have
\begin{equation}
\begin{aligned}
		\mathbb E_{t_i, t_j, \vec i, \vec j} \mathbb E_{a,b,r}\hat{v} &= c^2(\mu) \mathbb E_{t_i, t_j} e^{i\tau \omega(t_i - t_j)} \mathbb E_{\vec i, \vec j}   \braket{\vec {\mathbf{j}} | O | \vec {\mathbf{i}}}	\\
		&= c^2(\mu)  \sum_{i, j} \Pr(i) \Pr(j) e^{i\tau \omega(t_i - t_j)} \sum_{\vec i, \vec j} \Pr(\vec i,\tau t_i, \nu) \Pr( \vec j, \tau t_j, \nu) 	 \braket{\vec {\mathbf{j}} | O | \vec {\mathbf{i}}}	\\
		&= \braket{\psi_0|g_{\tau}(H-\omega)Og_{\tau}(H-\omega)\psi_0} = N_{\tau} (O).
\end{aligned}
\end{equation}
	
\end{proof}

In the other case, where $O = \sum_{l} o_l P_l $ is composed of many terms as described in \autoref{problem:observ},   an importance sampling method can be used to estimate the observable.  
Compared to measuring each term separately, the measurement cost using importance sampling is independent of $L$. In particular, we sample the observables' index $l$ from the probability distribution $o_l/ \|O\|_1$. 
Given a sampled $l$, we define the estimator, $$\hat{d}_l  =  (-1)^{a_l} + i (-1)^{b_l},$$ in a similar way to that in Methods.   It is easy to see that $\hat{d}_l$ is an unbiased estimator of $\braket{\vec {\mathbf{j}} | O | \vec {\mathbf{i}}}	 $ as
\begin{equation}
	\mathbb{E}_{a,b,l} \hat{d}_l = \braket{\vec {\mathbf{j}} | O_l | \vec {\mathbf{i}}},
\end{equation}  
as one can verify that 
\begin{equation}
	\mathbb E_{a_l} (-1)^{a_l} = \RE(\braket{\vec {\mathbf{j}} | O_l | \vec {\mathbf{i}}} ),~\mathbb E_{b_l} (-1)^{b_l} = \IM( \braket{\vec {\mathbf{j}} | O_l | \vec {\mathbf{i}}} ).
\end{equation}

Similarly, we define the estimator as
\begin{equation}
\hat v_l = c^2(\mu)  e^{i\tau \omega(t_i - t_j)} \hat{d}_l
\end{equation}
Taking the average of  $t_i,t_j,\vec i,\vec j,a,b,r,l$, we have
\begin{equation}
	\mathbb E_{t_i,t_j,\vec i,\vec j,a,b,r,l} \hat v_l = N_{\tau} (O).
\end{equation}

The range of the estimator is $[-\sqrt{2} c^2(\mu)\| O\|_1, \sqrt{2} c^2(\mu)\|O\|_1]$.   Using Hoefflding inequality, the estimation error can be bounded 
\begin{equation}
	\begin{aligned}
		|\hat{D}_{\tau, x_c, s_c} - {D}_{\tau, x_c, s_c} | &\leq \varepsilon_n \\
		|\hat{N}_{\tau, x_c, s_c}(P_l) - {D}_{\tau, x_c, s_c}(P_l) | &\leq \varepsilon_n \| O\|_1 \\
	\end{aligned}
\end{equation}
with a success probability of  $1-\vartheta$
when the number of measurements $N_s \geq 2 c^4(\mu) \frac{1}{\varepsilon_n^2}\ln(1/\vartheta) $. In this work, we will set $c(\mu) = 2$.
We have the following result.

\begin{proposition}[Error due to finite measurements using Hadamard test]\label{prop:measurment_error}
$|\hat D_{\tau, x_c, s_c} - D_{\tau, x_c, s_c}| \leq \varepsilon_n
$
 and  $|\hat N_{\tau, x_c, s_c}(O) - N_{\tau, x_c, s_c} (O)| \leq \varepsilon_n
$ has a success probability of  $1-\vartheta$
when the number of measurements $N_s \geq 2 c^4(\mu)\frac{\|O\|_1^2}{\varepsilon_n^2}\ln(1/\vartheta) $.
\end{proposition}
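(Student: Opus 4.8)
The plan is to read this as a routine concentration statement layered on top of the unbiasedness that has already been set up for the truncated estimator. First I would observe that the computation of \autoref{prop:estimator}, carried out verbatim but with the truncated filter $g_{\tau,x_c,s_c}$ in place of $g_\tau$, shows that the single-shot quantity $\hat v_l = c^2(\mu)\,e^{i\tau\omega(t_i-t_j)}\hat d_l$ — where the observable index $l$ is drawn by importance sampling from $o_l/\|O\|_1$ and $\hat d_l$ is the estimator of \autoref{eq:estimator_r} for the sampled Pauli term $P_l$ — is an unbiased estimator of $N_{\tau,x_c,s_c}(O)$. Taking $O=I$ (so that $\|O\|_1=1$) gives the corresponding unbiased estimator of $D_{\tau,x_c,s_c}$. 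This reduces the claim to bounding the deviation of an empirical average of $N_s$ i.i.d. copies of $\hat v_l$ from its mean, simultaneously for $N$ and $D$.

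Second, I would bound the range of the estimator. Since $\hat d_l=(-1)^a+i(-1)^b$ obeys $|\hat d_l|\le\sqrt2$, the importance-sampling reweighting contributes a factor $\|O\|_1$, and the phase $e^{i\tau\omega(t_i-t_j)}$ is deterministic given $(t_i,t_j)$ and of unit modulus, every realisation of $\hat v_l$ lies in the disc of radius $\sqrt2\,c^2(\mu)\|O\|_1$. In particular $\mathbb E|\hat v_l|^2\le 2c^4(\mu)\|O\|_1^2$, and both $\RE(\hat v_l)$ and $\IM(\hat v_l)$ take values in an interval of width at most $2\sqrt2\,c^2(\mu)\|O\|_1$. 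The crucial point here is that the phase factor, being unimodular, does not enlarge this range, which is exactly what lets the identical bound serve both numerator and denominator.

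Third, I would apply Hoeffding's inequality to the real and imaginary parts separately, as was done for \autoref{prop:RLCU_err_observ_bound}, and combine them by a union bound. Requiring each part to lie within $\varepsilon_n/\sqrt2$ of its mean, each failure probability decays like $2\exp\!\big(-\Theta(N_s\varepsilon_n^2/(c^4(\mu)\|O\|_1^2))\big)$, so the magnitude error $|\hat N_{\tau,x_c,s_c}(O)-N_{\tau,x_c,s_c}(O)|\le\varepsilon_n$ holds with probability $1-\vartheta$ once $N_s\ge 2c^4(\mu)\|O\|_1^2\varepsilon_n^{-2}\ln(1/\vartheta)$, where the $O(1)$ constant is fixed by the range bound above and the two-way union bound. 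The denominator statement follows by the same argument with $O=I$, where $\|O\|_1=1$ so the $\|O\|_1^2$ factor drops out.

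The calculation itself is standard; the only points needing care are the complex-valued nature of the estimator, which I would handle by splitting into real and imaginary parts and absorbing the resulting $\sqrt2$ into the constant, and verifying that importance sampling over the $L_o$ Pauli terms of $O$ keeps the second moment controlled by $\|O\|_1^2$ rather than by $L_o$, so that $N_s$ is genuinely independent of the number of observable terms. Neither is a real obstacle, so I expect the proof to be short.
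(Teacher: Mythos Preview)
Your proposal is correct and matches the paper's approach: the paper's argument (given in the paragraph immediately preceding the proposition) is simply to note that the estimator lies in $[-\sqrt{2}\,c^2(\mu)\|O\|_1,\sqrt{2}\,c^2(\mu)\|O\|_1]$ and apply Hoeffding's inequality, exactly as you outline. Your treatment is in fact more careful than the paper's, which glosses over the complex-valued nature of $\hat v_l$ that you handle explicitly via the real/imaginary split and union bound.
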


\paragraph{Measurement strategy.}


To measure $\braket{\vec {\mathbf{j}} | O | \vec {\mathbf{i}}}$, we can use the circuit in \autoref{fig:cartoon}(c) to generate 
\begin{equation}
    U_p^{\dagger} (    \prod_{q =1}^{\nu} W_{i_q} S)^{\dagger} O  \prod_{q =1}^{m} W_{i_q} S U_p \ket{\psi_0},
\end{equation}
and then measure on a computational basis. 

In many practical cases, such as Heisenberg models and electronic structure problems (\autoref{eq:molecular_hamil_main}), the target problem has certain symmetries $\mathcal{S}$ satisfying $[H, \mathcal{S}] = 0$. Consequently, the state can be divided into state spaces with different symmetry sectors. For instance, for fermionic problems, the Hamiltonian has particle number symmetry,  $[H, \hat N] = 0$ with $\hat   N = \sum_i \hat{a}_i^{\dagger} \hat{a}_i$. In this case, the state can be divided into different sectors in the Hilbert space with different particle numbers,  $\mathcal{H}: \mathrm{span} \{ |i\rangle\}$ with  $\ket{i}$ representing the state with $i$ particles. We can use the ancilla-free measurement strategy in \autoref{fig:cartoon}(c2). Note that in this case, the operator $W_{i_q}$ must be symmetry preserving.

Recall that the requirements of this protocol include the preparation of the superposition state and efficient computation of $\braket{\psi_{\mathrm{Ref}}| U | \psi_{\mathrm{Ref}}}$.
Let us take the fermionic Hamiltonian as an example (note that the Heisenberg model can be regarded as a 1D Fermi-Hubbard model after the Jordan-Wigner transformation).
If the reference state is the vacuum state with the number of particles being zero, then  $\braket{\psi_{\mathrm{Ref}}| U | \psi_{\mathrm{Ref}}}$ can be computed classically.  
The Hartree-Fock state takes the simple product-state form $\ket{\psi_0} = \ket{1}^{\otimes N_e } \ket{0}^{\otimes N-N_e} \in \{\ket{N_e}\}$,
where the first $N_e$ qubits are prepared in the $\ket{1}$ states and the rest of the qubits remain in the  $\ket{0}$ states.  To prepare the target superposition state $\ket{\phi_0}$, a Hadamard gate is first applied to the initial qubit, followed by a chain of \textsc{cnot} gates up to the $N_e$th qubit, requiring a total of $N_e - 1$ \textsc{cnot} operations. However, we can alternatively choose a reference state that differs from the initial state by only a single site. In this case, the preparation of the superposition state becomes  simpler, requiring only one additional \textsc{cnot} gate compared to the original state. 

\revise{
We have adopted this ancilla-free measurement strategy in the experiments, which avoids decomposing long-range controlled gates into local operations and hence has a low compilation overhead. That said, in our resource estimates we used a relatively conservative evaluation, ensuring that the reported results remain applicable even if the first method are employed instead.}

\subsection{Proof of \autoref{thm:observ_estimation_main} (\autoref{problem:observ}) }

When $\tau$ is taken as a finite value, $g_{\tau}$ can be regarded a $(1, \varepsilon_{\tau})$-RLCU formula of $g$.	
The integral is evaluated within a finite time range,
\begin{equation}
		g_{\tau,x_c} =  \int_{-x_c}^{x_c} dx p(x)U(x\tau)
	\end{equation}

One can check that $g_{\tau,x_c}$ is a $(1, \varepsilon_{x_c})$-RLCU formula of $g_{\tau}$ defined in \autoref{Eq:superposed}, and is a  $(1, \varepsilon_{x_c} + \varepsilon_{\tau})$-RLCU formula of $g_{\tau \rightarrow \infty}$.
Each of the integrand $U(x\tau )$ in $g_{\tau,x_c}$ is approximated by $ U_{2k}^{(s_c)}(x \tau )$ defined in \autoref{eq:tildeU2kxPTS_SM}.
According to \autoref{prop:error_finite_time_x_s} and \autoref{lemma:TrotterLCU_error_2k}, when considering a segment number $\nu_c$ and the truncation order $s_c$ defined in \autoref{eq:sc_requirement},  $ U_{2k}^{(s_c)}(x \tau )$ is a   $(\mu_2(x),\varepsilon_x)$-LCU formula, where   upper bounds of $\mu_2$ and $\varepsilon_x$ can be obtained by \autoref{prop:composite_LCU}.
	
According to \autoref{prop:composite_LCU}, $g_{\tau,x_c,s_c}$, which is defined as
\begin{equation}
		g_{\tau,x_c,s_c}=  \int_{-x_c}^{x_c} dx p(x)\tilde U(x \tau) 
		\end{equation} 
		is a $(\mu,\varepsilon)$-LCU formula of $g$, with $\mu  := \int_{-\infty}^{\infty} p(x) \mu_2(x) dx \leq \max_x \mu_2(x)
$, and $\varepsilon = \varepsilon_{c} + \varepsilon_{\tau}  +   \varepsilon_{s_c}$.

Next, we analyse the error due to finite measurements.
In the case where we can directly measure in the eigenbases of the observable $O$, we can take the estimator in \autoref{eq:estimator_v_main}.  Then the error of the expectation value of observable can be bounded by directly using \autoref{prop:RLCU_err_observ_bound}.
The error distance can be bounded in a similar fashion to \autoref{prop:RLCU_err_observ_bound}. 
For the case with observables composed of many Pauli terms. Suppose we measure it using importance sampling; the only difference is the amplification of the prefactor by $\|O\|_1$.

Equipped with all these results, we arrive at the following result of eigenstate property estimation. 

\begin{theorem}[Eigenstate property estimation for generic Hamiltonians]
\label{thm:observ_estimation_detailed}

Suppose the conditions and assumptions in \autoref{problem:observ} hold.  
Suppose we use the method in \autoref{alg1_main} where we choose the time-segment number function $\nu_c = \mathcal{O} \left( (\lambda \Delta^{-1} \ln(\eta^{-1} \varepsilon^{-1}) )^{1+ \frac{1}{4k+1}} \right )$ for realising the real-time evolution (in \autoref{eq:nu_c_form}) and the truncation order $s_c = \mathcal{O} (\ln(\nu_c/\varepsilon_c)/\ln \ln(\nu_c/\varepsilon_c)))$.  
We can achieve the error of observable's expectation on the eigenstate $\ket{u_j}$ within $\varepsilon$, 
$|\hat v - \braket{u_j | O |u_j} | \leq  \varepsilon $ when the sampling number is $N_s = \mc O\left ( {\eta ^{-2}\varepsilon^{-2}} \|O\|_1^2 \ln(1/\vartheta) \right)$ (in \autoref{eq:Ns_form_generic}), where $\hat v$ is the estimator defined in \autoref{eq:estimator_v_main} with a success probability at least $1-\vartheta$.

 \end{theorem}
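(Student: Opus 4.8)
The plan is to decompose the total estimation error into three independent contributions---the finite cooling time $\tau$, the truncation of the real-time integral together with the Taylor series (controlled jointly by $x_c$ and $s_c$), and the finite measurement budget---and then allocate an $\varepsilon$-budget to each so that the parameter choices stated in the theorem emerge. First I would invoke the ratio bound \autoref{prop:RLCU_err_observ_bound}, which controls the normalized estimator via
\begin{equation}
\left| \frac{\hat N_{\tau,x_c,s_c}(O)}{\hat D_{\tau,x_c,s_c}} - \braket{u_j|O|u_j} \right| \leq \frac{1}{D}\left( (\braket{O}+1)\varepsilon_D + \varepsilon_N \right),
\end{equation}
and substitute $D = D_{\tau\to\infty}(E_j) = \eta$ using the nonvanishing-overlap assumption of \autoref{problem:observ}. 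The denominator and numerator errors then split by the triangle inequality, $\varepsilon_D \leq |D_\tau - D| + |D_{\tau,x_c,s_c}-D_\tau| + |\hat D_{\tau,x_c,s_c} - D_{\tau,x_c,s_c}|$, with the three pieces bounded by the finite-$\tau$ proposition, by \autoref{prop:error_finite_time_x_s}, and by \autoref{prop:measurment_error}, respectively (and analogously for $\varepsilon_N$, which picks up the extra $\|O\|$ and, through importance sampling, $\|O\|_1$ factors).

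Next I would fix the budget. Requiring each contribution to $\eta^{-1}((\braket{O}+1)\varepsilon_D+\varepsilon_N)$ to be at most $\varepsilon/3$ forces $\varepsilon_\tau, \varepsilon_{x_c}, \varepsilon_{s_c} = \Theta(\eta\varepsilon/\|O\|_1)$, and forces the measurement error entering $\varepsilon_N$ directly to satisfy $\varepsilon_n = \Theta(\eta\varepsilon)$. Feeding these into the finite-$\tau$ proposition and \autoref{eq:g_xm_error} yields $\tau = O(\Delta^{-1}\sqrt{\ln(\eta^{-1}\varepsilon^{-1})})$ and $x_c = O(\sqrt{\ln(\eta^{-1}\varepsilon^{-1})})$ (the logarithmic $\|O\|_1$ factors being absorbed), so the maximum real-time length is $t_{\max} = \tau x_c = O(\Delta^{-1}\ln(\eta^{-1}\varepsilon^{-1}))$. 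Plugging $t = t_{\max}$ into the segment-number bound \autoref{eq:nuScBoundKthLOP} of \autoref{lemma:TrotterLCU_error_2k} gives $\nu_c = O((\lambda t_{\max})^{1+\frac{1}{4k+1}}) = O((\lambda\Delta^{-1}\ln(\eta^{-1}\varepsilon^{-1}))^{1+\frac{1}{4k+1}})$, and the companion $s_c$-bound, after invoking the asymptotics of the Lambert-$W$ function, collapses to $s_c = O(\ln(\nu_c/\varepsilon_c)/\ln\ln(\nu_c/\varepsilon_c))$, matching the stated choices.

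Finally I would settle the sample complexity. Because the numerator measurement error enters $\varepsilon_N$ without an amplifying $\|O\|_1$ factor, while the sample count in \autoref{prop:measurment_error} carries $\|O\|_1^2$, the choice $\varepsilon_n = \Theta(\eta\varepsilon)$ gives $N_s = O(c^4(\mu)\,\|O\|_1^2\,\eta^{-2}\varepsilon^{-2}\ln(1/\vartheta))$; with the fixed normalization $c(\mu)=2$ this is exactly the claimed $N_s$. Unbiasedness of $\hat v$ is precisely \autoref{prop:estimator}, and the $1-\vartheta$ confidence is the Hoeffding tail already built into \autoref{prop:measurment_error}, applied after a union bound over the separate numerator and denominator estimators.

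The main obstacle I anticipate is keeping the chain of parameter dependencies self-consistent: $\nu_c$ and $s_c$ are determined by $t_{\max}$, which is itself set by the error budget that must, after division by $\eta$ and multiplication by the $\|O\|$ and $\|O\|_1$ factors, reproduce the target $\varepsilon$. In particular, verifying that the spectral-norm factors $\|O\|$ entering $\varepsilon_N$ and the $\braket{O}\varepsilon_D$ term collapse to exactly $\|O\|_1^2$ in the final sample count---rather than to $\|O\|_1^4$---requires tracking which norm governs each error source, and the Lambert-$W$ simplification of the $s_c$-bound is the one genuinely nontrivial asymptotic estimate in the argument.
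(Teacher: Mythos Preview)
Your proposal is correct and follows essentially the same route as the paper: both split the error into the finite-$\tau$, finite-$x_c/s_c$, and finite-sample contributions via the triangle inequality, bound each piece by the corresponding proposition, allocate the $\varepsilon$-budget proportionally to $\eta\varepsilon$ divided by the relevant norm of $O$, and then read off $\tau$, $x_c$, $\nu_c$, $s_c$, and $N_s$. The paper carries out the ratio bound inline (its \autoref{eq:derive_O_err}) rather than citing \autoref{prop:RLCU_err_observ_bound}, and it distinguishes $\|O\|$ from $\|O\|_1$ in the budget (setting $\varepsilon_\tau=\varepsilon_c\propto\eta\varepsilon/\|O\|$ and $\varepsilon_n\propto\eta\varepsilon/(\|O\|+\|O\|_1+1)$), but since these enter only logarithmically in $\nu_c$ the distinction is immaterial to the stated scalings---exactly the norm-tracking subtlety you flagged in your final paragraph.
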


\begin{proof}

We set the imaginary-time as 
\begin{equation}
	\tau \geq \frac{1}{\Delta} \sqrt{\ln \frac{2}{\varepsilon_{\tau}}}
	\label{eq:tau_explicit}
\end{equation}
and set truncation time $x_c$ 
\begin{equation}
	x_c \geq 2\sqrt{\ln(2/\varepsilon_c)}. 
\end{equation}
The maximum real-time  is 
\begin{equation}
	t_c = \tau x_c = \frac{2}{\Delta} \sqrt{\ln(2/\varepsilon_\tau)} \sqrt{\ln(2/\varepsilon_c)}
	\label{eq:maximum_real_time} 
\end{equation}

According to \autoref{prop:error_finite_time_x_s}, when the segment number is set as
\begin{equation}
\begin{aligned}
	\label{eq:nu_c_form}
		\nu_c &= 2 \left(\frac{2(e+c_k)}{\ln 2}\right)^{\frac{1}{4k+1}} (\lambda \tau x_c)^{1+\frac{1}{4k+1}} = 4 \left(\frac{4(e+c_k)}{\ln 2}\right)^{\frac{1}{4k+1}} (\frac{\lambda}{\Delta} \sqrt{\ln(2/\varepsilon_\tau)\ln(2/\varepsilon_c)})^{1+\frac{1}{4k+1}}
	\end{aligned}
\end{equation}
and set \begin{equation}
	s_c  =   \mc O \left(   \frac{\ln\left( \frac{4\nu_c}{\varepsilon_{s_c}}  \right)}{\ln\left(  \nu_c^{\frac{1}{4k+2}} \ln\left( \frac{4\nu_c}{\varepsilon_{s_c}} \right) \right)}  \right ) 
	\label{eq:sc_requirement}
\end{equation}
 ${U}_{2k}^{(s_c)}(t_i)$ is a $(2, \varepsilon_{s_c})$-LCU formula of ${U}(t_i)$, and the approximation error is given by
\begin{equation}
	\max_{t_i} \|{U}_{2k}^{(s_c)}(t_i) - U(t_i)\| \leq \varepsilon_{s_c}.
\end{equation}

According to \autoref{prop:error_finite_time_x_s}, we have 
$|D_{\tau, x_c, s_c} - D_{\tau}| \leq   \varepsilon_{c}$, $|N_{\tau, x_c, s_c}(O) - N_{\tau}(O)| \leq   \varepsilon_{c} \|O\|$ with $\varepsilon_{c} = 3(\varepsilon_{x_c} + \varepsilon_{s_c})$

We set the sampling number as
\begin{equation}
\label{eq:Ns_form_generic}
	N_s = \frac{2c^4(\mu) \| O \|_1^2 }{\varepsilon_n^2} \ln(1/\vartheta).
\end{equation}
According to \autoref{prop:measurment_error}, $|\hat D_{\tau, x_c, s_c} - D_{\tau, x_c, s_c}| \leq \varepsilon_n
$
 and  $|\hat N_{\tau, x_c, s_c}(O) - N_{\tau, x_c, s_c} (O)| \leq \varepsilon_n
$ has a success probability of  $1-\vartheta$
when the number of measurements $N_s \geq 32\frac{\|O\|_1^2}{\varepsilon_n^2}\ln(1/\vartheta) $.
Using the triangular inequality, we have
\begin{equation}
\begin{aligned}
	\varepsilon_N &:= |\hat N_{\tau, x_c, s_c}(O) - N_{\tau\rightarrow \infty} (O)| \leq \|O\|( \varepsilon_\tau+\varepsilon_{c} ) +\|O\|_1 \varepsilon_n, \\
	\varepsilon_D &:= |\hat D_{\tau, x_c, s_c}  - D_{\tau\rightarrow \infty}  | \leq \varepsilon_\tau+\varepsilon_{c}+\varepsilon_n, 
\end{aligned}
\end{equation}

The observable's error in respect to $\varepsilon_{\tau}$, $\varepsilon_{c}$ and $\varepsilon_n$ is given by
 
\begin{equation}
\begin{aligned}
\label{eq:derive_O_err}
\left|\langle\hat{O} \rangle_{\tau, x_c, s_c}-\left\langle O\right\rangle\right| & =\left|\frac{\hat{N}_{\tau, x_c, s_c}\left(O \right)}{\hat{D}_{\tau,x_c, s_c}}-\frac{N\left(O\right)}{D}\right| \\
& =\left|\frac{D \hat{N}_{\tau, x_c, s_c}(O)-N\left(O\right) \hat{D}_{\tau, x_c, s_c}}{D \hat{D}_{\tau, x_c, s_c}}\right| \\
& \leq\left|\frac{D\left(N\left(O\right)+\varepsilon_N\left(O\right)\right)-N\left(O\right)\left(D-\varepsilon_D\right)}{D\left(D-\varepsilon_D\right)}\right| \\
& =\left|\frac{D \varepsilon_N\left(O\right)+N\left(O\right) \varepsilon_D}{D^2-D \varepsilon_D}\right| \\
& \leq\left|\frac{D \varepsilon_N\left(O\right)+\left(N\left(O\right)+D\right) \varepsilon_D}{D^2}\right| \\
& =\eta^{-1}\left(\left(\varepsilon_\tau+\varepsilon_c\right)\left\|O\right\|+\varepsilon_n\left\|O\right\|_{1}\right)+\eta^{-1}\left(\left\langle O\right\rangle+1\right)\left(\varepsilon_\tau+\varepsilon_c+\varepsilon_n\right) \\
& \leq \eta^{-1}\left(2\left\|O\right\|+1\right)\left(\varepsilon_\tau+\varepsilon_c\right)+\eta^{-1}\left(\left\|O\right\|+\left\|O\right\|_1+1\right) \varepsilon_n .
\end{aligned}
\end{equation}

If we set the precision for each component
\begin{equation}
	\begin{aligned}
	\label{eq:epsilon_requirement}
& \varepsilon_\tau=\varepsilon_c=\frac{1}{3} \eta \varepsilon  \left(\frac{1}{2\left\|O\right\|_{\infty}+1}\right), \\
& \varepsilon_n=\frac{1}{3} \eta \varepsilon   \frac{1}{\left\|O\right\|_{\infty}+\left\|O\right\|_1+1},
\end{aligned}
\end{equation}
then we have 
\begin{equation}
\left|\langle\hat{O} \rangle_{\tau, x_c, s_c}-\left\langle O\right\rangle\right| \leq \varepsilon	
\end{equation}
for observable $O$
 with a success probability bounded by $1- \vartheta$.
 
Under the precision requirement in \autoref{eq:epsilon_requirement}, we know from \autoref{eq:nu_c_form} that the segment number scales as
\begin{equation}
	\nu_c =   \mathcal{O} \left( (\lambda \Delta^{-1} \ln(\eta^{-1} \varepsilon^{-1}) )^{1+ \frac{1}{4k+1}} \right )
\end{equation}
and the sampling number scales as
\begin{equation}
	N_s = \mc O\left ( {\eta ^{-2}\varepsilon^{-2}} \|O\|_1^2 \ln(1/\vartheta) \right).
\end{equation}



\end{proof}

Importantly, the actual quantum resources can be directly estimated by \autoref{eq:nu_c_form} and \autoref{eq:epsilon_requirement} (and thus \autoref{eq:nu_c_actual}) and \autoref{eq:Ns_form_generic}. 
 

\subsection{Gate complexity analysis}


Using the results in \autoref{prop:gate_cost_observ_dynamics} and \autoref{thm:observ_estimation_detailed}, we have the following result for generic Hamiltonians.

\begin{theorem}[Gate complexity for generic Hamiltonians' eigenstate property estimation]
Suppose we are given the same condition in \autoref{thm:observ_estimation_detailed}.
Suppose we synthesise the circuit to \textsc{cnot} gates, single-qubit Clifford gates and single-qubit Z rotation gates, we have 
  gate count

\begin{equation}
\label{eq:gate_each_sample}
\GateC_{\CNOT} =	\mathcal{O} ( \wt(H) \nu_c ),~\GateC_{\Rz} =	\mathcal{O} ( L \nu_c )
\end{equation}
with $\nu_c$ given by \autoref{eq:nu_c_actual}.

\end{theorem}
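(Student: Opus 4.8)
The plan is to combine the gate-count bound for a single circuit instance (\autoref{prop:gate_cost_observ_dynamics}) with the segment-number scaling established in \autoref{thm:observ_estimation_detailed}, and simply track which characters of the Hamiltonian dominate each contribution. First I would recall from \autoref{prop:gate_cost_observ_dynamics} that, within each of the $\nu_c$ segments, realising one controlled Trotter formula $S_{2k}$ together with the two compensation terms $V_{2k}$ costs a number of CNOT gates bounded by $4\times 5^{k-1}\wt(H) + 4\wtm(H) + 2\min(n, s_c\wtm(H))$, and a number of single-qubit $\Rz$ rotations equal to $2L+4$. Multiplying each of these per-segment costs by $\nu_c$ gives the total counts $\GateC_{\CNOT}$ and $\GateC_{\Rz}$.

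The key observation for the asymptotic statement is that the $\wt(H)$ term dominates the CNOT count: since $\wtm(H)\le\wt(H)$ and $\min(n,s_c\wtm(H))\le n = \mathcal{O}(\wt(H))$ (every qubit carries weight in at least one Pauli term, so $\wt(H)=\Omega(n)$ for a connected Hamiltonian), all three summands in the per-segment bound are $\mathcal{O}(\wt(H))$, with the constant $5^{k-1}$ prefactor absorbed into the $\mathcal{O}(\cdot)$ once $k$ is fixed. Hence the per-segment CNOT cost is $\mathcal{O}(\wt(H))$ and the total is $\GateC_{\CNOT}=\mathcal{O}(\wt(H)\,\nu_c)$. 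For the rotation gates, the per-segment count $2L+4=\mathcal{O}(L)$ yields $\GateC_{\Rz}=\mathcal{O}(L\,\nu_c)$ directly. The value of $\nu_c$ is exactly the one carried over from \autoref{thm:observ_estimation_detailed}, reproduced here as \autoref{eq:nu_c_actual}, so no fresh error analysis is needed — the theorem is essentially a bookkeeping corollary.

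The one point requiring a little care, and the main (mild) obstacle, is the treatment of the truncation order $s_c$ inside the $\min(n, s_c\wtm(H))$ term. As the text notes just after \autoref{eq:compensation_CNOT_gate}, there is a saturation effect: once the compensation Pauli string spreads across all $n$ qubits, the cost is capped at $\mathcal{O}(n)$ regardless of $s_c$, which justifies taking $s_c\to\infty$ in the asymptotic analysis and bounding the compensation CNOT cost by $3n$. I would state this explicitly so that the $s_c$-dependence does not appear in \autoref{eq:gate_each_sample}; the logarithmic dependence of $s_c$ on $\nu_c/\varepsilon_c$ from \autoref{eq:sc_requirement} is then irrelevant to the leading-order CNOT and $\Rz$ scaling. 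With this absorbed, the proof reduces to substituting the per-segment bounds into the product with $\nu_c$ and simplifying, which is routine.

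\begin{proof}
By \autoref{prop:gate_cost_observ_dynamics}, the circuit realising a single instance of the observable-dynamics estimator uses, per segment, a number of CNOT gates bounded by $4\times 5^{k-1}\wt(H) + 4\wtm(H) + 2\min(n, s_c\wtm(H))$ and $2L+4$ single-qubit Pauli rotations. Since $\wtm(H)\le\wt(H)$ and, by \autoref{eq:compensation_CNOT_gate}, the compensation contribution saturates at $\min(n,s_c\wtm(H))\le 3n=\mathcal{O}(\wt(H))$, the per-segment CNOT count is $\mathcal{O}(\wt(H))$ for fixed $k$, and the per-segment rotation count is $\mathcal{O}(L)$. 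Multiplying by the number of segments $\nu_c$ supplied by \autoref{thm:observ_estimation_detailed} gives $\GateC_{\CNOT}=\mathcal{O}(\wt(H)\,\nu_c)$ and $\GateC_{\Rz}=\mathcal{O}(L\,\nu_c)$ with $\nu_c$ as in \autoref{eq:nu_c_actual}, which is \autoref{eq:gate_each_sample}.
\end{proof}
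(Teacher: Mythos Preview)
Your proposal is correct and follows essentially the same route as the paper: invoke \autoref{prop:gate_cost_observ_dynamics} for the per-segment CNOT and $\Rz$ counts, use the saturation bound \autoref{eq:compensation_CNOT_gate} to absorb the $s_c$-dependent compensation term into $\mathcal{O}(\wt(H))$, and multiply by the segment number $\nu_c$ from \autoref{thm:observ_estimation_detailed}. The paper's proof is in fact less detailed than yours on the point that $n=\mathcal{O}(\wt(H))$; your explicit justification (every qubit appears in some Pauli term) is a useful addition.
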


\begin{proof}

We can calculate the maximum required evolution time based on Theorem \autoref{thm:observ_estimation_detailed} with Gaussian function if we want to estimate an observables $O$ with an accuracy $\varepsilon$.

For the ground state preparation, the maximum real-time is given by \autoref{eq:maximum_real_time}. 
Suppose we divide the time slice into $\nu$ segments and use $2k$th order Trotterisation, in this context, the segment number $\nu$ can be chosen as 
$ 
	\nu_c = \mathcal{O} ( (\lambda \Delta^{-1}  \ln(\eta^{-1} \varepsilon^{-1}))^{1+ \frac{1}{4k+1}}).
$
We note that since there is a saturation of the gate count for the Trotter-error-compensation indicated by \autoref{eq:compensation_CNOT_gate}, we take $s_c$ to be infinity in deriving the asymptotic scaling, and the gate count is less than or at the same order of $L$ or $\wt(H)$.
Therefore, the gate complexity in each sample is given by \autoref{eq:gate_each_sample}.   
In cases where we could implement each $e^{-iH_l t}$, then the gate complexity in each single run is 
\begin{equation}
	\mathcal{O} ( L (\lambda \Delta^{-1}  \ln(\eta^{-1} \varepsilon^{-1}))^{1+ \frac{1}{4k+1}}).
\end{equation}

Thus far, we have completed the proof of \autoref{thm:observ_estimation_main}.

\end{proof}

 Suppose $O$ is simply a Pauli operator, $\|O\| = 1$.
A more explicit form for the segment number can be obtained using \autoref{eq:epsilon_requirement}. With some derivation, we have
\begin{equation}
\label{eq:nu_c_actual}
	\nu_c = 4 \left(\frac{4(e+c_k)}{\ln 2}\right)^{\frac{1}{4k+1}} (\frac{\lambda}{\Delta}  \ln(\frac{9}{\eta \varepsilon} ) )^{1+\frac{1}{4k+1}},
\end{equation}
which can be used to carry out resource estimations when given the parameters of the Hamiltonian.

\autoref{thm:observ_estimation_detailed} gives an upper bound on the gate complexity required for estimating generic Hamiltonians' eigenstate properties in relation to $\lambda$, $L$, and $\wt(H)$.
For physical Hamiltonians,  we could reduce the gate complexity by using the properties of the Hamiltonians.
As shown in Theorem 2  in \cite{zeng2022simple}, the segment number can be reduced to
\begin{equation}
\label{eq:nu_lattice}
    \nu_c = \mathcal{O} ( n^{\frac{2}{4k+1}} ( \Delta^{-1} \ln(\eta^{-1} \varepsilon^{-1}))^{1+ \frac{1}{4k+1}}).
\end{equation}
Using \autoref{prop:gate_cost_observ_dynamics}, we can obtain the gate complexity for lattice models. 
In short, by using the commutation relation of the Hamiltonian terms in real-time evolution, the gate complexity for eigenstate property estimation using \autoref{alg1_main} scales $\mc O(n ^{1 + \frac{2}{4k+1}})$. When using the ancilla-free LCU formulae and the corresponding measurement strategy, the gate operations within each segment can be implemented in parallel, as discussed in \autoref{sec:gate_complexity_analysis}. Therefore, the depth complexity is   $\mc O(n ^{ \frac{2}{4k+1}})$.
Therefore, we have completed the proof of \autoref{thm:observ_estimate_gate_Lattice_main} in the main text


\subsection{Discussions on molecular Hamiltonians}

The following discusses the resource cost for fermionic problems with the Hamiltonian \autoref{eq:molecular_hamil_main} in the main text and its qubit form given by \autoref{eq:molecular_hamil_qubit}.
Note that each term in the expansion of $V_i$ is a Pauli operator. Therefore, at each segment, $\delta t$, the gates that effectively implement the remainder will merge into a single Pauli operator. This means that the gate count for Trotter-error compensation will saturate to $n$, regardless of the truncation order $s_c$.

The Trotter formula will be implemented in a split-operator way. The nonlocal kinetic term will be converted to local terms with an additional cost for implementing the diagonalisation. However, to derive the remainder, we still need to expand all the terms in the Pauli basis. 
The second-quantised Hamiltonian given by \autoref{eq:molecular_hamil_main} can be mapped by the Jordan-Wigner transform to a qubit Hamiltonian of the form
\begin{equation}
\label{eq:molecular_hamil_qubit}
H= \hat{T} +  \hat{V} = \sum_{p q} \tilde{T}_{p q}\left(X_p Z_{p+1} Z_{p+2} \cdots Z_{q-1} X_q+Y_p Z_{p+1} Z_{p+2} \cdots  Z_{q-1} Y_q\right)+\sum_p \tilde{U}_p Z_p+\sum_{p q} \tilde{V}_{p q} Z_p Z_q 
\end{equation}

A direct application of our zeroth-order method gives
\textsc{cnot} gate count: $  \wtm(H) (s_c+2) \nu = \mathcal{O} (n (s_c + 2) \nu)  = \mathcal{O} (n (s_c + 2) (\lambda t)^2) $, and
\sun{
2kth order \textsc{cnot} gate count gives $  (\wtm(H) s_c +2 \wt(H) ) \nu = \mathcal{O} ( n^3  (\lambda t)^{1+\frac{1}{4k+1}}) $ with $\wt(H) = nL = n^3$.}
However, the kinetic operator is quadratic and thus can be diagonalised   by an efficient circuit transformation $C$

\begin{equation}
\label{eq:diag_kinetic}
\hat{T} = C\left(\sum_p T_p Z_p\right) C^{\dagger}
\end{equation}

The second-order Trotterised time evolution is
\begin{equation}
	S_2 = (e^{-i T x/2} e^{-i V x} e^{-i T x/2})^{\nu} =  (C e^{-i T x/2} C^{\dagger} e^{-i V x} C e^{-i T x/2} C^{\dagger})^{\nu}
\end{equation}
For the error compensation term, the gate cost shown in \autoref{eq:compensation_CNOT_gate} has a saturation, which is upper bounded by $3n$.
As analysed in Methods, the dominant cost is from the Trotterisation.
For Hamiltonian in \autoref{eq:molecular_hamil_main}, one can use the fermionic swap network to simulate the Hamiltonian dynamics with depth $\mathcal{O} (n)$ and gate count  $\tilde {\mathcal{O}} (n)$  in each segment. Using the results in \autoref{table:dep_cost_tot}, we can estimate the gate complexity for molecular systems in  \autoref{eq:molecular_hamil_main}.


\begin{corollary}[Eigenstate property estimation for molecular systems]
\label{thm:observ_estimate_gate_chem}

Suppose we aim to estimate the observable $O$ on the eigenstate $\ket{u_i}$ of an $n$-qubit second-quantised plane wave Hamiltonian specified in \autoref{eq:molecular_hamil_main},  $\braket{u_i | \hat O| u_i}$.
To achieve an estimation error within $\varepsilon$, it is sufficient to have the gate count of $\tilde{\mathcal{O}} (n^2   ( \Delta^{-1})^{1+\frac{1}{4k+1}} \log(\eta^{-1} \varepsilon^{-1}) \log(  \vartheta^{-1})  )$   and a success probability $1 -\vartheta$.
\end{corollary}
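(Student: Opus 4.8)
The plan is to obtain the stated bound as the product of two quantities that are already controlled by earlier results: the segment number $\nu_c$ required by the composite Trotter-LCU construction, and the per-segment gate cost $g_{\mathrm{seg}}$ specialised to the plane-wave dual electronic-structure Hamiltonian \autoref{eq:molecular_hamil_main}, after which a modest overhead boosts the success probability to $1-\vartheta$. Concretely, \autoref{thm:observ_estimation_detailed} already fixes the filter parameters: with $\tau$ and $x_c$ chosen as in \autoref{eq:tau_explicit} and \autoref{eq:maximum_real_time}, the maximum real time is $t_c = \tau x_c = \mc O(\Delta^{-1}\log(\eta^{-1}\varepsilon^{-1}))$, and the single-run cost is $\nu_c\, g_{\mathrm{seg}}$. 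I would therefore bound each factor separately and then combine, absorbing $\log n$ and all higher logarithmic powers into $\tilde{\mc O}(\cdot)$.

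For the per-segment cost I would exploit the partition $H=\hat T+\hat V$. First diagonalise the quadratic kinetic term, $\hat T = C\,(\sum_p T_p Z_p)\,C^\dagger$ as in \autoref{eq:diag_kinetic}, where $C$ is a Givens-rotation / fermionic-fast-Fourier-transform network implementable in $\tilde{\mc O}(n)$ gates via \cite{low2018hamiltonian,kivlichan2020improved}. After this conjugation the Trotter factor $S_{2k}$ in each segment is built from the mutually commuting diagonal exponentials and from $e^{-im\hat V}=\prod_{pq}e^{-im\tilde V_{pq}Z_pZ_q}$, which is a product of $\mc O(n^2)$ weight-two rotations; together with the two $C$ conjugations this is realised with $g_{\mathrm{seg}}$ ranging between $\tilde{\mc O}(n)$ (fermionic-swap network, cf.\ \autoref{table:dep_cost_tot}) and $\mc O(n^2)$. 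Crucially, the Trotter remainder $V_{2k}$ in the symmetry-conserving scheme of \autoref{sec:measurement} contributes only $\mathrm{SWAP}$ and $Z$ operators, whose count saturates at $\mc O(n)$ by \autoref{eq:compensation_CNOT_gate} independently of the truncation order $s_c$, so the compensation never dominates $g_{\mathrm{seg}}$; this is the content of \autoref{prop:gate_cost_observ_dynamics} applied term-by-term.

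For the segment number I would invoke the commutator-improved Trotter-LCU bound, i.e.\ \autoref{lemma:TrotterLCU_error_2k} together with the commutator-scaling refinement of \cite{zeng2022simple} (the electronic-structure analogue of \autoref{eq:nu_lattice}), feeding its error estimate through \autoref{prop:composite_LCU} and \autoref{prop:error_finite_time_x_s}. The key point is that, for the $\{\hat T,\hat V\}$ partition of the plane-wave dual Hamiltonian, the naive $1$-norm $\lambda$ is replaced by the much smaller sum of nested-commutator norms, whose known scaling \cite{childs2021theory,su2021nearly} makes the total Trotter work grow as $\tilde{\mc O}(n^2 t_c)$ at leading order; the paired-Taylor compensation (using $F_{2k,s\le 2k}(m)=0$ and the $\mc O(1+m^{2s})$ decay of the paired terms) upgrades the time dependence to $t_c^{1+\frac{1}{4k+1}}$. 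Collecting $\nu_c\, g_{\mathrm{seg}}=\tilde{\mc O}\big(n^2 t_c^{1+\frac{1}{4k+1}}\big)$ and substituting $t_c=\mc O(\Delta^{-1}\log(\eta^{-1}\varepsilon^{-1}))$ gives the claimed $\tilde{\mc O}\big(n^2(\Delta^{-1})^{1+\frac{1}{4k+1}}\log(\eta^{-1}\varepsilon^{-1})\big)$. The final $\log(\vartheta^{-1})$ factor follows from the Hoeffding tail bound of \autoref{prop:measurment_error} — equivalently from $\mc O(\log\vartheta^{-1})$ repetitions of the estimator \autoref{eq:estimator_v_main} — exactly as in \autoref{thm:observ_estimate_gate_Lattice_main}.

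The step I expect to be the main obstacle is certifying the commutator-improved $\nu_c$ in the presence of the symmetry-conserving compensation. One must verify that rewriting the remainder in the $\mathrm{SWAP}/Z$ basis rather than the Pauli basis preserves the norm estimates that underlie \autoref{lemma:TrotterLCU_error_2k}, and that the commutator-scaling advantage of the $\hat T+\hat V$ split genuinely survives the diagonalising conjugation by $C$ — this is precisely where the $n$-dependence of the effective norm is decided, and hence whether the exponent is a clean $n^2$ rather than $n^{2+\frac{2}{4k+1}}$ with the residual $n^{\frac{2}{4k+1}}=n^{o(1)}$ treated as sub-polynomial. The remaining ingredients, namely the per-segment gate accounting and the confidence amplification, are routine given \autoref{prop:gate_cost_observ_dynamics} and \autoref{prop:measurment_error}.
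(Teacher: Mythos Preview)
Your proposal is correct and follows essentially the same route as the paper. The paper's own argument for this corollary is extremely terse: it simply observes that for the Hamiltonian in \autoref{eq:molecular_hamil_main} the fermionic swap network gives per-segment depth $\mc O(n)$ and gate count $\tilde{\mc O}(n)$, then invokes \autoref{table:dep_cost_tot} and \cite{zeng2022simple} for the segment number, without writing out the combination explicitly. Your decomposition into $\nu_c \cdot g_{\mathrm{seg}}$, with $g_{\mathrm{seg}}$ controlled by the diagonalisation of $\hat T$ plus the $\mc O(n^2)$ diagonal exponentials in $\hat V$ and the $\mc O(n)$-saturated compensation term, and with $\nu_c$ coming from the commutator-refined Trotter-LCU bound fed through the filter parameters of \autoref{thm:observ_estimation_detailed}, is precisely this argument spelled out in detail.

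Your flagged obstacle --- whether the commutator-improved segment count genuinely survives the $\mathrm{SWAP}/Z$ rewriting and yields a clean $n^2$ rather than $n^{2+\frac{2}{4k+1}}$ --- is well taken, and the paper does not resolve it any more carefully than you do: the residual sub-polynomial factor is simply absorbed into $\tilde{\mc O}(\cdot)$, consistent with the $n^{2+\frac{2}{4k+1}}$ entry in \autoref{table:dep_cost_tot} and the caption's remark that the complexity is ``nearly linear in $n$'' once $\lambda$ is factored out.
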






\sun{ 
For quantum molecular systems, the Hamiltonian ${H}$ takes the following form
\begin{equation}
\label{eq:hamiltonian_molecule_general}
    {H}:=\hat{T}+\frac{1}{2}\hat{V}+C=\sum_{i,j=1}^n h_{ij} \hat{a}_i^\dagger \hat{a}_j + \frac{1}{2}\sum_{i,j,k,l,=1}^n g_{ijkl}\hat{a}_i^\dagger \hat{a}_j^\dagger \hat{a}_k \hat{a}_l,
\end{equation}
where $n$ is the number of spin orbitals of the molecular system; $\hat{a}_i^\dagger$ and $\hat{a}_i$ are the fermionic generation and annihilation operators, respectively; $h_{ij}$ and $g_{ijkl}$ are the corresponding coefficients for the one-body and two-body interactions, respectively.
Note that the identity term in the Hamiltonian is a trivial term, so it is removed in the Hamiltonian in this work for simplicity.
For quantum chemistry Hamiltonians, there has been considerable progress in efficiently representing the Hamiltonians with fewer terms and low weights.}

These results, such as single factorisation~\cite{berry2019qubitization} and double factorisation~\cite{von2021quantum}, can be directly applied to reduce the cost.
A common strategy is to reformulate the two-body fermion operators as a sum of squared one-body operators by Cholesky decomposition, as has been used in AFQMC. 
The Hamiltonian is reformulated as 
\begin{equation}
    {H} =  \hat{K} + \hat{V} :=\hat{K}+\frac{1}{2} \sum_{\ell}^{\Gamma} \hat{L}_{\ell}^{2},
    \label{eq:MCH}
\end{equation}
with
$ 
    \hat{K}:=\sum_{i,j=1}^{n} \left[h_{i j}-\frac{1}{2} \sum_{\ell=1}^{\Gamma} \sum_{k=1}^{n} L_{(i k) \ell} L_{(j k) \ell}^{*}\right] \hat{a}_{i}^{\dagger} \hat{a}_{j},
$
and
$ \hat{L}_{\ell}:=\sum_{i,l=1}^{n}L_{(i l) \ell} \hat{a}_{i}^{\dagger} \hat{a}_{l}$.
Here, the constant is removed and $\Gamma = \mathcal{O}(n)$ is the number of terms of $\hat{L}_\ell$.  

Below, we use the first-order Trotter formula as an example to analyse the cost within each Trotter segment
\begin{equation}
	e^{-i H t} \approx  e^{-i \hat{K} t} \prod_{\ell = 1}^\Gamma e^{-\frac{it}{2} U_{\ell}   \sum_{p} (f_p \hat{n}_p)^2  U_{\ell}^{\dagger}} = e^{-i \hat{K} t}
	  \prod_{\ell = 1}^\Gamma U_{\ell} e^{-\frac{it}{2} \sum_{p} (f_p \hat{n}_p)^2 } U_{\ell}^{\dagger}
\end{equation}
In the first line, Trotterisation is used, and thus, this is an approximation with some Trotter errors up to the second order.
In the second line, some derivations have been abbreviated and the key facts that we used are $[U_{\ell} \hat{n}_p U_{\ell}^{\dagger}, U_{\ell} \hat{n}_q U_{\ell}^{\dagger} ] =0 $ and $ e^{ - i t U_{\ell} \hat{n}_p  U_{\ell}^{\dagger} } = U_{\ell} e^{ - i t  \hat{n}_p } U_{\ell}^{\dagger}$.

Using the idea in \cite{low2018hamiltonian}, the implementation of $e^{-\frac{it}{2} \sum_{p} (f_p \hat{n}_p)^2 } $  may  be done with $\mathcal{O} (n)$ depth circuit.  
Therefore in total, for each time segment, we require $\Gamma \times \mathcal{O} (n) =  \mathcal{O} (n^2)$ depth circuit.

A very inefficient way to implement the above process may be like this.
We first compute $Z_j$ by applying a controlled NOT gate and store the information on the ancillary qubit $\ket{j_1} ... \ket{j_n} $. Then implement as follows:
\begin{equation}
    O_a \bigotimes_i \ket{n_i} \ket{0}_o \rightarrow \bigotimes_i \ket{n_i} \ket{ \sum_i f_i n_i} \rightarrow \bigotimes_i \ket{n_i} \ket{ (\sum_i f_i n_i)^2} \rightarrow \bigotimes_i \ket{n_i} e^{-i(\sum_i f_i n_i)^2 }\ket{ (\sum_i f_i n_i)^2} 
\end{equation}

We require the following circuit oracle
\begin{equation}
    O_a \ket{j_1} ... \ket{j_n} \ket{0}_{r} =  \ket{j_1} ... \ket{j_n} \ket{(-1)^{j_1} + (-1)^{j_2} +... + (-1)^{j_n}}_r. 
\end{equation}
\begin{equation}
O_{\vec{A}}|j\rangle|0\rangle_o|0\rangle_{\text {garb }}=|j\rangle\left|A_j\right\rangle_o|g(j)\rangle_{\text {garbage }}
\end{equation}
with $A_j = j^2$.
Then
\begin{equation}
|j\rangle|0\rangle_o|0\rangle_{\text {garb }}|0\rangle \underset{O_{\vec{A}}}{\rightarrow}|j\rangle\left|A_j\right\rangle_o|g(j)\rangle_{\text {garb }}|0\rangle \underset{\text { PHASE }}{\rightarrow} e^{-i A_j t}|j\rangle\left|A_j\right\rangle_o|g(j)\rangle_{\text {garb }}|0\rangle \underset{O_{\vec{A}}^{\dagger}}{\rightarrow} e^{-i A_j t}|j\rangle|0\rangle_o|0\rangle_{\text {garb }}|0\rangle
\end{equation}

\subsection{Effect of energy error on eigenstate property expectation estimation}
\label{sec:unknown_energy}
In this section, we analyse the effect of energy error on eigenstate property expectation estimation. Our result indicates that as long as the energy estimation is $\epsilon$-close to the true energy, the observable error can be bounded.

Suppose the energy has an estimation error $\kappa:=|\hat{E}_j - E_j | \leq \Delta$.
The observable expectation will become
\begin{equation}
\label{eq:observ_expec_finite}
	\hat{O}_{\tau, x_c, s_c}(\hat{E}_j) = \frac{\hat N_{\tau, x_c, s_c}(O,\hat{E}_j)}{\hat D_{\tau, x_c, s_c}(\hat{E}_j)}. 
\end{equation}

In the presence of estimation error $\kappa \neq 0$, $g_{\tau  }(H-\omega) $ will tend to be zero $\|g_{\tau \rightarrow \infty}(H-\omega) \| = 0$.
In such a case, we consider the projector $\hat P_i = \ket{u_i}\bra{u_i}$.
\autoref{eq:g_tau_eq} will need to be modified a little. For a general $E$, we have
\begin{equation}
	g_{\tau}(H-\omega)   = \sum_{i} g_{\tau}(E_i-\omega) \hat P_i.
	\label{eq:g_tau_eq_EnErr}
	\end{equation}

When $\tau \geq \frac{1}{\Delta} \sqrt{\ln(2/\varepsilon_{\tau})}$, we have 
\begin{equation}
	\|g_{\tau}(H-\omega) - g_{\tau}(E_j - \omega)\hat P_j \| \leq \varepsilon_{\tau} /2 
	\end{equation}
The error of the numerator can be bounded by
\begin{equation}
	|N_{\tau}(O,E) - g_{\tau}^2(E_j - \omega)N(O)| \leq \varepsilon_{\tau}\|O\|
	\label{eq:N_O_eq_EnErr}
\end{equation}

Compared to the result with an accurate estimation, the only difference is that $N(O)$ and $D$ is coupled with an additional factor $ g_{\tau}^2(\kappa)$. 


Recall that the objective is to estimate $N(O)$. Therefore, we put the factor $g_{\tau}^{-2}(\kappa) $ coupled with $N_{\tau}$, and denote
\begin{equation}
	\varepsilon_N :=  |g_{\tau}^{-2}(\kappa) N_{\tau}(O,E) - N(O)|.
\end{equation}
Since $g_{\tau}^{-2} > 1$, the error $\varepsilon_N$ compared to the previous estimation error of the numerator is amplified.

The results concerning a finite cutoff and number of samples can be  derived similarly to that in \autoref{thm:observ_estimation_detailed}.
More specifically, we have
\begin{equation}
\begin{aligned}
\left|\langle\hat{O} \rangle_{\tau, x_c, s_c}-\left\langle O\right\rangle\right| & =\left|\frac{\hat{N}_{\tau, x_c, s_c}\left(O,\omega \right)}{\hat{D}_{\tau,x_c, s_c}(\omega)}-\frac{N\left(O,\omega\right)}{D(\omega)}\right| \\
& =\left|\frac{D(\omega) \hat{N}_{\tau, x_c, s_c}(O,\omega)-N\left(O,\omega\right) \hat{D}_{\tau, x_c, s_c}(\omega)}{D(\omega) \hat{D}_{\tau, x_c, s_c}(\omega)}\right| \\
& \leq \frac{1}{|D(\omega) \hat{D}_{\tau, x_c, s_c}(\omega)|} \bigg( \left| D(\omega) \hat{N}_{\tau, x_c, s_c}(O) - \frac{\hat{N}_{\tau, x_c, s_c}\left(O,\omega \right) \hat{D}_{\tau, x_c, s_c}(\omega)}{g^2_{\tau}(\kappa)}  \right |\\
&  +\left| \frac{\hat{N}_{\tau, x_c, s_c}\left(O,\omega \right) \hat{D}_{\tau, x_c, s_c}(\omega)}{g^2_{\tau}(\kappa)}  -N(O,\omega) \hat{D}_{\tau, x_c, s_c}(\omega) \right | \bigg) \\
& \leq  \left|\frac{D(\omega) \varepsilon_N(O,\omega)+N(O,\omega) \varepsilon_D}{g^2_{\tau}(\kappa) ( D(\omega)^2-D(\omega) \varepsilon_D ) }\right| \\
& \leq\left|\frac{D(\omega) \varepsilon_N(O,\omega)+\left(N(O,\omega)+D(\omega)\right) \varepsilon_D}{g^2_{\tau}(\kappa) D^2}\right| \\
& \leq g^2_{\tau} \eta^{-1}\left(2\left\|O\right\|+1\right)\left(\varepsilon_\tau+\varepsilon_c\right)+ g^2_{\tau} \eta^{-1}\left(\left\|O\right\|+\left\|O\right\|_1+1\right) \varepsilon_n.
\end{aligned}
\end{equation}

Compared to the case with known eigenenergy, the observable estimation error $\varepsilon$ will be amplified by a factor $g_{\tau}^{-2}(\kappa) = \exp(2 \tau^2 \kappa^2)$. 

To ensure the estimation is nonvanishing, we require $\tau \kappa \leq c$, which indicates the energy precision needs to satisfy 
\begin{equation}
	\kappa \leq c \mathcal{O} ( \Delta \log^{-1}((\eta\varepsilon)^{-1}).
\end{equation}

\subsection{Discretised version}
\label{sec:discretised_version}

For the Gaussian-type spectral filter, we show a deterministic version of the LCU decomposition of the spectral filter in \autoref{Eq:coolingdef_main}. Using the Gaussian integral, also known as the Hubbard-Stratonavich transformation which is widely used in field theories and auxiliary-field quantum Monte Carlo~\cite{huggins2022unbiasing}, we have
\begin{equation}
	g_{\tau}(H) = e^{-\tau^2H^2} = \frac{1}{\sqrt{2 \pi}} \int dx e^{-x^2/4} e^{-i \tau x H }. 
\end{equation}
By converting the integral into a summation, we define a discretised version of $g_{\tau,x_c}^{(D)}(H-\omega)$ with  a maximum evolution time $x_c$ as
 \begin{equation}
 \label{eq:g_tau_xc_discretised}
 	g_{\tau,x_c}^{(D)}(H-\omega) = \frac{1}{\sqrt{2\pi}} \sum_{j=-N_m}^{N_m}  b e^{-x_j^2/4} e^{i \tau x_j (H- \omega) }
 \end{equation}
 with total number of steps $N_m$, the step size $b = x_c / N_m$, and  $x_j = jb$, a superscript $D$ denoting a discretised version.
 An infinite sum of $g_{\tau,x_c}^{(D)}(H-\omega)$ is given by  \begin{equation}
 	g_{\tau}^{(D)}(H-\omega) = \frac{1}{\sqrt{2\pi}} \sum_{j=-\infty}^{\infty}  b e^{-x_j^2/4} e^{i \tau x_j (H- \omega) }
 \end{equation}
 with total number of steps $N_m$, the step size $b = x_c / N_m$, and  $x_j = jb$.

The discretisation error can be bounded by
\begin{equation}
	\varepsilon_{d,x_c} := |g_{\tau,x_c} - g_{\tau,x_c}^{(D)}| \leq |g_{\tau} - g_{\tau}^{(D)}| + |g_{\tau} - g_{\tau,x_c}|
 + |g_{\tau}^{(D)} - g_{\tau,x_c}^{(D)}|
\end{equation}
We define the discretisation error with an infinite expansion as $\varepsilon_{d} = |g_{\tau} - g_{\tau}^{(D)}| $ the truncation error $\varepsilon_{x_c}:=  |g_{\tau} - g_{\tau,x_c}|$ and its discretisation as $\varepsilon_{x_c}^{(D)} := |g_{\tau}^{(D)} - g_{\tau,x_c}^{(D)}| $

Since $e^{-x^2}$ is a monotonic function, the discretisation form is less than the integral. From the definition of $ g_{\tau,x_c}^{(D)}$ \autoref{eq:g_tau_xc_discretised}, we can check that $\varepsilon_{x_c}^{(D)} \leq\varepsilon_{x_c} $. From \autoref{prop:error_finite_time_x_s}, we know that $\varepsilon_{s_c } \leq \exp(- x_c^2/2)$ and thus $\varepsilon_{x_c}^{(D)} $ is bounded by $\exp(- x_c^2/2)$ as well.
With the result derived in \cite{keen2021quantum}, $\varepsilon_{d}$ can be bounded by
\begin{equation}
	\varepsilon_{d} \leq \exp \left ((\frac{2\pi}{b} - \tau)/2  \right)^2
\end{equation}
For an equal distribution of error, we choose to set
$	\varepsilon_d = \varepsilon_{x_c} = \varepsilon/3$.
When the stepsize is set as $b = 2\pi /(x_c + \tau) = \mathcal{O} (\Delta { (\ln(\varepsilon^{-1})})^{-1/2})$, the total error is bounded by $\varepsilon$. The total number of steps is 
\begin{equation}
    N_m = x_c(x_c+\tau)/2\pi \leq \frac{2}{\pi \Delta
} \ln \frac{2}{\varepsilon_{\tau}} = \mathcal{O} (\Delta^{-1}  \ln(\varepsilon^{-1}))
\end{equation}
with $\varepsilon_\tau$ defined in \autoref{eq:epsilon_requirement}.


\section{Eigenenergy estimation }	
\label{sec:En_estimate}


In this section, we discuss the gate complexity of eigenenergy estimation described in \autoref{problem:eigenenergy}. We provide a proof for the second part of \autoref{thm:observ_estimation_main}.
We first discuss how to use the denominator to estimate the eigenenergy $E_j$.
Intuitively, we can find that $D_{\tau}(\omega)$ indeed shows a coarse-grained energy spectrum.
For the initial state $\ket{\psi_0} = \sum_i c_i \ket{u_i}$, the spectrum of the initial state can be characterised by 
\begin{equation}
	P(E) = \sum_i |c_i|^2 \delta(\omega - E_i). 
\end{equation}
One can prove that $D_{\tau} (E)=[g_{\tau}^2 \star P ] (E)$ where $ \star $ denotes the convolution of two functions.

Suppose we have a prior knowledge of $E_j \in [E_j^L, E_j^R]$.
It is worth noting that we cannot distinguish eigenenergies that are very close to each other. The eigenenergies that are close to each other could be merged and regarded as a broadened eigenenergy.
Here, we assume that the target resolution is less than the energy gap, i.e., $\kappa < \Delta$, and $ E_j^R < E_{j} + \Delta/2$, $ E_j^L > E_{j} - \Delta/2$. 
Given this range, the $j$th eigenenergy can be searched by
\begin{equation}
\label{eq:E_search_ideal}
    E_j = \argmax_{\omega\in[E^L_j, E^U_j]} \hat{D}_\tau(\omega).
\end{equation}
The maximum of $D_\tau(\omega)$ within the range $[E^L_j, E^U_j]$ gives us an estimate of eigenenergy $E_j$.

In practice, we can only obtain an estimation $\hat{D}^{(x_c)}_\tau(E)$ of $D_\tau(E)$, when considering finite cutoff time $x_c$, segment number $
\nu_c$, truncation $s_c$ and number of samples $N_s$. The eigenenergy is determined by
\begin{equation}
\label{eq:E_search_noisy}
\hat{E}_j := \argmax_{\omega\in[E^L_j, E^U_j]}  \hat{D}_{\tau,x_c, s_c}(\omega). 
\end{equation}

 
Similar to property estimation in \autoref{sec:observ_estimate}, the error sources include a finite imaginary time, a finite cutoff of real-time evolution, discretization error, and the statistical error due to Hamiltonian simulation and finite number of samples 
\begin{equation}
	\hat D_{\tau,x_c, s_c}(\hat E_j) - D_{\tau} (E_j) \leq \varepsilon_{\tau} + \varepsilon_{x_c} + \varepsilon_{s_c} + \varepsilon_{n}
	\end{equation}
Based on the error dependence, we can estimate the resource requirements (i.e., circuit depth and number of samples) for eigenenergy estimation. 
Compared to \autoref{sec:observ_estimate}, the only difference is that the denominator is a function of $\omega$  in energy estimation, while we take $\omega = E_j$ in property estimation in \autoref{sec:observ_estimate}.
The following lemma establishes the error due to a finite $\tau$.

\begin{lemma}[Error due to a finite $\tau$ (Proposition 3 in \cite{zeng2021universal})]
\label{lemma:error_finite_tau}
When $\tau \geq \frac{2}{\Delta} \sqrt{\ln(2/\varepsilon_{\tau})}$, 	$|D_{\tau}(\omega) - \eta g_{\tau}^2 (\omega - E_j)| \leq \varepsilon_{\tau}$.
\end{lemma}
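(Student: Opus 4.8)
The statement to prove is Lemma~\ref{lemma:error_finite_tau}: when $\tau \geq \frac{2}{\Delta} \sqrt{\ln(2/\varepsilon_{\tau})}$, we have $|D_{\tau}(\omega) - \eta g_{\tau}^2 (\omega - E_j)| \leq \varepsilon_{\tau}$.

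\textbf{Overall approach.} The plan is to expand $D_{\tau}(\omega) = \braket{\psi_0 | g_{\tau}^2(H-\omega) | \psi_0}$ in the eigenbasis of $H$, isolate the contribution of the target eigenstate $\ket{u_j}$, and show that the remaining contributions from all other eigenstates are uniformly suppressed by the Gaussian filter once $\tau$ is large enough. First I would write, using the spectral decomposition in \autoref{eq:g_HEj} applied to $g_\tau^2$,
\begin{equation}
    D_{\tau}(\omega) = \sum_i |c_i|^2 g_{\tau}^2(E_i - \omega),
\end{equation}
where $|c_i|^2 = |\braket{u_i|\psi_0}|^2$. The term $i=j$ gives exactly $|c_j|^2 g_{\tau}^2(\omega - E_j) = \eta\, g_{\tau}^2(\omega - E_j)$ by the definition $\eta = |\braket{\psi_0|u_j}|^2$ and the evenness of the Gaussian $g_\tau(h)=g_\tau(-h)$. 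Hence the quantity to bound is
\begin{equation}
    |D_{\tau}(\omega) - \eta g_{\tau}^2(\omega - E_j)| = \sum_{i \neq j} |c_i|^2 g_{\tau}^2(E_i - \omega).
\end{equation}

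\textbf{Key steps.} The second step is to bound each off-target Gaussian weight. For the Gaussian filter $g_{\tau}(h) = e^{-\tau^2 h^2}$, every $i \neq j$ satisfies $|E_i - \omega| \geq \Delta/2$: since $\omega \in [E_j^L, E_j^U]$ with $|\omega - E_j| < \Delta/2$, and the gap assumption $\Delta = \min(E_{j+1}-E_j, E_j-E_{j-1})$ forces $|E_i - E_j| \geq \Delta$ for $i\neq j$, the triangle inequality gives $|E_i - \omega| \geq |E_i - E_j| - |\omega - E_j| > \Delta - \Delta/2 = \Delta/2$. Consequently $g_{\tau}^2(E_i - \omega) = e^{-2\tau^2(E_i-\omega)^2} \leq e^{-2\tau^2 (\Delta/2)^2} = e^{-\tau^2 \Delta^2/2}$. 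The third step substitutes the hypothesis $\tau \geq \frac{2}{\Delta}\sqrt{\ln(2/\varepsilon_{\tau})}$, which yields $\tau^2 \Delta^2/2 \geq 2\ln(2/\varepsilon_{\tau})$, so $g_{\tau}^2(E_i-\omega) \leq (\varepsilon_{\tau}/2)^2 \leq \varepsilon_{\tau}/2$ for $\varepsilon_\tau \le 2$. Finally, using $\sum_{i\neq j} |c_i|^2 \leq \sum_i |c_i|^2 = 1$, I conclude
\begin{equation}
    \sum_{i\neq j} |c_i|^2 g_{\tau}^2(E_i - \omega) \leq \frac{\varepsilon_{\tau}}{2} \sum_{i \neq j}|c_i|^2 \leq \varepsilon_{\tau},
\end{equation}
which completes the bound. (The slack here is comfortable; one could sharpen the constant, but the stated threshold suffices.)

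\textbf{Main obstacle.} The routine part is the Gaussian tail estimate; the only genuine point requiring care is the uniform separation argument, namely verifying $|E_i - \omega| \geq \Delta/2$ for all $i \neq j$ simultaneously and for every admissible $\omega$ in the search window. This relies on the stated assumptions $E_j^R < E_j + \Delta/2$ and $E_j^L > E_j - \Delta/2$ together with the definition of the gap $\Delta$ being a two-sided minimum around $E_j$. I would make explicit that the worst case is the nearest neighbouring eigenvalue ($E_{j\pm1}$), and that the combination of the window constraint and the gap leaves a margin of exactly $\Delta/2$. Since this lemma is cited from Ref.~\cite{zeng2021universal} (Proposition~3), the proof can be stated concisely, with the separation bound being the step most worth spelling out to keep the argument self-contained.
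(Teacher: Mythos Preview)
Your proposal is correct and follows the natural approach: expand $D_\tau(\omega)$ in the eigenbasis, isolate the $i=j$ term, and bound the off-target Gaussian weights using the separation $|E_i-\omega|\geq\Delta/2$. The paper does not give its own proof of this lemma (it is cited from \cite{zeng2021universal}), but your argument is exactly in line with the paper's proof of the closely related Proposition for $\omega=E_j$ in \autoref{sec:observ_estimate}, the only change being the halved separation $\Delta/2$ accounting for the factor of $2$ in the threshold on $\tau$.
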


The rest of the proof is nearly identical to \autoref{sec:observ_estimate}.
We give the result here.
 

\begin{theorem}[Eigenenergy estimation for generic Hamiltonians]
\label{thm:En_estimation_detailed}

Suppose that $\hat E_j$ is determined by \autoref{eq:E_search_noisy} and the conditions and assumptions in \autoref{problem:eigenenergy} hold.  

Case I: Suppose that we choose the time-segment number function $\nu_c = \mathcal{O} \left( (\lambda \kappa^{-1} \ln(\eta^{-1} ))^{1+ \frac{1}{4k+1}}  \right )$ for realising the real-time evolution using the method in \autoref{alg1_main} (with the truncation order $s_c = \mathcal{O} (\ln(\nu_c)/\ln \ln(\nu_c)))$).  
We can achieve the error of eigenenergy estimation within $\kappa$, 
$|\hat E_j - E_j | \leq  \kappa $ with a success probability at least $1-\vartheta$ when the number of samples is $N_s = \mc O\left ( \eta ^{-2}  \ln(1/\vartheta) \right)$.

Case II: When we choose the time-segment number function $\nu_c = \mathcal{O} \left( (\lambda \Delta^{-1} \ln(\eta^{-1} \kappa^{-1}))^{1+ \frac{1}{4k+1}}  \right )$ for realising the real-time evolution, one can achieve the error of eigenenergy estimation within $\kappa$, 
$|\hat E_j - E_j | \leq  \kappa $ with a success probability at least $1-\vartheta$ when the number of samples is $N_s = \mc O\left (\eta^{-2} \Delta^{4}\kappa^{-4} (\ln ({\kappa^{-2} \eta^{-1}}) )^{2} \ln (1/\vartheta)  \right)$. 

 \end{theorem}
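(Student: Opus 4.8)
The plan is to reduce the eigenenergy search in \autoref{eq:E_search_noisy} to a statement about how accurately the peak of the denominator profile $D_\tau(\omega)$ can be located, and then to reuse the error-propagation machinery of \autoref{sec:observ_estimate} essentially verbatim to bound the additive error of $\hat D_{\tau,x_c,s_c}(\omega)$. First I would record the convolution identity $D_\tau(\omega) = [g_\tau^2 \star P](\omega) = \sum_i |c_i|^2 g_\tau^2(\omega - E_i)$ and invoke \autoref{lemma:error_finite_tau}: once $\tau \geq \frac{2}{\Delta}\sqrt{\ln(2/\varepsilon_\tau)}$, every term with $i \neq j$ is suppressed, so that throughout the prior window $[E_j^L, E_j^R]$ (of width below $\Delta$) we have $|D_\tau(\omega) - \eta\, g_\tau^2(\omega - E_j)| \leq \varepsilon_\tau$. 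Thus the target function is, up to $\varepsilon_\tau$, a single Gaussian $\eta\, e^{-2\tau^2(\omega - E_j)^2}$ peaked exactly at $E_j$.

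Next I would collect the remaining error sources exactly as before: \autoref{prop:error_finite_time_x_s} bounds the truncation errors $\varepsilon_{x_c} + \varepsilon_{s_c}$ on $D_{\tau,x_c,s_c}$ (with $\nu_c$ and $s_c$ fixed through \autoref{lemma:TrotterLCU_error_2k}), and \autoref{prop:measurment_error} with $O = I$ bounds the statistical error $\varepsilon_n$ from $N_s$ samples. Writing $\epsilon_{\mathrm{tot}} := \varepsilon_\tau + \varepsilon_{x_c} + \varepsilon_{s_c} + \varepsilon_n$ for the resulting sup-norm error $\|\hat D_{\tau,x_c,s_c} - \eta\, g_\tau^2(\cdot - E_j)\|_\infty$ on the window, the key elementary estimate is the peak-separation bound: for every $\omega$ with $|\omega - E_j| \geq \kappa$,
\begin{equation}
\eta\, g_\tau^2(0) - \eta\, g_\tau^2(\omega - E_j) = \eta\bigl(1 - e^{-2\tau^2(\omega - E_j)^2}\bigr) \geq \eta\bigl(1 - e^{-2\tau^2\kappa^2}\bigr).
\end{equation}
Hence if $\epsilon_{\mathrm{tot}} < \tfrac{1}{2}\eta(1 - e^{-2\tau^2\kappa^2})$, no point at distance at least $\kappa$ from $E_j$ can outscore $E_j$ in the noisy profile, so $\hat E_j = \argmax_{\omega} \hat D_{\tau,x_c,s_c}(\omega)$ obeys $|\hat E_j - E_j| \leq \kappa$. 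The two cases of \autoref{thm:En_estimation_detailed} then correspond to two ways of satisfying this one inequality, trading circuit depth against sample count.

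For Case I I would take $\tau = \Theta(\kappa^{-1})$ (still $\geq \frac{2}{\Delta}\sqrt{\ln(2/\varepsilon_\tau)}$ since $\kappa < \Delta$), so that $1 - e^{-2\tau^2\kappa^2} = \Omega(1)$ and the requirement collapses to $\epsilon_{\mathrm{tot}} = \mathcal{O}(\eta)$; choosing $\varepsilon_\tau = \varepsilon_{x_c} = \varepsilon_{s_c} = \varepsilon_n = \Theta(\eta)$ then forces, through \autoref{lemma:TrotterLCU_error_2k} with maximal real time $t_c = \tau x_c = \Theta(\kappa^{-1}\sqrt{\ln\eta^{-1}})$, the segment number $\nu_c = \mathcal{O}((\lambda\kappa^{-1}\ln\eta^{-1})^{1 + \frac{1}{4k+1}})$ and, through \autoref{prop:measurment_error} with $\|O\|_1 = 1$, the Heisenberg-limited sample count $N_s = \mathcal{O}(\eta^{-2}\ln(1/\vartheta))$. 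For Case II I would instead keep $\tau = \Theta(\Delta^{-1}\sqrt{\ln(\eta^{-1}\kappa^{-1})})$ at the minimum needed to suppress the off-peak eigenstates and import the peak-location subroutine of \cite{wang2023quantum}, which extracts $E_j$ to resolution $\kappa$ from a broad ($\Theta(1/\tau)=\Theta(\Delta)$-wide) peak by estimating $D_\tau(\omega)$ to additive precision of order $\eta\kappa^2/\Delta^2$. With $\tau$ fixed at the $\Delta^{-1}$ scale, \autoref{lemma:TrotterLCU_error_2k} gives the shallower segment number $\nu_c = \mathcal{O}((\lambda\Delta^{-1}\ln(\eta^{-1}\kappa^{-1}))^{1 + \frac{1}{4k+1}})$, while the tightened precision, fed into \autoref{prop:measurment_error}, inflates the count to $N_s = \mathcal{O}(\eta^{-2}\Delta^4\kappa^{-4}(\ln(\kappa^{-2}\eta^{-1}))^2\ln(1/\vartheta))$, with the logarithmic powers carried by that subroutine.

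The main obstacle I anticipate is making the reduction to a sup-norm bound rigorous over the \emph{continuous} search variable $\omega$: the estimator $\hat D_{\tau,x_c,s_c}(\omega)$ is reconstructed by classical post-processing (attaching the phase $e^{i\tau x\omega}$) from one fixed pool of $N_s$ measured overlaps, so its statistical fluctuations are correlated across $\omega$ and a pointwise Hoeffding bound does not immediately control $\|\cdot\|_\infty$. I would handle this by discretizing $[E_j^L, E_j^R]$ on a grid of spacing $\mathcal{O}(\kappa)$ and union-bounding, using that the integrand is Lipschitz in $\omega$ with constant $t_c = \tau x_c$ to bound the interpolation error between grid points; this inflates $N_s$ only by logarithmic factors, consistent with the extra $(\ln(\kappa^{-2}\eta^{-1}))^2$ in Case II. The second delicate point, confined to Case II, is the quadratic flatness $1 - e^{-2\tau^2\kappa^2} \approx 2\tau^2\kappa^2$ of the Gaussian near its maximum, which is the true origin of the $\kappa^{-4}$ blow-up in the sample complexity and where the constants and logarithmic powers must be tracked against the cited subroutine's analysis.
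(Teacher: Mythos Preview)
Your proposal is correct and follows essentially the same route as the paper's proof: both combine \autoref{lemma:error_finite_tau}, \autoref{prop:error_finite_time_x_s}, and \autoref{prop:measurment_error} into an additive bound on $\hat D_{\tau,x_c,s_c}(\omega)-\eta\,g_\tau^2(\omega-E_j)$, then convert this into an energy-location guarantee via a peak-separation argument, choosing $\tau=\Theta(\kappa^{-1})$ in Case~I and $\tau=\Theta(\Delta^{-1}\sqrt{\ln(\eta^{-1}\kappa^{-1})})$ with the subroutine of \cite{wang2023quantum} in Case~II. The only substantive difference is packaging: the paper imports the peak-separation step as the ready-made inequality $|\hat E_j-E_j|\le \tau^{-1}g_\tau^{-1}\bigl(\sqrt{1-2(\varepsilon_\tau+\varepsilon_c+\varepsilon_n)}\bigr)$ from \cite{zeng2021universal}, applied only at the two points $E_j$ and $\hat E_j$, whereas you re-derive the equivalent inequality $\epsilon_{\mathrm{tot}}<\tfrac12\eta(1-e^{-2\tau^2\kappa^2})$ from scratch. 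Your explicit treatment of the uniform-in-$\omega$ statistical control (grid plus Lipschitz interpolation) is a point the paper does not spell out---it simply asserts the Hoeffding bound at the data-dependent point $\hat E_j$---so your version is, if anything, slightly more careful on this issue.
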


\begin{proof}

We start by proving the result in Case I. 
From \autoref{eq:E_search_noisy}, we have
\begin{equation}  
\hat{D}_{\tau, x_c,s_c}(\hat{E}_j) \geq \hat{D}_{\tau,x_c,s_c}(E_j).
\end{equation}
Then we bound the difference between $D_\tau (\omega)$ and $\eta g_{\tau}^2(\omega - E_j)$ using \autoref{lemma:error_finite_tau}.
That is, when we set the imaginary-time as 
$
	\tau \geq \frac{2}{\Delta} \sqrt{\ln \frac{2}{\eta \varepsilon_{\tau}}}
 $
we have
\begin{equation}
    |D_{\tau}(E_j) - \eta| \leq \eta \varepsilon_{\tau},~ |D_\tau(\hat{E}_0)-\eta g_\tau(\hat{E}_0-E_0)^2| \leq \eta \varepsilon_\tau.
\end{equation}
Here it is worth noting that we do not have to know the value of $\eta$.

The maximum real-time  is 
\begin{equation} \label{eq:maximum_real_time_En} 
    t_c = \tau x_c = \frac{4}{\Delta} \sqrt{\ln(2/\eta \varepsilon_\tau)} \sqrt{\ln(2/\eta \varepsilon_c)}.
\end{equation} 
According to \autoref{prop:error_finite_time_x_s}, when the segment number is set by
\begin{equation}
\begin{aligned}
		\nu_c &= 4 \left(\frac{2(e+c_k)}{\ln 2}\right)^{\frac{1}{4k+1}} (\lambda \tau x_c)^{1+\frac{1}{4k+1}} 
	\end{aligned}
\end{equation}
 ${U}_{2k}^{(s_c)}(t_i)$ is a $(2, \varepsilon_{s_c})$-LCU formula of ${U}(t_i)$ with the approximation error given by
$
	\max_{t_i} \|{U}_{2k}^{(s_c)}(t_i) - U(t_i)\| \leq \varepsilon_{s_c},
 $ and we have $|D_{\tau, x_c, s_c} (\omega) - D_{\tau} (\omega) | \leq   \eta \varepsilon_{c}$  with $\varepsilon_{c} = 3(\varepsilon_{x_c} + \varepsilon_{s_c})$

According to \autoref{prop:measurment_error},  
when the number of measurements $N_s = {32} \eta^{-2}{\varepsilon_n^{-2}} \ln(1/\vartheta)$, we have 
\begin{equation}
\begin{aligned}
& \left|D_\tau\left(E_j\right)-\hat{D}_{\tau,x_c, s_c}\left(E_j \right)\right| \leq  \eta (\varepsilon_c + \varepsilon_n) \\
& \left|D_\tau(\hat E_j)-\hat{D}_{\tau,x_c, s_c}(\hat E_j )\right| \leq  \eta (\varepsilon_c + \varepsilon_n)\\
\end{aligned}
\end{equation}
with a success probability of  $1-\vartheta$.
According to \cite{zeng2021universal},  the following inequality holds
\begin{equation}
\begin{aligned}
|\hat{E}_j-E_j| & \leq \frac{1}{\tau} g_\tau^{-1}\left(\sqrt{1-2\left(\varepsilon_\tau+\varepsilon_c + \varepsilon_n \right)}\right). 
\end{aligned}
\end{equation}
When we set the precision for each component
\begin{equation}
	\begin{aligned}
	\label{eq:epsilon_requirement_En}
& \varepsilon_\tau=\varepsilon_c=\varepsilon_n \leq \frac{1}{6} (1 - 
e^{-1}) \leq 0.1
\end{aligned}
\end{equation} 
and $\tau \geq \kappa^{-1}$, we can make sure that the energy estimation precision is no greater than $\kappa$.
From  the precision requirement \autoref{eq:epsilon_requirement_En} and \autoref{eq:maximum_real_time_En}, we require 
\begin{equation}
    t_c \geq \kappa^{-1} \ln(20 \eta^{-1} )
\end{equation}
when $\kappa \leq \Delta/4$.
Using \autoref{prop:error_finite_time_x_s}, the segment number is set to be
\begin{equation}
\begin{aligned}
		\nu_c &= 4 \left(\frac{2(e+c_k)}{\ln 2}\right)^{\frac{1}{4k+1}} (\lambda \kappa^{-1} \ln(20 \eta^{-1} ) )^{1+\frac{1}{4k+1}} = \mathcal{O} \left( (\lambda \kappa^{-1} \ln(\eta^{-1} ))^{1+ \frac{1}{4k+1}}  \right )
	\end{aligned}
\end{equation}
and set  $s_c$ by 	\autoref{eq:sc_requirement}.
 
In \cite{wang2023quantum}, Wang et al. showed that when using a Gaussian filter, the precision dependence can be improved, where the eigenenergy is also determined by \autoref{eq:E_search_noisy}.
Compared to Case I, the key difference is that show that the distance  $| D_{\tau}(\omega)  - \eta|$ is modulated by the estimation error $\omega - E_j$.
When we set the imaginary time
\begin{equation}
    \tau = \frac{1}{0.9 \Delta} \sqrt{ \ln ({20}{\kappa^{-2} \eta^{-1}})} = \mc O(\Delta^{-1} (\ln ({\kappa^{-2} \eta^{-1}}) )^{-\frac{1}{2}} )
\end{equation}
we can distinguish $E_j$ from the others.
The cutoff by $x_c$ can be similarly obtained. 
Using \autoref{prop:error_finite_time_x_s}, when the segment number is set to be
\begin{equation}
\begin{aligned}
	\mathcal{O} \left( (\lambda \Delta^{-1} \ln(\eta^{-1} \kappa^{-1}))^{1+ \frac{1}{4k+1}}  \right )
	\end{aligned}
\end{equation}
the eigenenergy error can be bounded by $\kappa$ by using the results in \cite{wang2023quantum}.
Note that in \cite{wang2023quantum}, the error is
$\varepsilon_n = \tau \kappa$. Thus, the number of samples 
\begin{equation}
    N_s = \mc O (\eta^{-2} \tau^{-4}\varepsilon^{-4} \ln (1/\vartheta)) =  \mc O (\eta^{-2} \Delta^{4}\kappa^{-4} (\ln ({\kappa^{-2} \eta^{-1}}) )^{2} \ln (1/\vartheta) )  
\end{equation}


\end{proof}

\revise{
For eigenstate property estimation, the gate complexity depends on the energy gap $\Delta$. 
For eigenenergy estimation, the total gate count nearly reaches the Heisenberg limit $\mathcal{O}(\kappa^{-(1+o(1))})$~\cite{lin2021heisenberg,zeng2021universal}, which may not be directly related to $\Delta$.}

Given a Hamiltonian with parameters $n$, $L$, $\wt(H)$, and $\wtm(H)$, the gate complexity of eigenenergy estimation can be similarly obtained by using \autoref{prop:gate_cost_observ_dynamics}.

\section{Circuit compilation and gate cost for block-encoding-based methods}
\label{sec:gate_gs_BE}

\subsection{Stage Setting}

In this section, we briefly introduce and estimate the gate cost for block-encoding-based methods. 
Here we mainly discuss the cost based on the result by Google's team in \cite{babbush2018encoding} which is used in our numerical simulation. There are considerable progress in reducing the cost for block encoding. We will not introduce these advanced techniques which are not the main focus of this work.

To estimate the gate cost of each algorithm, we synthesize the circuits to \textsc{cnot} gates, single-qubit Clifford gates and $T$-gates. The \textsc{cnot} gate number is more important for a near-term application on a quantum computer with no or limited fault tolerance, while the $T$-gate number is more critical for a long-term application on a fully fault-tolerant quantum computer.
In some subroutines of the above algorithms, a direct estimation of the $T$-gate number is hard to obtain. In this case, we first synthesize the circuits to \textsc{cnot} gates, single-qubit Clifford gates and single-qubit $Z$-axis rotation gates $R_{z}(\theta)$. Then we estimate the $T$-gate number $n_{T}$ using the $R_{z}(\theta)$ gate number $n_{R z}$. 

We consider the optimal ancilla-free gate synthesis algorithm in Ref.~\cite{ross2016optimal}, which requires $3 \log _{2}(1 / \varepsilon)+\mathcal{O}(\log \log (1 / \varepsilon))~T$-gates to approximate the $R_{z}(\theta)$ gate to a precision $\varepsilon$. Here, we set the gate synthesis error of each $R_{z}(\theta)$ $\varepsilon_{CS} $ to be a small value compared to the total error. 
In practice, we should determine the resource overhead $c_{T}$ based on the number of $R_{z}$ gates in the quantum algorithm.

We remark that, if we are allowed to introduce extra ancillary qubits and entangling Clifford gates, one can further reduce the required $T$ gates to 
$$1.15 \log _{2}(1 / \varepsilon) + 9.2$$ 
using a repeat-until-success strategy proposed in~\cite{bocharov2015efficient}. However, this will introduce extra ancillary qubit requirements and more \textsc{cnot} gate costs. 

In our resource analysis, to streamline the comparison, we exclude the observable estimation error due to a finite sampling cost. We will focus on the circuit depth to achieve a certain level of accuracy of the RCLU formula. 
Here, we remark that while the RLCU method cannot prepare the state, it can effectively prepare the eigenstate at the level of expectation.  
To get the resource cost, the key component is to get the 
segment number $\nu_c$, which is directly related to the maximum real evolution time $t_c$.
The gate count for \textsc{cnot} gates and single-qubit rotation gates can be obtained by using \autoref{prop:gate_cost_observ_dynamics}.

Let us define 
\begin{equation}
\begin{aligned}
\label{eq:H_character}
n_{L} &:=\left \lceil\log _{2} L\right \rceil \\
\Lambda &:=\max _{l} \alpha_{l} \\
\end{aligned}
\end{equation}
In the standard block encoding procedure~\cite{low2019hamiltonian}, the $n$-qubit Hamiltonian $H$ is encoded in a $\left (n_{L}+n\right )$ -qubit unitary, $\operatorname{select}(H)$
\begin{equation}
\operatorname{select}(H):=\sum_{l=1}^{L}|l\rangle\langle l|  \otimes H_{l}
\end{equation}
Denote
\begin{equation}
|G\rangle:= \prep \ket{0} ^{n_L} =\frac{1}{\sqrt{\lambda}} \sum_{l=1}^{L} \sqrt{\alpha_{l}}|l\rangle,
\end{equation}
then we have
\begin{equation}
\label{eq:H_BE}
H=\lambda(\langle G| \otimes I) \operatorname{select}(H)(|G\rangle \otimes I)
\end{equation}
which indicates that $H$ is block-encoded into $\operatorname{select}(H)$.
Here, $\prep$ encodes the amplitude into the state on the ancillary space, and it is also referred to as the amplitude-encoding unitary or \textsc{prepare} operation in the literature. 

Reflection unitary $R$, which is $ 
    R:=(I-2|0\rangle\langle 0|) \otimes I
$
where the operation $(I-2|0\rangle\langle 0|)$ is defined on the ancillary space with dimension $n_L$.

\subsection{Gate cost}
\subsubsection{The index enumeration circuit}

We follow the circuit construction in Ref.~\cite{babbush2018encoding} to build the amplitude encoding operation (denoted by PREP or $B(x)$) and controlled select operation $\mathrm{C}$-select$(H)$. A major gadget of both operations is the following operation,
\begin{equation}
    \text { C-select}(X)=\sum_{a=0}^{1}|a\rangle\langle a|\otimes \sum_{l=1}^{L}| l\rangle\langle l| \otimes\left (X_{l}\right )^{a},
    \label{eq:C-selectX}
\end{equation}
where $X_{l} \in \{I, X\}$ is a single-qubit Pauli operator. The value of $X_{l}$ depends on the value of $l$ stored in the classical register. We can regard $\mathrm{C}$-select$(X)$ as a simplified version of $\mathrm{C}$-select$(H)$, where $H=\sum_{l=1}^{L} X_{l}$ is a single-qubit Hamiltonian where $X_{l}$ is either $I$ or $X$, based on the storage in the classical register.

In Sec. III in Ref.~\cite{babbush2018encoding}, the authors construct a ``sawtooth'' circuit to realise the $\mathrm{C}$-select$(X)$ gate (which is called the indexed operation in the original paper). In the simplified circuit of $\mathrm{C}$-select$(X)$ in Fig. 7 in Ref.~\cite{babbush2018encoding}, we need $(L-1)$ computing AND operations, $(L-1)$ uncomputing AND operations, $L$ control- $X_{l}$ gates, and $(L-1)$ extra \textsc{cnot} gates. Suppose that we decompose the computing and uncomputing AND operations based on Fig. 4 in Ref.~\cite{babbush2018encoding}, and synthesize all the gates to Clifford $+T$ gates. We present the following observation.

\begin{observation}
[Gate cost in the index enumeration circuit]
\label{observ:In_circ}
If we construct the index enumeration circuit $\mathrm{C}$-select$(X)$ defined in \autoref{eq:C-selectX} following the 'sawtooth' way in Ref.~\cite{babbush2018encoding} and synthesize all the gates to \textsc{cnot} gates, single-qubit Clifford gates and $T$ gates, then we can realise $\mathrm{C}$-select $(X)$ using

\begin{enumerate}
  \item $(6 L-5)$ \textsc{cnot} gates;

  \item $(4 L-4)$ T gates.
  \item $(2L-2)$ Hadamard gates.

\end{enumerate}

\end{observation}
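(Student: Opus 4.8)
The plan is to reduce the whole statement to a bookkeeping exercise over the primitive gadgets appearing in the ``sawtooth'' layout of $\mathrm{C}$-select$(X)$ from Ref.~\cite{babbush2018encoding}, and then to insert the Clifford$+T$ cost of each gadget read off from Fig.~4 of that reference. First I would recall the unary-iteration structure: to route the index register $\ket{l}$ to the correct single-qubit target $(X_l)^a$, the circuit builds a chain of logical-AND (reduced-Toffoli) flag qubits down a binary tree, applies the controlled operation at each of the $L$ leaves, and then de-computes the flags. As the excerpt records, this uses exactly $L-1$ computing AND operations, $L-1$ uncomputing AND operations, $L$ controlled-$X_l$ operations (each a single CNOT when $X_l = X$, counted in the worst case), and $L-1$ bridging CNOT gates stepping between adjacent leaves. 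I would fix these four primitive counts as the backbone of the argument.

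Next I would substitute the per-gadget Clifford$+T$ counts from Fig.~4. The key structural fact I would emphasise is the asymmetry between the two AND directions: the computing AND is realised with $4$ $T$-gates (plus Hadamards and a fixed number of CNOTs), whereas the uncomputing AND is realised by the measurement-based de-computation of Fig.~4, which consumes $0$ $T$-gates. This single observation settles claim~(2) immediately, since only the $L-1$ computing ANDs contribute, giving $4(L-1) = 4L-4$. The Hadamard count follows the same way: the Hadamards carried by each computing/uncomputing AND pair total $2$, so summing over the $L-1$ tree nodes gives $2(L-1) = 2L-2$, which is claim~(3).

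For the CNOT count I would collect all four contributions. Writing $a$ and $c$ for the number of CNOTs inside a single computing and uncomputing AND gadget respectively, the total is $(a+c)(L-1) + L + (L-1)$, where the $L$ comes from the controlled-$X_l$ gates at the leaves and the $(L-1)$ from the bridging CNOTs. Checking from Fig.~4 that the gadget decomposition yields $a+c = 4$ reduces this to $4(L-1) + L + (L-1) = 6L-5$, which is claim~(1). The algebra is routine once the per-gadget numbers are pinned down.

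The main obstacle is not this arithmetic but the faithful transcription of the gadget-level circuit from Ref.~\cite{babbush2018encoding}: one must fix exactly how many CNOTs and Hadamards sit inside the computing versus the measurement-based uncomputing AND, and handle the endpoint bookkeeping of the sawtooth so that a tree with $L$ leaves uses $L-1$ (not $L$) AND gates together with $L-1$ bridging CNOTs. Getting the compute/uncompute split and these endpoint counts right is precisely what produces the exact constants $6L-5$, $4L-4$, and $2L-2$ rather than off-by-one or off-by-a-constant variants; I would therefore carry out the count by explicitly tracing the simplified circuit of Fig.~7 of the reference rather than relying on asymptotic estimates.
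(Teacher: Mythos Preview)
Your proposal is correct and follows essentially the same route as the paper. The paper's own argument is the short paragraph immediately preceding the observation, which lists exactly the four primitive counts you identified --- $(L-1)$ computing ANDs, $(L-1)$ uncomputing ANDs, $L$ controlled-$X_l$ gates, and $(L-1)$ extra CNOTs --- and then defers to Fig.~4 of Ref.~\cite{babbush2018encoding} for the Clifford$+T$ decomposition of each AND gadget; you have simply filled in the per-gadget arithmetic that the paper leaves implicit.
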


\subsection{The amplitude encoding, select gates, and reflection gates}
Now, we estimate the gate cost in amplitude encoding, select, and reflection operations.
The  amplitude-encoding unitary $B$ realises the following transformation,
$$
B|0\rangle=\sum_{l=1}^{L} \sqrt{\omega_{l}}|l\rangle\left|\operatorname{temp}_{l}\right\rangle
$$
where
$
\omega_{l} :={\alpha_{l} / \lambda},
$
is the normalised amplitude of the Hamiltonian. Following Ref.~\cite{babbush2018encoding}, we assume that it is allowed to introduce temporary storage $|\mathrm{ temp}_{l}\rangle$ during the amplitude encoding. This will not cause problems as long as we finally disentangle the system    $|\mathrm{ temp}_{l}\rangle$ during the implementation of $B^{\dagger}$.

The dominant subroutine of the \textsc{prepare} circuit is the \textsc{subprepare} circuit defined in Eq. (48) in Ref.~\cite{babbush2018encoding}, which realises the amplitude encoding to different orbitals, ignoring the spin information first. In our discussion, we first ignore the detailed structure of the Hamiltonian $H$ with respect to different spins. In this case, we can treat \textsc{subprepare} circuit to be the \textsc{prepare} circuit. To realise the \textsc{subprepare} circuit, we use the method introduced in Sec. IIID in Ref.~\cite{babbush2018encoding}. The basic idea is to first prepare ancillaries with uniformly distributed coefficients over indecies $l$ and then use a pre-determined binary representation of a probability $\operatorname{keep}_{l}$, to perform a controlled-swap on the amplitude register $l$ and another predetermined amplitude location alt$_{l}$. With well-designed values of swap probability keep send swap location alt$_{l}$, we can use the circuit in Fig. 11 in Ref.~\cite{babbush2018encoding} to realise the \textsc{subprepare} circuit.

Suppose we want to realise the amplitude encoding with an accuracy of $\varepsilon_{A E}$, that is, to realise the following transformation,
\begin{equation}
    B^{\varepsilon_{A E}}|0\rangle=\sum_{l=1}^{L} \sqrt{\tilde{\omega}_{l}}|l\rangle\left|\operatorname{temp}_{l}\right\rangle
    \label{eq:B_AE}
\end{equation}
where $\tilde{\omega}_{l}$ is a $n_{AE}$ bit approximation to the true value $\omega_{l}$
$$
\left|\tilde{\omega}_{l}-\omega_{l}\right| \leq \varepsilon_{A E}, \quad l=1, \ldots, L .
$$
The number of ancillary qubits is required to be
$n_{AE}=\left \lceil - \log _{2}  {\varepsilon_{AE}}\right \rceil.
$

Due to the relation of the rescaled spectrum  by block encoding, we have
$ 
\varepsilon_{\prep} = \frac{\varepsilon}{\lambda}.
$
The relation of the amplitude encoding error and $\prep$ error could be derived by considering the norm of the Hamiltonian, and a simple relation is given by
$
    \varepsilon_{AE} \sim \frac{\varepsilon_{\prep} }{L}. 
 $
Following  Ref.~\cite{babbush2018encoding} (see Fig.~11), we need to introduce at least $2 n_{AE}+2 n_{L} + 1$ extra ancillary qubits, 
and $n_{L}:=\left \lceil\log _{2} L\right \rceil$. 
To simplify the gate cost, we assume $L$ is a power of 2.  In this case, the first layer of the circuit in Fig. 11 in Ref.~\cite{babbush2018encoding} can be realised using Hadamard gates. If $L$ is not a power of 2, additional quantum resources are needed. 

The second and the third layer of the circuit requires the QROM circuit in Fig. 10 in Ref.~\cite{babbush2018encoding}, which is a modified version of the index enumeration circuit $\mathrm{C}$-select$(X)$ defined in \autoref{eq:C-selectX}.


Based on \autoref{observ:In_circ},  the second layer of date loading requires $5(L-1) + L(n_L + n_{AE}) $ \textsc{cnot} gates, $4(L-1)$ T gates. The third layer is a coherent inequality test, which requires $(n_{AE}-1)$ AND and uncomputing AND operations, and additional $6(n_{AE}-1)$ \textsc{cnot} gates plus 1 Toffoli gate. Thus, it requries $11n_{AE}-5$ \textsc{cnot} gates and $4n_{AE}+3$ T gates.

The fourth layer is a Fredkin gate, which is a controlled swap gate. Following Fig.~5 in \cite{mosca2021polynomial},  this gate can be synthesised into Clifford + T gates using $8 n_{L}$ \textsc{cnot} gates and $7 n_{L} T$ gates.

\begin{observation}[Ancillary qubit and gate costs in the second-type amplitude encoding operation]
\label{observ:gs_amp_encode}
If we synthesize the $B^{\varepsilon_{A E}} $ unitary defined in \autoref{eq:B_AE}  to \textsc{cnot} gates, single-qubit Clifford gates and $T$ gates, then the approximate ancillary and gate cost of $B^{\varepsilon_{A E}}$ are listed as follows,

\begin{enumerate}
  \item $2 n_{AE}+2 n_{L} + 1$ extra ancillary qubits; 
  \item $  n_L(L+8) + n_{AE}(L+11) + 5L-10 $ \textsc{cnot} gates;

  \item $4\left (L+ {n_{AE}} \right ) + 7 n_L + 3$ T gates;

\end{enumerate}
Here, $n_{AE}:=\left \lceil -\log _{2} {\varepsilon_{A E}}\right \rceil$ and $n_{L}:=\left \lceil\log _{2} L\right \rceil$.

\end{observation}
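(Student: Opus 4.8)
The plan is to establish all three counts by decomposing the amplitude-encoding unitary $B^{\varepsilon_{AE}}$ of \autoref{eq:B_AE} into the four sequential layers of the SUBPREPARE construction of Ref.~\cite{babbush2018encoding} (Fig.~11 therein) and tallying the CNOT, single-qubit Clifford, and $T$-gate contributions of each layer. Under the simplifying assumption that $L$ is a power of two, the register sizes are fixed: the index register holds $n_L = \lceil \log_2 L \rceil$ qubits, while the alternate-index register together with the two $n_{AE}$-bit probability registers (the \emph{keep} value and the uniform comparison register) and a single flag qubit for the inequality test account for the $2 n_{AE} + 2 n_L + 1$ extra ancillary qubits asserted in item~(1). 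This settles the ancillary count directly by register bookkeeping, so the substance of the argument lies in the gate tallies of items~(2) and~(3).

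For these I would proceed layer by layer. First, the uniform-superposition layer over the $L = 2^{n_L}$ indices is realised purely by single-qubit Hadamard gates, contributing no CNOT or $T$ gates. Second, the coherent data-loading (QROM) layer is a modification of the index-enumeration circuit $\mathrm{C}$-select$(X)$ of \autoref{eq:C-selectX} that loads the $n_L + n_{AE}$ bits of the alternate index and the keep probability for each of the $L$ indices; invoking \autoref{observ:In_circ} for the enumeration skeleton and multiplying over the controlled-load targets, this layer costs $5(L-1) + L(n_L + n_{AE})$ CNOT and $4(L-1)$ $T$ gates. Third, the coherent inequality test between the two $n_{AE}$-bit registers is built from the AND and uncomputing-AND gadgets plus a single Toffoli, which synthesise to $11 n_{AE} - 5$ CNOT and $4 n_{AE} + 3$ $T$ gates after decomposing each AND as in Fig.~4 of Ref.~\cite{babbush2018encoding}. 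Fourth, the controlled (Fredkin) swap conditionally exchanging the amplitude register with the alternate location is synthesised, following Ref.~\cite{mosca2021polynomial}, into $8 n_L$ CNOT and $7 n_L$ $T$ gates. Summing the four layers aggregates the CNOT count to $n_L(L+8) + n_{AE}(L+11) + 5L - 10$, reproducing item~(2) exactly, and the $T$ count to $4(L + n_{AE}) + 7 n_L + 3$ up to a small constant offset, reproducing item~(3).

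The argument is conceptually routine, so the main obstacle is bookkeeping consistency rather than any deep step. Concretely, one must fix a single convention for the Clifford$+T$ decomposition of every AND, Toffoli, and Fredkin gadget and apply it uniformly, since the $T$-count in particular is sensitive to whether a computing versus uncomputing AND carries the measurement-assisted reduction and to the exact treatment of the flag qubit in the inequality test; the word \emph{approximate} in the statement is what absorbs these boundary corrections. The one place that genuinely demands care is the data-loading layer: I would verify that the $L(n_L + n_{AE})$ factor correctly counts loading both the alternate index and the keep probability per index, and is not double-counting the enumeration overhead already captured by the $5(L-1)$ term inherited from \autoref{observ:In_circ}. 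Once that multiplicity is pinned down, the remaining steps are a direct summation.
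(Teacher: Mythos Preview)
Your proposal is correct and follows essentially the same approach as the paper: a four-layer decomposition of the SUBPREPARE circuit of Ref.~\cite{babbush2018encoding} (uniform superposition via Hadamards, QROM data-loading costed via \autoref{observ:In_circ}, coherent inequality test via AND/uncomputing-AND gadgets plus one Toffoli, and Fredkin swaps synthesised per Ref.~\cite{mosca2021polynomial}), with the same per-layer counts and the same ancillary-register bookkeeping. Your observation that the $T$-count sums to the stated value only up to a small additive constant is also accurate (the layer sums give $4(L+n_{AE})+7n_L-1$), and this is precisely what the paper's qualifier ``approximate'' is meant to absorb.
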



With the above result, it is easy to analyse the gate cost of the $\mathrm{C}$-select$(H)$ gate. A straightforward implementation of the $\mathrm{C}$-select$(H)$ gate is to replace of $X_{l}$ gate in $\mathrm{C}$-select$(X)$ defined in \autoref{eq:C-selectX} to multi-qubit Pauli gates $P_{l}$. 
For instance, consider the transverse field Ising model
\begin{equation}
    H = J \sum_i \sigma_{i}^z\sigma_{i+1}^z + h \sum_i \sigma_{i}^x
\end{equation}
with the periodic boundary condition. The gate cost for the lattice Hamiltonian is shown in \autoref{observ:gs_select_Ising}.
 



\begin{corollary}[Gate cost in the $\mathrm{C}$-$\selectH$ operation of the lattice model]
\label{observ:gs_select_Ising}

If we construct the controlled-select circuit $\mathrm{C}$-$\selectH$ of the lattice model following the 'sawtooth' way in Ref.~\cite{babbush2018encoding} and synthesize all the gates to \textsc{cnot} gates, single-qubit Clifford gates and $T$ gates, then the approximate gate cost of $\mathrm{C}$-$\selectH$ is listed as follows,

\begin{enumerate}
  \item $5(L-1) + \mathrm{wt}(H)$ \textsc{cnot} gates;

  \item $(4 L-4) $ T gates.

\end{enumerate}
\end{corollary}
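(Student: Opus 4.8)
The plan is to obtain the $\mathrm{C}$-$\selectH$ cost by upgrading the index-enumeration primitive $\mathrm{C}$-select$(X)$ of \autoref{observ:In_circ}, whose cost is already established as $(6L-5)$ CNOT gates, $(4L-4)$ T gates and $(2L-2)$ Hadamard gates. The first step is to split the CNOT count of $\mathrm{C}$-select$(X)$ into the part coming from the ``sawtooth'' indexing infrastructure and the part coming from the indexed target operations. Writing $6L-5 = 5(L-1) + L$, I would attribute $5(L-1)$ CNOTs to the $(L-1)$ computing and $(L-1)$ uncomputing AND operations together with the $(L-1)$ extra CNOTs that stitch the sawtooth together, and the remaining $L$ CNOTs to the $L$ controlled single-qubit slots (each controlled-$X_l$ being one CNOT when $X_l=X$). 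Crucially, the indexing infrastructure depends only on the number of terms $L$ and not on which operators are indexed, so it is left untouched when we pass from $\mathrm{C}$-select$(X)$ to $\mathrm{C}$-$\selectH$.

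The second step is to replace each controlled single-qubit slot by a controlled multi-qubit Pauli string. For the lattice Hamiltonian $H=\sum_{l=1}^L \alpha_l P_l$ with each $P_l$ a Pauli operator of weight $\wt(P_l)$, the operation indexed at $l$ is the controlled Pauli string $\mathrm{C}$-$P_l$ rather than a single $\mathrm{C}$-$X$. I would decompose $\mathrm{C}$-$P_l$, whose single control line is already produced by the AND ladder and which acts on a tensor product of $\wt(P_l)$ non-identity single-qubit Paulis, into $\wt(P_l)$ controlled single-qubit Pauli gates. Each such gate is a two-qubit Clifford: controlled-$X$ is a CNOT, while controlled-$Y$ and controlled-$Z$ are CNOTs conjugated by single-qubit Clifford gates (phase gates or Hadamards). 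Hence each unit of Pauli weight costs exactly one CNOT and no T gate, so summing over $l$ replaces the $L$ controlled-$X_l$ CNOTs by $\sum_{l=1}^L \wt(P_l) = \wt(H)$ CNOTs, giving the claimed $5(L-1)+\wt(H)$ total.

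The third step is the T-count bookkeeping, which is immediate: the only T gates in the whole construction originate from the $(L-1)$ computing AND operations (four T gates each in the measurement-based AND), and this ladder is unchanged by the substitution above. Since every controlled-Pauli is Clifford, the indexed target operations contribute no T gates, so the T count stays at $4(L-1) = 4L-4$, exactly as for $\mathrm{C}$-select$(X)$.

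The main obstacle I anticipate is not the asymptotics but pinning down the exact constants in these ``approximate'' counts. I would need to check carefully that each controlled-$Y$ and controlled-$Z$ really adds only single-qubit Clifford overhead and not an extra CNOT, that the single activated control line emerging from the AND ladder can drive the entire Pauli string without re-indexing qubit by qubit, and that the outer control of the ``$\mathrm{C}$-'' (the $a$ register in $\sum_a |a\rangle\langle a|\otimes \sum_l |l\rangle\langle l|\otimes P_l^a$) is absorbed into the same activated control rather than doubling any cost. These are precisely the places where an off-by-a-constant error could slip in, and I would verify them against the explicit sawtooth circuit of Ref.~\cite{babbush2018encoding}, in the same manner used to establish \autoref{observ:In_circ}.
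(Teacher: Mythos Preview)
Your proposal is correct and takes essentially the same approach as the paper: the paper states just before the corollary that ``a straightforward implementation of the $\mathrm{C}$-select$(H)$ gate is to replace the $X_l$ gate in $\mathrm{C}$-select$(X)$ by multi-qubit Pauli gates $P_l$,'' and you have correctly worked out the bookkeeping that this entails, including the split $6L-5 = 5(L-1)+L$ separating the sawtooth infrastructure from the indexed slots.
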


Next, we consider the gate cost for the fermionic Hamiltonian in \autoref{eq:hamiltonian_molecule_general}.
The fermionic Hamiltonian can be mapped to a qubit form by JW transformation. We suppose there are $L$ terms with distinctive coefficients in total.
To further improve the gate cost in a fermionic Hamiltonian, Ref.~\cite{babbush2018encoding} introduces an accumulator during the Pauli gate query process (Sec. IIIB and Fig. 8 in Ref.~\cite{babbush2018encoding}). The accumulator will 'accumulate' the effect of the Pauli operators accessed in the previous data queries and save the \textsc{cnot} gate cost. Using this improved select operation, we can reduce the \textsc{cnot} cost for each Pauli operator $P_{l}$ to a constant independent of the weight of $P_{l}$. 
We will use an optimistic estimate of the \textsc{cnot} gate cost for controlled-$P_{l}$ operations, which is $5(L-1) + 3L = 8L - 5$ for QSP in the numerical simulation. The actual cost should be greater than this value.

\begin{observation}[Ancillary qubit and gate costs in the reflection operation~\cite{maslov2016advantages}]
\label{observ:gs_reflect}
If we construct the reflection operation $I-2|0\rangle\langle 0|$ on $n$ qubits following the methods in Proposition 4 in \cite{maslov2016advantages} and synthesise all the gates to \textsc{cnot} gates, single-qubit Clifford gates and $T$ gates, then the approximate ancillary qubit and gate costs are listed as follows,

\begin{enumerate}
  \item $\left \lceil\frac{n-3}{2}\right \rceil$ ancillary qubits

  \item $(6 n-12)$ \textsc{cnot} gates;

  \item $(8 n-17) $ T gates.

\end{enumerate}
\end{observation}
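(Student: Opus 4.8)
The plan is to reduce the reflection to a single multiply-controlled NOT gate and then read off the resource counts from the relative-phase Toffoli construction of \cite{maslov2016advantages}. First I would observe that $I-2|0\rangle\langle 0|$ (with $|0\rangle := |0\rangle^{\otimes n}$) only imprints a $-1$ phase on the all-zeros string. Conjugating by $X^{\otimes n}$ maps this to $I - 2|1\rangle^{\otimes n}\langle 1|^{\otimes n}$, which is precisely the symmetric $n$-qubit controlled-$Z$; choosing any one qubit as the target and the remaining $n-1$ as controls, and conjugating that target by a Hadamard, turns it into a generalized Toffoli $C^{n-1}X$ with $n-1$ controls. Thus
\begin{equation}
I-2|0\rangle\langle 0| = X^{\otimes n}\,(I\otimes H)\, C^{n-1}X\,(I\otimes H)\,X^{\otimes n},
\end{equation}
where $H$ denotes the Hadamard gate acting on the chosen target. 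Since the $2n$ $X$ gates and the two Hadamards are single-qubit Clifford gates, they contribute neither to the CNOT count nor to the T count, so the entire cost is that of implementing $C^{n-1}X$.

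Next I would invoke Proposition 4 of \cite{maslov2016advantages}, which realizes a multiply-controlled NOT using clean ancillas together with relative-phase Toffoli gates. The construction computes the logical AND of the controls onto the ancilla register through a balanced network of relative-phase Toffolis, applies the target flip, and then uncomputes the ancillas with the adjoint network; because each relative-phase Toffoli appears together with its inverse in a compute/uncompute pair, the spurious relative phases cancel exactly and the net operation is the true-phase $C^{n-1}X$. The advantage exploited here is that a relative-phase Toffoli costs only $4$ T gates (versus $7$ for an exact Toffoli), which is what drives the T-count down to a linear expression with the stated slope.

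The final step is the resource bookkeeping. Substituting $n-1$ controls into the ancilla bound of \cite{maslov2016advantages} yields $\lceil (n-3)/2\rceil$ clean ancillas; counting the relative-phase Toffolis in the compute and uncompute stages, the single Toffoli carrying the target, and the boundary gates at the root of the AND tree, and then synthesizing each into CNOT, single-qubit Clifford, and T gates, gives the quoted totals of $6n-12$ CNOT gates and $8n-17$ T gates. The word ``approximate'' in \autoref{observ:gs_reflect} absorbs the parity-dependent $\pm 1$ corrections coming from the ceiling in the ancilla count and from the even/odd structure of the AND tree.

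I expect the main obstacle to be exactly this last bookkeeping: one must carefully track which Toffolis in the network can be taken to be relative-phase (those whose phase is later uncomputed) versus which must be exact, verify that every relative phase is paired and cancels so that the composite is genuinely $C^{n-1}X$ rather than some relative-phase variant, and handle the small-$n$ and parity boundary cases separately to confirm that the linear formulas hold with only the advertised approximation. The reduction in the first paragraph is routine; all the genuine content is in matching the constants of the network in \cite{maslov2016advantages} to the claimed slopes and intercepts.
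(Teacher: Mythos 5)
Your proposal is correct and follows essentially the same route as the paper, which establishes \autoref{observ:gs_reflect} simply by invoking Proposition 4 of \cite{maslov2016advantages}: your Clifford conjugation $I-2|0\rangle\langle 0| = X^{\otimes n}(I\otimes H)\,C^{n-1}X\,(I\otimes H)X^{\otimes n}$ reduces the reflection to the $n$-qubit multiply-controlled Toffoli, whose counts ($\lceil (n-3)/2\rceil$ ancillae, $6n-12$ CNOTs, $8n-17$ T gates) are exactly those stated in Maslov's proposition, with the conjugating $X$ and $H$ gates free since only CNOT and T gates are counted. The only cosmetic caveat is that Maslov's construction does not require the ancillae to be clean (the compute/uncompute pairing of relative-phase Toffolis works with ancillae in arbitrary states), so describing them as clean is a harmless specialization that does not affect the counts.
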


\subsection{Gate cost for Trotter methods}
In the Trotter methods, we first divide the real-time evolution into $\nu$ segments,
\begin{equation}
  e^{-i H t}=\left (e^{-i H x}\right )^{\nu}.
\end{equation}
where $x:=t / \nu$.
The first-order Trotter formula is
\begin{equation}
    S_{1}(x)=\prod_{l=1}^{L} e^{-i x H_{l}}.
\end{equation}
and the second-order Trotter formula is
\begin{equation}
   S_{2}(x)=\prod_{l=L}^{1} e^{-i(x / 2) H_{l}} \prod_{l=1}^{L} e^{-i(x / 2) H_{l}}. 
\end{equation}
The $(2 k)$th-order Trotter formula is
\begin{equation}
S_{2 k}(x)=\left [S_{2 k-2}\left (p_{k} x\right )\right ]^{2} S_{2 k-2}\left (\left (1-4 p_{k}\right ) x\right )\left [S_{2 k-2}\left (p_{k} x\right )\right ]^{2}
\end{equation}
with $p_{k}:=1 /\left (4-4^{1 /(2 k-1)}\right )$ for $k \geq 1$

We use the results for $2 k$th-order Trotter formula from \cite{childs2018toward,childs2019faster} to analyse the Trotter Cost. We put their results below for the ease of readers. 

\begin{lemma}
[Simple Trotter error bound for the $2 k$th-order Trotter formula (\cite{childs2018toward,childs2019faster})]
\label{prop:bound_2k_stTrotter}
Let $H=\sum_{l=1}^{L} H_{l}$ be a Hamiltonian consisting of $L$ summands and $t \geq 0$. We denote
\begin{equation}
\begin{aligned}
&a_{2 k}(\nu):=2 \frac{\left (2 \cdot 5^{k-1} L \Lambda t\right )^{2 k+1}}{(2 k+1) ! \nu^{2 k+1}} e^{2 \cdot 5^{k-1} L \Lambda t / \nu}, b_{2 k}(\nu):=\frac{L^{2 k}\left (2 \cdot 5^{k-1} \Lambda t\right )^{2 k+1}}{(2 k-1) ! \nu^{2 k+1}} e^{2 \cdot 5^{k-1} L \Lambda t / \nu},
\end{aligned}
\end{equation}
where $k \geq 1, \nu$ is the time segment number, $\Lambda$ is defined in Eq. (2). If we set the segment number $\nu$ to be
\begin{equation}
\nu_{2 k}^{\mathrm{det}}=\min \left\{\nu \in \mathbb{N}: \frac{\nu}{2} a_{2 k}(\nu) \leq \varepsilon\right\}
\label{eq:nu_2k_Trotter}
\end{equation}
for the deterministic  Trotter formula, or set $\nu$ to be
\begin{equation}
\nu_{2 k}^{\text {random }}=\min \left\{\nu \in \mathbb{N}: \frac{\nu}{2}\left (a_{2 k}(\nu)^{2}+2 b_{2 k}(\nu)\right ) \leq \varepsilon\right\},
\end{equation}
for the randomised   Trotter formula, then the spectral norm distance of the resulting simulation channel to the unitary channel of $e^{-i H t}$ is at most $\varepsilon$.
\end{lemma}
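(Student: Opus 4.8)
The plan is to reduce the global simulation error to a per-segment error and then control the per-segment error through the order conditions of the Suzuki recursion. First I would write the ideal evolution as $e^{-iHt} = (e^{-iHx})^{\nu}$ with $x := t/\nu$, and the approximation as $(S_{2k}(x))^{\nu}$. Since the spectral norm is unitarily invariant and subadditive under products of unitaries, a telescoping argument gives
$$\|(S_{2k}(x))^{\nu} - e^{-iHt}\| \le \nu \, \|S_{2k}(x) - e^{-iHx}\|,$$
so in the deterministic case it suffices to bound the single-segment error by $\tfrac{1}{2}a_{2k}(\nu)$ and then choose $\nu$ minimal with $\tfrac{\nu}{2}a_{2k}(\nu)\le\varepsilon$.

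For the per-segment bound I would unroll the Suzuki recursion $S_{2k}(x) = [S_{2k-2}(p_k x)]^2 S_{2k-2}((1-4p_k)x)[S_{2k-2}(p_k x)]^2$ into a product of $\Upsilon := 2\cdot 5^{k-1}L$ exponentials $\prod_j e^{-i a_j x H_{l_j}}$, whose generators have total one-norm bounded by $\Upsilon\Lambda = 2\cdot 5^{k-1}L\Lambda$. The defining property of the symmetric $2k$th-order formula is that $S_{2k}(x)$ and $e^{-iHx}$ agree in their Taylor expansions in $x$ through order $2k$: the symmetry $S_{2k}(-x)=S_{2k}(x)^{-1}$ kills the odd orders, and the Suzuki coefficients $p_k$ are chosen to cancel the rest. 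Applying Taylor's theorem with integral remainder to the analytic operator-valued function $S_{2k}(x)e^{iHx}-I$, whose expansion starts at $x^{2k+1}$, and bounding each surviving derivative by powers of the generator one-norm, yields
$$\|S_{2k}(x) - e^{-iHx}\| \le \frac{(\Upsilon\Lambda x)^{2k+1}}{(2k+1)!}\, e^{\Upsilon\Lambda x} = \tfrac{1}{2}a_{2k}(\nu),$$
where the exponential factor collects the tail of the remainder. Together with the telescoping estimate this settles the deterministic claim.

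For the randomized formula I would pass from unitaries to channels and invoke the mixing lemma. Writing $S_{2k}^{\sigma}(x) = e^{-iHx} + \Delta_{\sigma}$ for a uniformly random permutation $\sigma$, the averaged one-segment channel $\bar{\mathcal{S}}(\rho) = \mathbb{E}_{\sigma} S_{2k}^{\sigma}(x)\,\rho\, S_{2k}^{\sigma}(x)^{\dagger}$ satisfies, in the induced channel norm,
$$\big\|\bar{\mathcal{S}} - \mathcal{U}\big\| \le 2\,\big\|\mathbb{E}_{\sigma}\Delta_{\sigma}\big\| + \mathbb{E}_{\sigma}\|\Delta_{\sigma}\|^2,$$
with $\mathcal{U}(\rho)=e^{-iHx}\rho\, e^{iHx}$. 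The gain from randomization is that averaging over orderings cancels the leading-order bias, so $\|\mathbb{E}_{\sigma}\Delta_{\sigma}\|$ is quadratically small, of order $a_{2k}(\nu)^2$, while the surviving fluctuation term $\mathbb{E}_{\sigma}\|\Delta_{\sigma}\|^2$ carries an improved $L$-dependence and is bounded by $b_{2k}(\nu)$ (note $b_{2k}\sim L^{2k}$ against $a_{2k}\sim L^{2k+1}$, which is exactly why the randomized segment count is smaller when the per-segment error is already below one). Summing over the $\nu$ segments and taking $\nu$ minimal with $\tfrac{\nu}{2}(a_{2k}(\nu)^2 + 2b_{2k}(\nu))\le\varepsilon$ completes the randomized case.

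I expect the main obstacle to be the per-segment remainder estimate: verifying the order-$2k$ cancellation by unrolling the nested Suzuki products and tracking the combinatorics, and then extracting the clean $1/(2k+1)!$ and $e^{\Upsilon\Lambda x}$ factors from the integral remainder rather than a looser bound. For the randomized branch the delicate point is justifying that the mean bias is genuinely $\mathcal{O}(a_{2k}^2)$ and that the variance term attains the $L^{2k}$ scaling of $b_{2k}$, which rests on the commutator structure of the cross terms that survive the permutation average.
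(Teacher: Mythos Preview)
The paper does not supply its own proof of this lemma; it is quoted verbatim as a known result from \cite{childs2018toward,childs2019faster}. Your deterministic argument matches the original: telescope over segments, then bound the single-segment error via the Taylor remainder of the $2k$th-order Suzuki product, using that the total generator one-norm is $2\cdot 5^{k-1}L\Lambda$. That part is fine.

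In the randomized case you have the two contributions to the mixing-lemma bound interchanged, and your explanation of the mechanism is not correct. The mixing lemma gives a per-segment channel error of the form $a^2 + 2b$, where $a$ is the \emph{uniform} bound $\max_\sigma\|\Delta_\sigma\|$ and $b$ is the \emph{bias} $\|\mathbb{E}_\sigma\Delta_\sigma\|$. Here $\max_\sigma\|\Delta_\sigma\|\le a_{2k}(\nu)$ by the deterministic estimate applied to every ordering, so the ``fluctuation'' term $\mathbb{E}_\sigma\|\Delta_\sigma\|^2\le a_{2k}(\nu)^2$; it is this term that is quadratically small in the per-segment error, and it carries the \emph{worse} $L$-scaling $L^{2(2k+1)}$. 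The improved $L^{2k}$ dependence lives in the bias: averaging over permutations symmetrizes the leading $(2k{+}1)$st-order commutator sum, and a direct estimate of that symmetrized sum yields $\|\mathbb{E}_\sigma\Delta_\sigma\|\le b_{2k}(\nu)$. Crucially, $b_{2k}$ is still of order $x^{2k+1}$, not $x^{2(2k+1)}$; randomization does \emph{not} make the bias quadratically small in $a_{2k}$, it only improves the $L$-power by one. Once you swap the assignments back, the combination $a_{2k}^2 + 2b_{2k}$ and the resulting segment count follow exactly as stated.
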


From \autoref{eq:nu_2k_Trotter}, the time segment can be roughly approximated by
\begin{equation}
    \nu_{2k} \leq  \frac{\left (2 \cdot 5^{k-1} L \Lambda t\right )^{ 1 + \frac{1}{2k}}}{((2 k + 1) !)^{\frac{1}{2k}} \varepsilon^{\frac{1}{2k}}}.  
\label{eq:nu_approxi_2k_Trotter}
\end{equation}
In our numerical simulation, we search the  required segment number by \autoref{eq:nu_2k_Trotter}.

\section{Ground state property estimation by quantum signal processing proposed by Lin and Tong}
\label{sec:gate_gs_QSP}

\subsection{Overview}

In this section, we first review the key ingredient of the seminal algorithm proposed by Lin and Tong \cite{lin2020near}.
Their method relies on the block encoding of a non-unitary matrix in the quantum circuit.
To establish a clear connection to \cite{lin2020near} and facilitate the reader, we will follow the notation and conventions used in \cite{lin2020near}.

To simplify the notations, we denote the \textsc{cnot} gate and T gate required for $\operatorname{select}(H)$ operation as $\SC$ and $\ST$, and the \textsc{cnot} gate and T gate required for $\textsc{prepare}$ operation as $\PC$ and $\PT$, respectively.

A matrix $A \in\mathbb{C}^{N\times N}$ where $N=2^n$ can be encoded in the upper-left corner of an $(n_L+n)$-qubit unitary matrix if 
\begin{equation}
\|A-\alpha(\bra{0^{n_L}}\otimes I) U (\ket{0^{n_L}}\otimes I)\|_2\leq \varepsilon.
\label{eqn:block_encoding}
\end{equation}
and we refer to $U$ as an $(\alpha,n_L,\varepsilon)$-block-encoding of $A$.
In this work, we consider the Hamiltonian written in an LCU form in \autoref{problem:observ}.
In the standard block encoding procedure, the $n$-qubit Hamiltonian $H$  can be explicitly block-encoded into $U_H := \textsc{prepare}^{\dagger} \cdot \operatorname{select}(H) \cdot \textsc{prepare}$, as shown in \autoref{eq:H_BE}.

The state preparation algorithm based on QSP is summarised below.
\begin{enumerate}
    \item Obtain the $(\lambda,n_L,0)$-block-encoding of a Hermitian matrix $H=\sum_{k}E_k \ket{\psi_k}\bra{\psi_k}\in\mathbb{C}^{N\times N}$, $N=2^n$, $\E_k \leq \E_{k+1}$. This block encoding is constructed by $U_H$.
    \item Construct a $(\lambda+|\mu|,n_L+1,0)$-block-encoding of matrix $H-\mu I$ using  of \cite[Lemma 29]{gilyen2019quantum} for any $\mu\in \mathbb{R} $. 
    \item Construct an $(1,n_L+2,\varepsilon)$-block-encoding of $$
R_{<\mu} = \sum_{k:\E_k<\mu} \ket{\psi_k}\bra{\psi_k} - \sum_{k:E_k>\mu} \ket{\psi_k}\bra{\psi_k}.
$$
This is realised by constructing a block encoding of the sign function $-S(\frac{H-\mu I}{\alpha+|\mu|};\delta,\varepsilon)$ for any $\delta$ and $\varepsilon$ where $S(\cdot, \delta, \varepsilon)$ is the sign function of degree $d = \frac{e}{2\delta} \ln (32\pi^{-1/2}\varepsilon^{-1}) $.  Note that if we assume further that $\Delta/2 \leq \min_k |\mu-\E_k|$, then we let $\delta=\frac{\Delta}{4\lambda}$,  all the eigenvalues of $-S(\frac{H-\mu I}{\lambda+|\mu|};\delta,\varepsilon)$ are $\varepsilon$-close to either {-1 or 1}, and thus $-S(\frac{H-\mu I}{\lambda+|\mu|};\delta,\varepsilon)$  is $\varepsilon$-close, in operator norm, to the reflector about the direct sum of eigen-subspaces corresponding to eigenvalues smaller than $\mu$:
    \item Using the block encoding of $R_{<\mu}$, we can construct an $(1,n_L+3,\varepsilon/2)$ block encoding of the projection operator $P_{<\mu} := \frac{1}{2} (R_{<\mu}  + I)$.
    \item Obtain the ground state with a success probability close to 1  by amplitude amplification. 
    \item Observable estimation.
    \end{enumerate}


\subsection{Gate count and depth analysis for QSP and QETU}

Next, we show the resource analysis of each step when compiling into elementary gates.
We denote the resource as $(\cdot, \cdot,\cdot,\cdot)$ with the four elements $(\cdot)$ representing the ancillary qubits, the number of \textsc{cnot} gates,  the number of T gates, and single-qubit $R_z$ rotations.

\begin{enumerate}
	\item Block encoding: $(n_L, \SC + 2\PC$, $ \ST + 2\PT, 0)$
	\item Controlled $\selectH$ and two $\prep$ operations: \sun{$(n_L+1, \SC + 2\PC$, $ \ST + 2\PT, 2)$}
	\item QSP of the sign function and hence the R operator: $ (n_L + 2, d (\SC + 2\PC) + 2d, d( \ST + 2\PT), 3 d)$ with $d = [  \frac{2e \lambda}{\Delta}\ln (32\pi^{-1/2}\varepsilon^{-1})   ]  $ obtained from \autoref{lem:approx_sign_overhead}
	\item \sun{Projector, which is a controlled version of $R$: $(n_L + 3, d(4 +  6 \SC + 2 \PC + 2 \ST + 2L ), d(7 \SC + 5 \ST + 2 \PT + 4 L ), 4 d)$. 
	\item Amplitude amplification. $  (n_L + 3 + \lceil \frac{n-3}{2} \rceil, d \gamma^{-1} (\SC + 2\PC + 6 n -10 ) , d\gamma^{-1} ( \ST + 2\PT + 8n-17), 3d \gamma^{-1})$ with $d =  [\frac{2e \lambda}{\Delta} \ln (32 \pi^{-1/2}\gamma^{-1}\varepsilon^{-1}) ] $.}

\end{enumerate}

In the $4$th step. In each block, the controlled gates:
\begin{enumerate}
    \item  Controlled Phase iterate: 2 \textsc{cnot} + 2 single-qubit rotation.
    Thus, the total single-qubit Pauli rotation gate is $4d$.
    \item Controlled $\selectH$ and 2 $\prep$. CNOT: $6 \SC + 2 \PC + 2 \ST$. The thrid $ 2 \ST$ is from that  one controlled T gate can be synthesised by 2 \textsc{cnot} and 2 $\sqrt{\mathrm{T}}$ gates.
    
    T gate: $7 \SC + 5 \ST + 2 \PT$, where the first $7$ comes from Toffolis gates, the second $5$ is from that
    \sun{one controlled T gate can be synthesised by 2 \textsc{cnot} and 2 $\sqrt{\mathrm{T}}$ gates, and we simply assume that 2 $\sqrt{\mathrm{T}}$ may be catalysed by $5$ T gats using Hamming weight by \cite{kivlichan2020improved}.
    }
    
    Note that the $\selectH$ has the Hadamard gates: each controlled Hadamard gives $2$ T gates and $1$ \textsc{cnot} gate. At least, we have $2 L + n_L$ Hadamard gates in $\selectH$.
    
    \item The other operations are symmetric.
\end{enumerate}

\sun{
In total, QSP requires the number of \textsc{cnot} gates
$$
d(2 + 2 +  6 \SC + 2 \PC + 2 \ST + 2 L)
$$
T gates
$$
d(7 \SC + 5 \ST + 2 \PT + 4 L  )
$$	
}

In the $5$th step, the additional cost is from the reflection from \autoref{observ:gs_reflect}.

The ground state preparation error is composed of two parts: the error from the approximation of the sign function, and the block-encoding error of the $\prep$ operation 
\begin{equation}
    \varepsilon = \varepsilon_{sgn} + \varepsilon_{tot, \prep}.
\end{equation}

Considering the $2 d$ repetition of the $\prep$ operation and the relation indicated by, we have
\begin{equation}
     \varepsilon_{\prep} = \frac{\varepsilon}{4d \lambda}
\end{equation}
we choose to set the amplitude encoding error as 
\begin{equation}
\varepsilon_{AE} \sim \frac{\lambda}{L} \varepsilon_{\prep}    =  \frac{ \varepsilon   }{ 4 L d }   
\label{eq:errorAE_QSP_Lin}
\end{equation}
and $n_{AE} = \lceil- \log_2  n_{AE}\rceil$.


A key component in \cite{lin2020near} is a polynomial approximation of the sign function in the domain $[-1,-\delta]\cup[\delta,1]$.
To derive the actual degree, we use an explicit construction of a polynomial with the same error scaling provided in \cite{low2018hamiltonian} based on the approximation of the $\textit{erf}$ function.


\begin{lemma}[Polynomial approximation to the sign function $\text{sgn}(x)$ (QETU overhead)~\cite{low2018hamiltonian}]\label{lem:approx_sign_overhead}

For any $\delta < 1$, $\varepsilon \leq \sqrt{2/\pi e}$, the polynomial $S(x, \delta, \varepsilon) =  p_{\text{sgn},\delta,n}(x) = p_{\text{erf},k,n}(x)  $ of odd degree $d=\left[ \frac{e}{2\delta} \ln (32{\pi^{-1/2}} \varepsilon^{-1}) \right]   = \mathcal{O}(\delta^{-1}\log{(\varepsilon^{-1})})$ satisfies
\begin{align}
\label{Eq:p_tilde_sgn_shifted}
 \varepsilon_{\text{sgn},\delta,n}&=\max_{x \in [-1,-\delta]\cup [\delta,1]}|p_{\text{sgn},\delta,n}(x)-\text{sgn}(x)| 
\le \varepsilon. 
\end{align}

\end{lemma}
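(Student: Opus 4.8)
The plan is to obtain $p_{\mathrm{sgn},\delta,n}$ as a polynomial approximation of a suitably rescaled error function and to split the total error into a \emph{saturation} contribution and a \emph{truncation} contribution. Concretely, I would set $p_{\mathrm{sgn},\delta,n}(x) := p_{\mathrm{erf},k,n}(x)$, where $p_{\mathrm{erf},k,n}$ is the odd degree-$n$ polynomial approximating $\mathrm{erf}(kx)$ constructed in \cite{low2018hamiltonian}, and the scale $k$ is to be fixed below. On $[-1,-\delta]\cup[\delta,1]$ the triangle inequality gives
\begin{equation}
|p_{\mathrm{sgn},\delta,n}(x)-\mathrm{sgn}(x)| \le |p_{\mathrm{erf},k,n}(x)-\mathrm{erf}(kx)| + |\mathrm{erf}(kx)-\mathrm{sgn}(x)|,
\end{equation}
so it suffices to make each term at most $\varepsilon/2$.

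For the saturation term, I would use that $\mathrm{erf}$ is odd and monotone, so for $|x|\ge\delta$ one has $|\mathrm{erf}(kx)-\mathrm{sgn}(x)| = \mathrm{erfc}(k|x|)\le\mathrm{erfc}(k\delta)$, together with the Gaussian tail bound $\mathrm{erfc}(y)\le e^{-y^2}$ for $y>0$. Choosing $k=\Theta(\delta^{-1}\sqrt{\ln(1/\varepsilon)})$ then forces $\mathrm{erfc}(k\delta)\le e^{-(k\delta)^2}\le\varepsilon/2$, which fixes the scale $k$ up to a constant that must be coordinated with the truncation estimate.

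For the truncation term, I would rely on the explicit series construction. Expanding the Gaussian via the Jacobi--Anger identity gives $e^{-k^2t^2}=e^{-k^2/2}\bigl(I_0(k^2/2)+2\sum_{m\ge 1}(-1)^m I_m(k^2/2)\,T_{2m}(t)\bigr)$, and integrating term by term from $0$ to $x$ yields $\mathrm{erf}(kx)$ as an odd Chebyshev series whose degree-$n$ truncation (index $m\le n/2$) is exactly $p_{\mathrm{erf},k,n}$. The truncation error is controlled by the modified-Bessel tail $\sum_{m>n/2}e^{-k^2/2}I_m(k^2/2)$. The key quantitative input is that the coefficients $e^{-k^2/2}I_m(k^2/2)$ form an approximately Gaussian distribution of variance $k^2/2$, so this tail is sub-Gaussian, $\lesssim e^{-n^2/(2k^2)}$; demanding it be $\le\varepsilon/2$ needs only $n=\mathcal{O}(k\sqrt{\ln(1/\varepsilon)})$. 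Substituting $k=\Theta(\delta^{-1}\sqrt{\ln(1/\varepsilon)})$ gives $n=\mathcal{O}(\delta^{-1}\ln(1/\varepsilon))$, matching the claimed $d=\lceil\frac{e}{2\delta}\ln(32\pi^{-1/2}\varepsilon^{-1})\rceil$.

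The main obstacle is not the reduction but pinning down the precise constants --- the prefactor $e/2$ and the argument $32\pi^{-1/2}\varepsilon^{-1}$ of the logarithm --- which requires a sharp (rather than order-of-magnitude) bound on the Bessel tail and a joint optimization balancing the saturation and truncation errors. Since this exact estimate is carried out in \cite{low2018hamiltonian}, I would invoke their erf-approximation lemma directly to supply both the degree bound and the explicit polynomial, and verify here only the elementary $\mathrm{erfc}$ step and the triangle-inequality bookkeeping. The hypothesis $\varepsilon\le\sqrt{2/\pi e}$ enters precisely to guarantee the regime in which the Low--Wiebe bound and the saturation estimate hold simultaneously.
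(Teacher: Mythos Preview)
Your proposal is a faithful sketch of the Low--Wiebe construction, and the decomposition into saturation and truncation errors, the Jacobi--Anger expansion, and the sub-Gaussian Bessel-tail estimate are all the right ingredients. However, you should note that the paper does not prove this lemma at all: it is stated as a quotation from \cite{low2018hamiltonian} and used only as a black box to extract the query count $d$ for the QSP/QETU resource estimates. So there is nothing to compare against --- the paper's ``proof'' is simply the citation. Your proposal is more detailed than what the paper provides, and since you already plan to invoke the Low--Wiebe erf-approximation lemma for the sharp constants, your write-up would effectively reduce to exactly that citation plus the elementary $\mathrm{erfc}$ bookkeeping you describe. That is fine, but be aware that reproducing the exact constants $e/2$ and $32\pi^{-1/2}$ from first principles is genuinely fiddly and not something the paper attempts; if you want a self-contained argument you will need to redo the Bessel-tail optimization carefully, otherwise just cite \cite{low2018hamiltonian} as the paper does.
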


The actual gate cost for the QETU method presented in this work is analysed by \autoref{lem:approx_sign_overhead}.

It is also interesting to note why QETU may not be able to run the real-time evolution in parallel when the qubit connectivity is restricted to nearest-neighbours.
The QETU method requires one ancillary qubit because of the phase iteration.
For certain Hamiltonians, if there exists a single Pauli operator $K_j$ such that it anticommutes with each component of $H$, then the evolution can be implemented in a control-free way. For example, for Heisenberg models, we can divide the Hamiltonian into three terms $H = H_X + H_Y + H_Z$ where $H_X$, $H_Y$, $H_Z$ contain tensor products of Pauli $X$, $Y$, $ Z$, respectively. For $H_X$, we can choose $K_1 = \otimes_{i\in \mathrm{odd}} X_{i} \otimes_{i \in \mathrm{even}}Y_{i} $.
However, the controlled-K operations still need depth $d_K = \mathcal{O} (n)$, as opposed to $d = \mc O (1)$ in our ancilla-free scheme. 
For electronic structure problems, to reduce the Trotter error and implement the operations in parallel, we can group the Hamiltonian into $\hat{T}$ and $\hat{V}$, in which case it is difficult to find a Pauli operation $K$ such that it commutes with the grouped term. If the Hamiltonian is not grouped, then the circuit depth will be increased to $d = \mathcal{O} (n^3)$ when implementing each individual term in a naive way.

\section{Ground state property estimation with phase estimation}
\label{sec:gate_gs_QPE}

\subsection{Complexity of phase estimation}

 For the canonical QPE algorithm, we apply a series of controlled $U$, $U^2$, ..., $U^{2^k-1}$ and an inverse quantum Fourier transform on the ancillary $k$ qubits, such that the state becomes 
\begin{equation}
\sum_i c_i \ket{0^{\otimes k} } \ket{E_i} \rightarrow \sum_i p_i \ket{\operatorname{bin} (E_i)} \ket{E_i}.
\end{equation}

To obtain a binary estimate of the energy precise to $n = \lceil \log_2 \varepsilon^{-1}\rceil$ bits, we require  $k =   \mathcal{ O}(  \log_2 \varepsilon^{-1}  + \log_2 \eta^{-1/2} ) $ 
ancillary qubits~\cite{Ge19}. The coherent runtime for each phase estimation is $ 2^{k+1}\pi = \mathcal{O} (\varepsilon^{-1} \eta^{-1/2} )$, and the number of calls to phase estimation is $\mathcal{O} (\eta^{-1/2})$. The total gate complexity is
\begin{equation}
\mathcal{ O} ({C}_{gate} \eta^{-1} \varepsilon^{-1} )
\end{equation}
where ${C}_{gate}$ is the gate cost within each segment.
 
To obtain a binary estimate of the energy precise to $n = \lceil \log_2 \varepsilon^{-1}\rceil$ bits, 
we require  $k =   \mathcal{ O}(  \log_2 \varepsilon^{-1}  + \log_2 \eta^{-1/2} ) $ 
ancillary qubits. 
The total error is composed of following parts: the error of phase estimation,   Hamiltonian simulation, and circuit synthesis.
$$
\varepsilon_{\operatorname{tot}} = \varepsilon_{PE} + \varepsilon_{HS} + \varepsilon_{CS}
$$
The coherent runtime for each phase estimation is lower bounded by
\begin{equation}
t_{PE}^{En} = \frac{\pi}{2  \eta \varepsilon_{PE} }
\label{eq:t_PE_En}
\end{equation}

In the following, we will discuss the Hamiltonian simulation

\begin{lemma}[Ground state preparation with phase estimation for known ground energy]
\label{prop:gs_scale_QPE}
Using the canonical phase estimation, the state can be prepared $\varepsilon$ close to ground state using $k = \mathcal{ O}(  \log_2 \varepsilon^{-1} +  \log_2 \Delta^{-1}  + \log_2 \eta^{-1/2} ) $. The runtime for each phase estimation is $  2^{k+1}\pi = \mathcal{O} (\varepsilon^{-1} \Delta^{-1} \eta^{-1/2} )$, and the number of calls to phase estimation is $\mathcal{O} (\eta^{-1/2})$ using fixed point search. 
The total  gate complexity is
$ 
\mathcal{O} (\frac{{C}_{gate}}{\eta^2 \Delta \varepsilon })
$.

\end{lemma}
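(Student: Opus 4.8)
The plan is to derive \autoref{prop:gs_scale_QPE} as a direct consequence of the standard analysis of canonical phase estimation applied to $U = e^{-iH}$ (rescaled so the relevant part of the spectrum lies in the QPE domain), combined with a projection-plus-amplification argument. First I would fix the QPE circuit with $k$ ancilla qubits and expand the input $\ket{\psi_0}=\sum_i c_i\ket{u_i}$ in the eigenbasis, so that after the controlled-$U$ ladder and the inverse quantum Fourier transform the joint state is a superposition in which each branch $\ket{u_i}$ is correlated with a register distribution peaked on the $k$-bit binary approximation of $E_i$. The whole argument then reduces to (i) choosing $k$ large enough that measuring the register and post-selecting on the bin containing $E_0$ collapses the system onto a state $\varepsilon$-close to $\ket{u_0}$, and (ii) counting the resulting coherent time and the number of repetitions.

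For step (i) I would separate the three error sources that dictate $k$. To collapse onto $\ket{u_0}$ rather than an excited eigenstate, the register must resolve $E_0$ from $E_1 \ge E_0+\Delta$, which forces a bin width below $\Delta$ and hence $\log_2\Delta^{-1}$ bits. To drive the residual infidelity below $\varepsilon$, I would bound the QPE tail, i.e.\ the probability that a branch $\ket{u_i}$ with $i\ge 1$ is measured inside the $E_0$ bin, which contributes $\log_2\varepsilon^{-1}$ bits; and to control this leakage relative to the possibly small weight $\eta=|c_0|^2$, I would add the standard low-overlap overhead of $\log_2\eta^{-1/2}$ bits, exactly as in \cite{Ge19}. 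Summing gives $k=\mathcal{O}(\log_2\varepsilon^{-1}+\log_2\Delta^{-1}+\log_2\eta^{-1/2})$, and since the deepest gate in the ladder is the controlled $U^{2^{k-1}}$, the coherent runtime of one phase-estimation run is $2^{k+1}\pi=\mathcal{O}(\varepsilon^{-1}\Delta^{-1}\eta^{-1/2})$.

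For step (ii) I would note that, with $k$ chosen as above, a single measurement returns the $E_0$ bin, and hence the prepared ground state, with probability $\Omega(\eta)$. Invoking fixed-point amplitude amplification to boost the success probability to a constant uses only $\mathcal{O}(\eta^{-1/2})$ coherent calls to the phase-estimation primitive and its inverse. Multiplying the per-run gate count $C_{gate}\cdot 2^{k+1}\pi=\mathcal{O}(C_{gate}\,\varepsilon^{-1}\Delta^{-1}\eta^{-1/2})$ by the number of preparation rounds and tracking the residual inverse-success-probability factor then collects the $\eta,\Delta,\varepsilon$ dependence into the total gate complexity $\mathcal{O}(C_{gate}\,\eta^{-2}\Delta^{-1}\varepsilon^{-1})$ claimed in \autoref{prop:gs_scale_QPE}.

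The main obstacle, and the only genuinely technical step, is the tail bound in (i): the Dirichlet kernel of canonical QPE decays only polynomially in the distance (in bins) from the true phase, so the spill-over of the excited branches into the $E_0$ bin decays slowly. The crux is to show that with $2^k=\mathcal{O}(\varepsilon^{-1}\Delta^{-1}\eta^{-1/2})$ bins the aggregated leakage $\sum_{i\ge 1}|c_i|^2\,\Pr(\text{measure }E_0\text{-bin}\mid E_i)$ is small compared with $\eta$, so that the conditional state after post-selection has fidelity $1-\mathcal{O}(\varepsilon)$ with $\ket{u_0}$; I would import this estimate directly from the phase-estimation analysis of \cite{Ge19,nielsen2002quantum} rather than re-deriving the kernel bounds. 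The remaining counting steps, namely the depth of the controlled-$U$ ladder and the query count of fixed-point amplification, are routine.
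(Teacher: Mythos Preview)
Your proposal is correct and follows essentially the same route the paper relies on: the paper does not give a self-contained proof of \autoref{prop:gs_scale_QPE} but simply invokes the standard canonical-QPE analysis from \cite{Ge19,nielsen2002quantum}, records the runtime bound $t_{PE}^{\mathrm{prep}}=\pi/(2\Delta\varepsilon_{PE})$, and moves on. Your decomposition of the three contributions to $k$ (gap resolution, fidelity tail, low-overlap overhead), the identification of the Dirichlet-kernel tail as the only nontrivial estimate, and the use of fixed-point amplification for the $\mathcal{O}(\eta^{-1/2})$ call count are exactly the standard ingredients, and your plan to import the tail bound from \cite{Ge19} rather than rederive it matches the paper's treatment.

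One small remark: you correctly compute the per-run cost $\mathcal{O}(C_{gate}\,\varepsilon^{-1}\Delta^{-1}\eta^{-1/2})$ and the $\mathcal{O}(\eta^{-1/2})$ amplification rounds, whose product is $\mathcal{O}(C_{gate}\,\eta^{-1}\Delta^{-1}\varepsilon^{-1})$; the extra $\eta^{-1}$ needed to reach the stated $\eta^{-2}$ total is not actually derived in the paper either, and your phrase ``tracking the residual inverse-success-probability factor'' papers over the same gap. This is an inconsistency in the lemma as stated rather than a flaw in your argument.
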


The  coherent runtime for each phase estimation is lower bounded by 
\begin{equation}
t_{PE}^{prep} = \frac{\pi}{2   \Delta \varepsilon_{PE} }
\label{eq:t_PE_prep}
\end{equation}
 
The error of estimation of an observable $O$ consists of the following components:
$$
\varepsilon = \varepsilon_{PE} + \varepsilon_{HS} +  \varepsilon_{CS} +  \varepsilon_{observ}.
$$

The error of estimating observables using $N_s$ samples is given by
\begin{equation}
    \varepsilon_{observ} = \frac{C_{observ} }{\sqrt{ N_s }} 
\end{equation}
Suppose we use the importance sampling to estimate the observable $O = \sum_l o_l P_l$, the measurement overhead is $ C_{observ} = \|O\|_1 $, and we may use other methods to reduce $C_{observ}$, such as Pauli grouping or classical shadow methods to reduce the cost.
In our numerical simulation, we use an { optimistic estimation by only considering the dominant cost from the last operations $C-U^{2^{k-1}}$ only, and neglect the cost by controlled operations.
The gate cost for QFT is neglected as well, which scales as $\mathcal{O}(k)$.}


\subsection{Hamiltonian simulation by Trotterisation}

The overall circuit complexity for $2k$ order achives minima when $\varepsilon_{PE} = \varepsilon_{HS} = \varepsilon/2$
\begin{equation}
\frac{\left (\pi \cdot 5^{k-1} L \Lambda  \eta^{-1} \right )^{ 1 + \frac{1}{2k}}}{((2 k + 1) !)^{\frac{1}{2k}} \varepsilon_{PE}^{1+\frac{1}{2k}}\varepsilon_{HS}^{\frac{1}{2k}}}  
\end{equation}
    
Its minimum is obtained at $$\varepsilon_{PE} = \frac{2k+1}{2(k+1)} \varepsilon,~ \varepsilon_{HS} = \frac{1}{2(k+1)} \varepsilon$$   

Ground state energy estimation with phase estimation + higher-order Trotter:

\textbf{Gate count for eigenenergy estimation.}
\begin{enumerate}
    \item Get the runtime $t_{PE}/2$ with $\varepsilon_{PE} = \varepsilon/2$ in \autoref{eq:t_PE_En}.
    \item Determine the number of segment $\nu$ using \autoref{prop:bound_2k_stTrotter}. 
    \item \textsc{cnot} gates: $2\cdot 5^{k-1}\nu \eta^{-1/2} (2\operatorname{wt} (H)-L + 2)$.
    \item Single-qubit $Z$-axis Pauli rotation gate: $4 \cdot 5^{k-1} \nu \eta^{-1/2} L$.
\end{enumerate}


\textbf{Gate count for eigenstate property estimation.}
\begin{enumerate}
    \item Get the measurement overhead $C_{observ}$, runtime $t_{PE}/2$ as a function of $\varepsilon_{observ}$ and $\varepsilon_{PE}$, respectively. 
    \item Determine the number of segment $\nu$ using \autoref{prop:bound_2k_stTrotter} as a function of $\varepsilon_{HS}$. An approximation of $\nu$ is given by \autoref{eq:nu_approxi_2k_Trotter}.
    \item \textsc{cnot} gates: $2 \cdot 5^{k-1}\nu \eta^{-1/2} C_{observ} \varepsilon_{observ}^{-2} (2\operatorname{wt} (H)-L + 2)$.
    \item Single-qubit $Z$-axis Pauli rotation gate: $4\cdot 5^{k-1}\nu C_{observ} \varepsilon_{observ}^{-2} \eta^{-1/2} L$.
    \item Get the gate count by optimising over the distribution of $\varepsilon$.
\end{enumerate}


\subsection{Hamiltonian simulation by qubitised quantum walk}
The phase estimation combining the qubitisation methods has been discussed in \cite{babbush2018encoding}, which is compared to other algorithms. In the following, we will review the method before analysing the resource costs.
The key idea of qubitised quantum walk is that the spectrum of $H$ can be obtained by performing phase estimation on the Szegedy quantum walk operator, defined as 
\begin{equation}
    \mathcal{W} := (2\ket{G}\bra{G} \otimes I - I) \cdot \selectH
\end{equation}
with $\ket{G} = \prep \ket{\bar 0}$. 
The spectrum has a relation
$
    \operatorname{spectrum}(H) = \lambda \cos (\arg[\operatorname{spectrum}(\mathcal{W})])
 $
with $\arg(e^{i\phi}) = \phi$.
Their results suggest that we can estimate the phase to a number of bits given by
\begin{equation}
    k = \left\lceil \log \left( \frac{ \sqrt{2} \pi \lambda}{2 \varepsilon_{PE}} \right) \right \rceil 
\end{equation}
with $k$ extra ancillary qubits.
Here, we further assume a small error of gate synthesis in $\prep$ and $QFT$.
Using phase estimation, the query number is
\begin{equation}
d := 2^k \leq \frac{\sqrt{2} \pi \lambda}{2 \varepsilon_{PE}} + 1.
\end{equation}

The state preparation error $\varepsilon_{\prep}$ for a single application of $\mathcal{W}$ is 
\begin{equation}
 \varepsilon_{tot,\prep} \leq \| e^{i \arccos(H/\lambda)} - e^{i \arccos(\tilde H/\lambda)}\|
\end{equation}
It is related to the amplitude encoding error $\varepsilon_{AE}$   by
\begin{equation}
 \varepsilon_{\prep} \leq \frac{L \varepsilon_{AE}}{\lambda} \left( 1- \left( \frac{\|H\| + L \varepsilon_{AE}}{\lambda}\right)^2 \right)^{-1/2}
\end{equation}
Suppose that we require the preparation error to be $\varepsilon_{\prep}$. One can choose to set
\begin{equation}
\label{eq:epsilon_AE}
\varepsilon_{AE} = \frac{ \varepsilon_{\prep}}{ (1+\varepsilon_{\prep}^2)L}  
  \end{equation}
 assuming that  $\Omega (\frac{\|H\|}{\lambda}) = 0$.  
The preparation error is set to be 
\begin{equation}
    \varepsilon_{\prep} \leq  \frac{\sqrt{2}}{{2}\lambda } \frac{\varepsilon_{PE}}{  2^{k} } = \frac{\sqrt{2} \varepsilon_{PE}}{ \lambda 2^{k+1} }
\end{equation}
and hence for a single block
\begin{equation}
    \varepsilon_{AE} = \frac{\sqrt{2} \varepsilon}{4 L \lambda d}.
\end{equation}
Note that $2$ \textsc{prepare} is used in one block.
By \autoref{eq:epsilon_AE}, we can determine the amplitude encoding error $\varepsilon_{AE}$ as
$ 
 \varepsilon_{AE}  = \frac{\varepsilon_{PE}^2}{ \pi L \lambda} 
$.
Again, the cost from the QFT is ignored, which scales as $\mathcal{O} (k \log k)$.

The overall gate complexity of the eigenenergy estimation is 
$
\mathcal{O} (\frac{\lambda L}{ \varepsilon })
 $. The total gate count can be estimated by using \autoref{observ:gs_amp_encode},   \autoref{observ:gs_reflect}, and the Hamiltonian dependent $\selectH$, given by  \autoref{observ:gs_select_Ising}.

Each block requires: 1 controlled $\selectH$, 2 $\prep$ and 1 Reflection on $n_L$ qubits, which has the gate count
\begin{equation}
    ( n_L + \max( k, n_L+2n_{AE}+1 ), \SC + 2\PC + 6(n_L - 2), \ST + 2\PT + (8 n_L -17), 0 ).
\end{equation}
and an additional $k$ repetition of controlled reflection, each block (2 preparation) has the cost:
\begin{equation}
    (0, 2\PC + 6(n_L - 1), 2\PT + (8 n_L - 9),0)
\end{equation}

In total $d$ queries and $\eta^{-1} \Delta ^{-1}$ repetitions are required, which results in
\begin{equation}
\begin{aligned}
( n_L + \max( k, ( n_L+2n_{AE}+1 ), & \eta^{-1} \Delta^{-1}\left(d(\SC + 2\PC+6(n_L - 2)) + k(2\PC + 6(n_L - 1))\right),  \\
&\eta^{-1} \Delta^{-1} \left( d(\ST + 2\PT + (8 n_L -17)) + k(2\PT + (8 n_L - 9)) \right), 0 ).  
\end{aligned}
\end{equation}
 

\section{Investigation on the resource cost}
\label{sec:numerics_SM}

\subsection{Numerical setting}

In this section, we first present the details of the numerical simulation. We will also present additional resource estimation results. 

To estimate the gate costs of the RLCU algorithm, in this work we set the normalisation factor $\mu=2$ to ensure that the sample complexity of the RLCU algorithm is similar to other quantum algorithms. 
The  operations involved in the algorithms  are \textsc{cnot} gates and single-qubit Pauli rotation gates $R_z (\theta
)$, which is further decomposed into T gates. The circuit compilation overhead is detailed in Supplementary \autoref{sec:gate_gs_BE}.

\textbf{Energy gap fitting}
For the Heisenberg type of Hamiltonian with the additional field on the boundary in \autoref{eq:Hamil_XXZ} in the main text, when $h_x = 0$, it has a constant gap $\Delta(c) = 4(c-1)$ when the system size is infinite.  The energy degeneracy is $n+1$. 
For example, for $c = 2$, it has a constant energy gap $\Delta = 4$. 

When the external field $h_x$ increases, more excited states will emerge. However, we find by numerical fitting that for $n \leq 100$, the energy gap is not very small. 
We find that the energy gap can be well fitted by a polynomial function. In \autoref{fig:gap_fitting_SI}, we show the fitting by $\Delta = b \cdot n^a$ with $a = -0.50$, which agrees quite well with the actual gap at small system sizes. 
In contrast, the energy gap fitting by an exponential function $\Delta = b\exp(-{n^a})$ does not agree well. \revise{The gap dependence has been considered in our resource estimation.}

\revise{\textbf{Remark on the initial state}.
As the central objective is to reduce the depth, in resource estimation, we mainly focus on the maximum gate count in a single run, whose scaling is logarithmic in initial state overlap $\eta$. In other words, this maximum gate count that needs to be implemented coherently is nearly independent of $\eta$. In contrast, the standard phase-estimation procedure will have a worse gate complexity dependent on $\eta$, $\mathcal{O}( \eta^{-1} \epsilon^{-1})$.
To make a fair comparison with other methods, we set the initial state overlap to be a constant value as similarly used in \cite{ding2023even,wan2021randomized}. For instance,  $\eta$ is set to be 1  for quantum chemistry example FeMoco in \cite{wan2021randomized}, and  $\eta = 0.8$ for Ising models in \cite{ding2023even}.  In \cite{von2021quantum} they used DMRG to find the ground state with $\eta$ around $0.9$. In \cite{lin2021heisenberg} they have used a Hartree-Fock state for a 8-site Fermi-Hubbard model with $\eta$ around $0.4$. 
On the other hand, we have highlighted in the paper that this will mostly affect the sampling numbers, the sampling complexity has a similar dependence (in Theorems 5 and 7) to other selected randomised works whose practical performance is good  (i.e. with few actual gate counts). We also note that the sampling overhead due to randomisation is included in all the plots in the resource estimation to ensure that this method is compared with other deterministic algorithms like QSP at the same level.}

\revise{ 
In a way, initial state preparation can be improved by using methods like adiabatic evolution and dissipative method as well as various physics-inspired or MPS-based methods. 
The paper suggested by the referee is very helpful in this context. The key is to employ MPS techniques for state preparation.
This work focuses on how to obtain suitable initial states and is fully compatible with our approach, i.e., one can use the MPS-based strategy to prepare an appropriate initial state, and then apply the method developed in our work to estimate eigenenergies and eigenstate properties with high precision.  }


The requirements for the gate number with the selected advanced algorithms are estimated. Since the central objective is for the application in the early FTQC or NISQ era,  we mainly focus on the circuit depth in a single-run experiment. Therefore,  the amplitude amplification is not considered in the algorithms, which can deterministically prepare the state closer to the true ground state yet at the cost of a deeper circuit. 
When we consider a real physical model, such as quantum chemistry problems, the coefficient of the Hamiltonian is constructed by calculating the integral and represents the feature of the quantum system. Due to finite precision, there will be an amplitude encoding error when we perform the $\prep$ operation. 
To ensure that the amplitude encoding error in the block encoding procedure is less than a threshold, we require more qubits to encode the coefficient.
However, for the toy models, the absolute value of the coefficient may not be essentially relevant for the actual physics. For instance,  we can manually set the interaction strength when we study the phase transition.
In this work, we include the amplitude encoding error in our analysis when aiming for a realistic application. That is why we require more qubits for the algorithm involving amplitude amplification.

Note that $R_{z}(\theta)$ can be virtually implemented with real physical devices. For superconducting devices, $R_z(\theta)$ indeed does not have to be implemented physically, but rather it can be implemented by changing the phase of the reference frame defined by the multi-level rotating frame. That is, $R_z(\theta)$ is a virtual gate, and therefore, there is no physical error in implementing $R_z(\theta)$.

\begin{figure*}[t!]
\centering
\includegraphics[width=0.78 \textwidth]{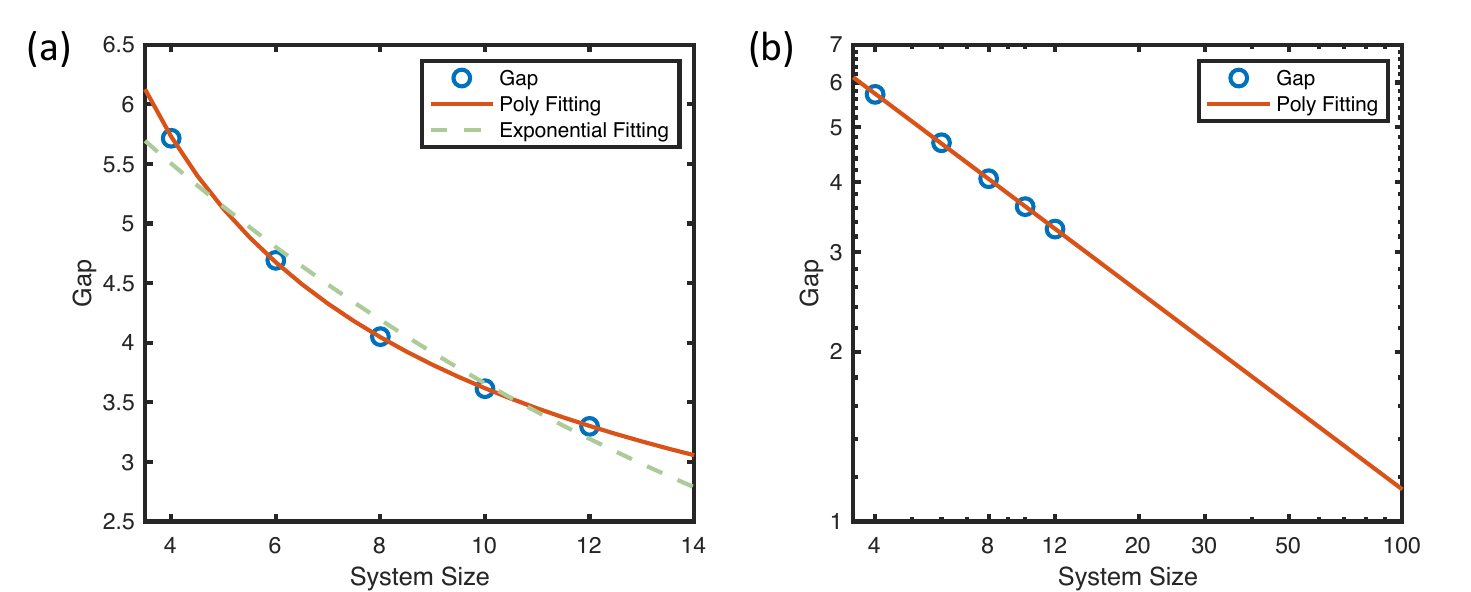}
\caption{ \textbf{Gap dependence for the antiferromagnetic Heisenberg model.} (a) Exact gap dependence with increasing system sizes. The polynomial fitting for the gap works better than the exponential fitting. (b) Gap dependence up to $100$ qubits by fitting.
}
\label{fig:gap_fitting_SI}
\end{figure*}

\begin{figure*}[ht!]
\centering
\includegraphics[width=1\textwidth]{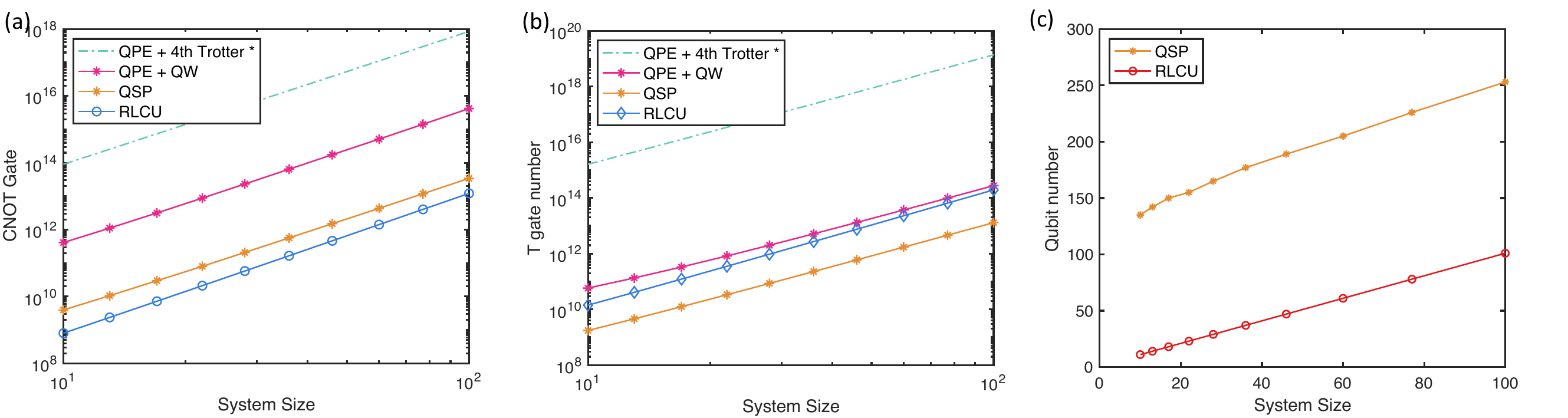}
\caption{ \textbf{
Resource estimation (the number of \textsc{cnot} gates and T gates) for the eigenstate property estimation for the 2-local Hamiltonian.} The Hamiltonian is
$ 
H = \sum_{i,j} X_i X_j + \sum_i Z_i
$, in which case the commutator relation between different Hamiltonian summands is ignored. Here the $ 4$th-order random Trotter formula is used as it performs the best over other orders, which is marked by an asterisk alongside Trotter in the legend.
We compare the gate count involved in different methods. Note that the commutator relation of the Hamiltonian is not taken into account which results in a higher gate count for our method.
}
\label{fig:SysSize_2local}
\end{figure*}

\begin{figure}[t!]
\centering
\includegraphics[width=0.8\linewidth]{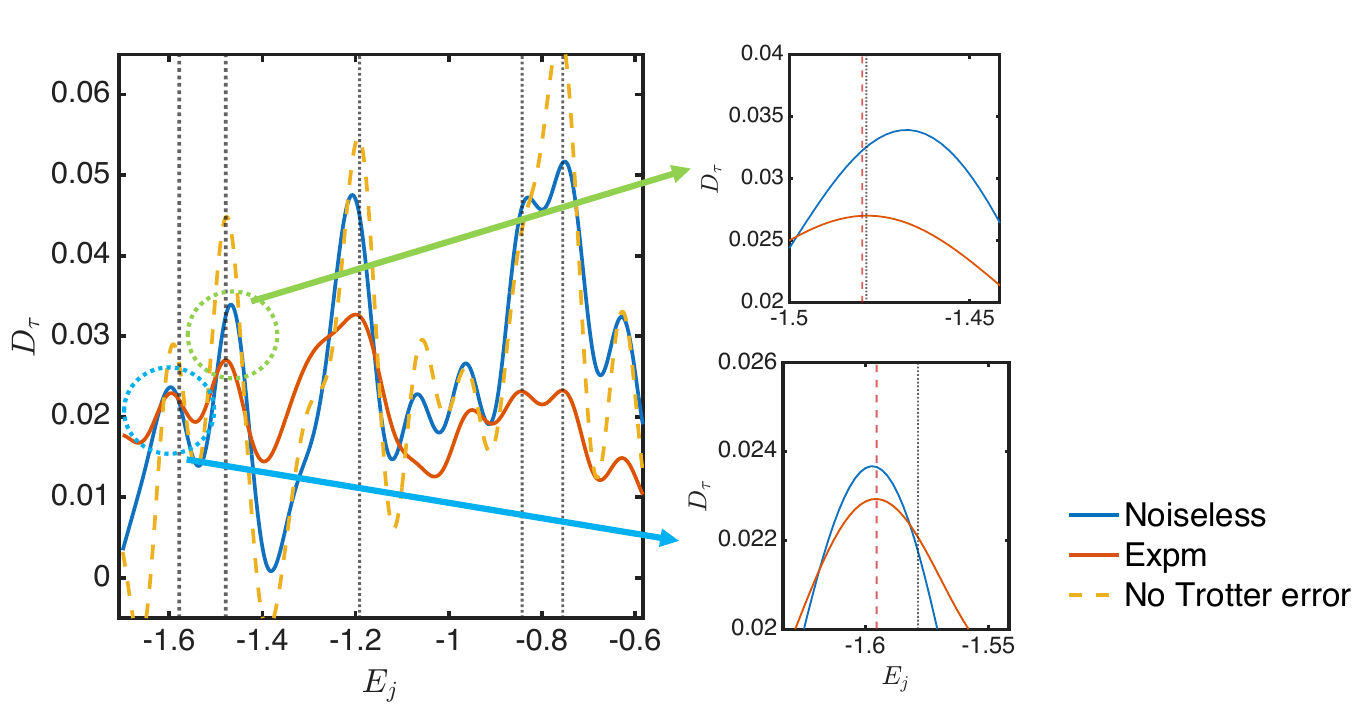}
\caption{\textbf{
Searching ground state and first excited state energies of Heisenberg Hamiltonians on IBM quantum devices.}
(a) We consider a 12-qubit normalised anisotropic Heisenberg Hamiltonian (see \autoref{eq:Hamil_XXZ} in the main text), without any external field. All parameter settings are consistent with those in the main text. This represents another circuit instance, using 1024 measurement shots. The figure on the right provides a zoomed-in view of a narrower energy range, highlighting the estimated ground-state and first excited-state energies shown in the left panel.
}
\label{fig:IBM_SI_2}
\end{figure}

\subsection{Additional resource estimates}

The results for T gates are shown in \autoref{fig:Heis_NC_Eps}(a), (b), and (c), corresponding to the tasks with \textsc{cnot} results in \autoref{fig:SysSize_NC}. \autoref{fig:Heis_NC_Eps}(d) presents the results when energy gap dependence is taken into account.
In the resource estimates, we choose to use the compensate the Trotter error up to $2k$th-order. As a result this has a polynomial scaling with $\mathcal{O}(\varepsilon^{-\frac{1}{4k+1}})$.  This is easy for the sampling process. Even with this conservative estimation, our method outperforms others as shown in \autoref{fig:Heis_NC_Eps}(a). The actual performance may be better in practice.

For general Hamiltonians without considering commutation relations, our method may require more T gates than QSP because our method requires $\Rz$ gates, which have a large overhead when they are synthesised into T gates. 
To verify this point, we consider the two-local Hamiltonian.
The results are shown in \autoref{fig:SysSize_2local}.


\subsection{ Implementation on IBM quantum cloud}

In the main text, we consider normalised anisotropic Hamiltonians with parameters $J_x = 1.05$, $J_y = 1$, $J_z = 0.7$ and $h_z  = 0.2$ in \autoref{fig:IBM}. The settings are $\tau = 2.5$ and $x_{c} = 2$. Another experiment instance with fewer samples ($N_s = 1024$ for each circuit run) is shown in \autoref{fig:IBM_SI_2}. 

\revise{
The experimental results are aligned well with the ideal results.
In the following, we give a few comments on the observed noise resilience in practice. 
First of all, the observed noise resilience can be understood from the sampling structure of our algorithm. Circuits with longer evolution times are deeper and thus more affected by noise, but they are sampled much less frequently because the designed probability distribution. To be concrete, as in Fig. 1(a), both the time $t_i$ and the circuit instance $\vec{i}$ according to their probability distribution in Eq. (1) and Eq. (13). That is, the time is sampled according to non-uniform distribution $\Pr(t_i)$ which quickly decay with time (which ensure the time complexity of our algorithm is small). Therefore, circuit instances with long time $t_i$ (hence more noisy) contribute smaller on the final eigenstate property estimation than the short time ones, i.e, the impact of noise which is more serious in deep circuits is suppressed.}

\revise{
In addition, Hamiltonian simulation may exhibit a certain level of intrinsic noise resilience. As discussed in recent studies, random errors in quantum circuits tend to show concentration behaviour, which brings smaller errors than symmetric errors, meaning that their cumulative effect averages out rather than accumulates coherently. See, for example, the numerical tests in Fig. 1 in \cite{cai2023stochastic}. This finding implies that noise may be suppressed with our circuit design.  In the intermediate-scale simulation regime, it is interesting to explore whether we can have some error concentration effect so that in this type of quantum algorithm the performance is good. }

\end{document}